\newtheorem{theorem}{Theorem}
\newtheorem{lemma}{Lemma}
\newtheorem{remark}{Remark}
\newcommand{\bsup}[2]{ \raisebox{#1 ex}{${\mbox{Sup }}\atop{#2}$}\ }
\newcommand{\ve}{\vspace{1ex}}
\newcommand{\vsp}[1]{\vspace{#1 ex}}
\begin{document}

\title[Nonlinear waves in Newton's cradle]{Nonlinear waves in Newton's cradle and the discrete $p$-Schr\"odinger equation}
\author{Guillaume James}
\address{Laboratoire Jean Kuntzmann, Universit\'e de Grenoble and CNRS,
BP 53 \\ 
38041 Grenoble Cedex 9, France.}
\email{Guillaume.James@imag.fr}
\date{\today}
\keywords{Newton's cradle, Hamiltonian lattice, periodic travelling waves, discrete breathers, modulation equation, discrete $p$-Laplacian, Hertzian contact, fully nonlinear dispersion.}
\subjclass[2000]{37G20, 37K60, 70F45, 70K50, 70K70, 70K75, 74J30}

\maketitle

\begin{center}
Laboratoire Jean Kuntzmann,\\
Universit\'e de Grenoble and CNRS,\\
BP 53, 38041 Grenoble Cedex 9, France.
\end{center}

\begin{abstract}
We study nonlinear waves in {\em Newton's cradle}, a 
classical mechanical system consisting of a chain of beads attached to
linear pendula and interacting nonlinearly via Hertz's contact forces. 
We formally derive a spatially discrete modulation equation, for small amplitude nonlinear waves
consisting of slow modulations of time-periodic linear oscillations.
The fully-nonlinear and unilateral interactions between beads yield a 
nonstandard modulation equation that we call the  
{\em discrete $p$-Schr\"odinger} (DpS) equation. It
consists of a spatial discretization of a generalized Schr\"odinger
equation with $p$-Laplacian, with fractional $p>2$ depending on the
exponent of Hertz's contact force. 
We show that the DpS equation admits explicit periodic
travelling wave solutions, and numerically find a plethora of
standing wave solutions given by the orbits of a discrete map,
in particular spatially localized breather solutions.
Using a modified Lyapunov-Schmidt technique, we prove
the existence of exact periodic travelling waves in the 
chain of beads, close to the
small amplitude modulated waves given by the DpS equation.
Using numerical simulations, we show 
that the DpS equation captures several other important features of the dynamics
in the weakly nonlinear regime, namely
modulational instabilities, the existence of static and travelling breathers,
and repulsive or attractive interactions of these localized structures.
\end{abstract}


\section{\label{intro}Introduction and main results}

This paper concerns the mathematical analysis and numerical simulation of
nonlinear waves in
the {\em Newton's cradle}, a classical 
mechanical system consisting
of a chain of beads suspended from a bar by inelastic strings (see figure \ref{boules}).
All beads are identical and behave like a linear pendula in the absence of contact
with nearest neighbours, but mechanical constraints between touching
beads introduce geometric nonlinearities.

In the last decades, this system has been considered as a
reference problem to test multiple impacts laws, 
aimed at evaluating e.g. the post-impact velocities of all beads
after a bead is released at one end of the cradle
\cite{ceanga,acary,liu1,liu2}
(see also \cite{hutzler} for additional references).
This problem is very delicate, because the collisional dynamics
involves nonlinear elastic waves that propagate along the granular chain.

One of the important factors that influence wave propagation 
and multiple collisions is the nature of elastic interactions between beads
(see \cite{hinch,ma,sekimoto} and references therein).
Hertz's theory \cite{ll,johnsonbook} allows to compute the 
repulsive force between two identical and initially tangent spherical beads 
that are compressed and slightly flatten. When  
the distance between their centers decreases by $\delta \approx 0$ (see figure \ref{boules}), 
the repulsive force $f$ is $f(\delta )=k\, \delta^{\alpha}$ at leading order in $\delta$,
where $k$ depends on the ball radius and material properties and
$\alpha = 3/2$. This result remains valid for much more general geometries
\cite{spence,johnsonbook}, but $\alpha$ can be larger for irregular contacts
($\alpha=2$ in the presence of conical asperities \cite{fu}) or smaller for surfaces that
squeeze more easily ($\alpha=1$ for solid cylinders in contact at two ends).
Hertz's type contact forces have several properties that make the analysis of wave
propagation particularly difficult. Firstly, they consist of unilateral
constraints i.e. no force is present when beads are not in contact. 
Moreover, they are fully nonlinear for $\alpha >1$, and in that case classical linear
wave theory becomes useless. In addition $f^{\prime\prime}(0)$ is not defined
for $\alpha <2$, therefore the use of perturbative methods is
more delicate. 
Developping analytical tools to overcome these obstacles
is important, because the latter are at the heart of wave propagation in granular media.

\begin{figure}[h]
\begin{center}
\includegraphics[scale=0.15]{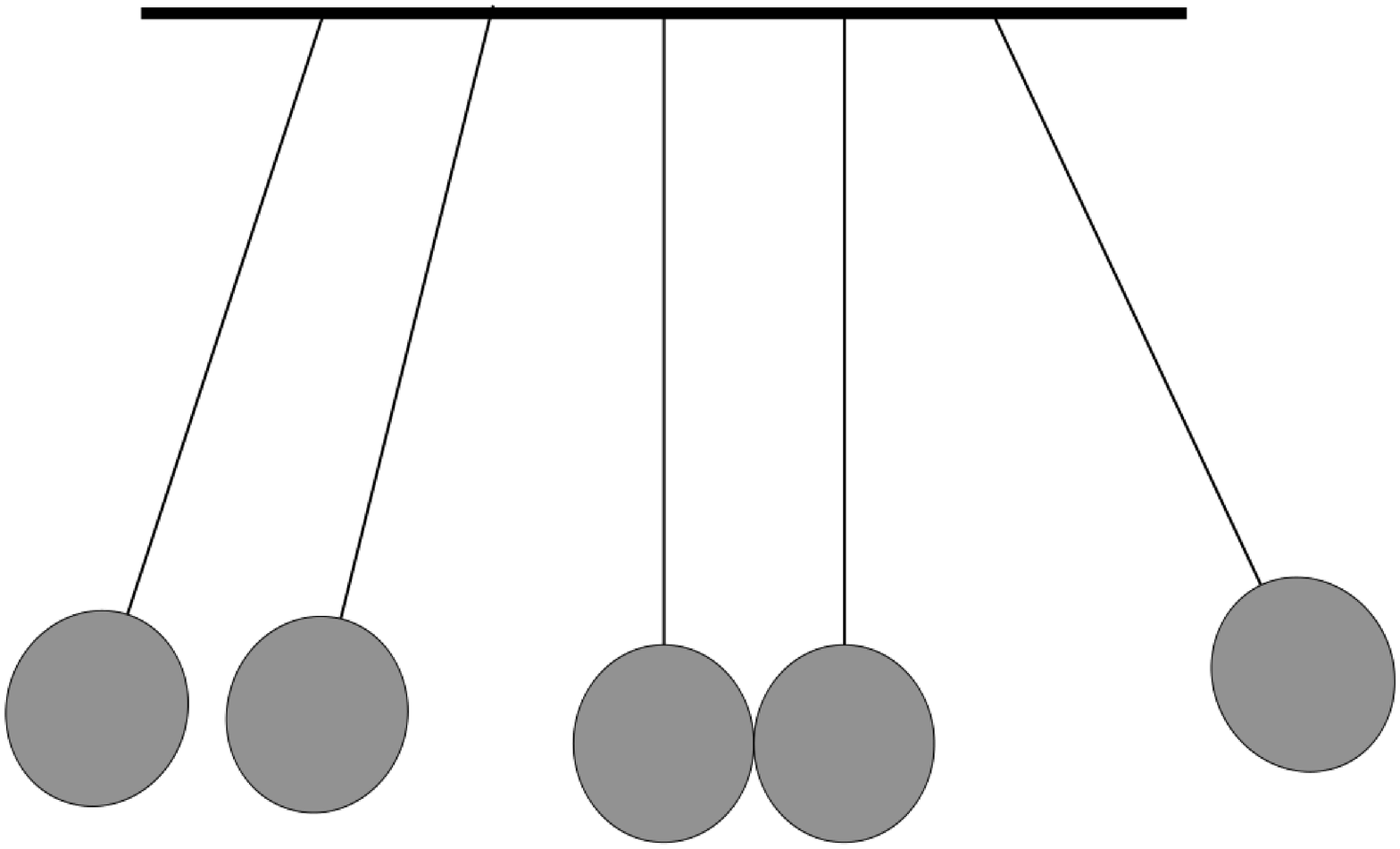}
\includegraphics[scale=0.1]{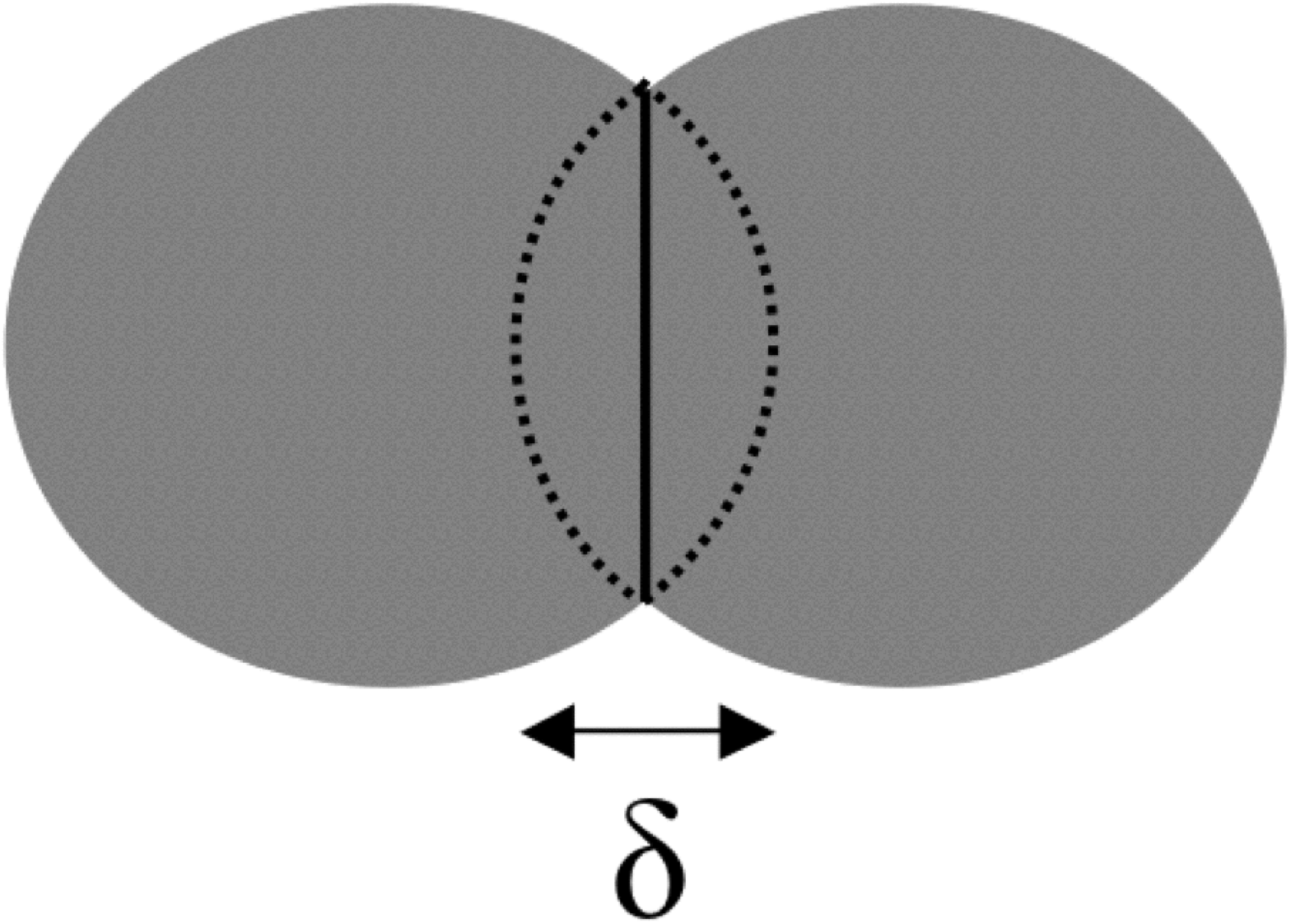}
\end{center}
\caption{\label{boules} 
Left~: Newton's cradle. Right~:
schematic representation of two compressed beads.}
\end{figure}

\ve

A simplified model for Newton's cradle
reads in dimensionless form \cite{hutzler} 
\begin{equation}
\label{nc}
\frac{d^2 x_{n}}{dt^2}+ x_{n} = 
V^\prime(x_{n+1}-x_n)-V^\prime(x_{n}-x_{n-1}),
\ \ \
n\in \mathbb{Z},
\end{equation}
where $x_{n}(t)\in \mathbb{R}$ is
the horizontal displacement of the $n$th bead from the ground-state,
i.e. the equilibrium position
at which each pendulum is vertical, beads are in contact at a single point
and uncompressed. 
The interaction potential $V$ takes the following form as $x \rightarrow 0$
\begin{equation}
\label{vhertz}
V(x)=V_0(x) + H(-x)\, W(x), \ \ \
V_0(x)=\frac{1}{1+\alpha}\, |x|^{1+\alpha}\, H(-x) ,
 \ \ \
W(x)=o(|x|^{1+\alpha}),
\end{equation}
where $H$ denotes the Heaviside function vanishing on $\mathbb{R}^-$ and
equal to unity on $\mathbb{R}^+$, $o$~is the usual
Landau's symbol and $\alpha >1$ a fixed constant.
The potential $V_0$ corresponds to a generalized Hertz contact force
and $W$ incorporates higher-order corrections. 
System (\ref{vhertz}) is Hamiltonian with total energy
\begin{equation}
\label{ham}
{\mathcal H}=
\sum_{n\in \mathbb{Z}}{\frac{1}{2}\, (\frac{dx_{n}}{dt})^2+\frac{1}{2}\, x_n^2 +V(x_{n+1}-x_n)}
.
\end{equation}
Equation (\ref{nc}) considered on the infinite lattice $\mathbb{Z}$ applies
in principle to large ensembles of beads (so that boundary effects
can be neglected for waves propagating in their core), but this model is
also suitable to describe strongly localized standing waves
in relatively small systems.  

\ve
 
The simplest model to encompass 
difficulties inherent to Hertz's contact nonlinearities 
consists of a chain
of identical spherical beads without local oscillators, in contact with their neighbours at a single point
when the chain is at rest. Intensive research has been carried out on
the analytical description of compression pulses in this system. 
Nesterenko analyzed the problem using a formal continuum limit \cite{neste1,neste2},
and found an approximate pulse solution consisting of
a compression solitary wave with compact support
(see also \cite{ap} for a slightly different continuum limit,
and \cite{porter} for an extension to dimer chains). In addition,
Chatterjee \cite{chat} has subsequently
obtained an improved approximation revealing a double-exponential
decay of the wave profile. As shown by MacKay \cite{mackay}
(see also \cite{ji}), exact solitary waves exist since a theorem of Friesecke and Wattis \cite{friesecke} proving their
existence readily applies to the chain of beads with Hertz contact forces.
In addition, English and Pego \cite{english} have shown that
these solitary waves have a doubly-exponential decay in the case of
Hertz's force. More details on the dynamics of these solitary waves can be
found in the review \cite{sen}. 
In addition, much more properties of solitary waves are known
when an external load $f_0$ is applied at both ends of the chain and
all beads undergo a small compression $\delta_0$.
The dynamics around this new equilibrium state reduces to the one of 
the classical Fermi-Pasta-Ulam (FPU) lattice \cite{neste2,sen},
in the case of which solitary waves with exponential decay
\cite{friesecke,mackay,pego,smets,iooss,ioossj,pankovbook,schw,herrmann}
and two-soliton solutions \cite{hoffman} are known to exist,
small amplitude solitary waves are well described by the KdV equation \cite{pego,iooss,kal,sw,bambusi} and are stable \cite{pego2,pego3,pego4,hoffman3,hoffman,hoffman2}.

\ve

This mapping between the chain of compressed beads and the
FPU model is interesting, because the latter
is known to display a rich dynamical behaviour and
sustain many other kinds of nonlinear waves
(see \cite{cam,gal} for recent reviews on this topic).
Among the most fundamental excitations existing in FPU chains,
periodic travelling waves \cite{filip,iooss,dreyer,pankovbook,herrmann}
are particularly important to
understand energy propagation and dispersive shocks \cite{dreyer2}.
However, no solutions of this type are known for the 
the chain of beads in the absence of external load.
The above mentioned periodic travelling waves are degenerate
when $f_0 \rightarrow 0$, because the sound velocity
(i.e. the maximal velocity of linear waves) vanishes
as $f_0^{1/6}$. For this
reason the uncompressed chain of beads is commonly
denoted as a ``sonic vacuum"  \cite{neste2}.

\ve

In contrast with the above statement, we prove in this paper that
nonlinear periodic travelling waves exist in Newton's cradle, 
due to the interplay between the on-site oscillators and 
fully-nonlinear contact interactions among beads. 

\ve

\begin{theorem}
\label{existthm}
Consider a potential $V \in C^{2}(\mathbb{R})$ taking the form
(\ref{vhertz}) with $\alpha  >1$ and
$W^{\prime\prime}(x)=o(|x|^{\alpha -1})$ as
$x \rightarrow 0$.
There exists $a_0 >0$ such that for all $a \in (0, a_0)$ and
$q \in (-\pi ,\pi ]$, system (\ref{nc})-(\ref{vhertz}) 
admits a periodic travelling wave solution
\begin{equation}
\label{xleadingorder}
x_n (t)= a\, \sin{(q\, n -\omega_a t)}+a^\alpha v_a (q\, n -\omega_a t),
\end{equation}
with amplitude $a$ and wavenumber $q$,
where the wave frequency $\omega_a >1$ satisfies a nonlinear dispersion relation
\begin{equation}
\label{wleadingorder}
\omega_a =1 + \frac{2}{\tau_0}\, a^{\alpha -1}\, |\sin{(\frac{q}{2})}|^{\alpha +1}+o(a^{\alpha -1}),
\end{equation}
\begin{equation}
\label{tau0}
\tau_0 = \frac{\sqrt{\pi}\, (\alpha^2 -1) \Gamma{(\frac{\alpha -1}{2})}}{\alpha 2^\alpha \Gamma{(\frac{\alpha}{2})}},
\end{equation}
and $\Gamma{(x)}=\int_{0}^{+\infty}{e^{-t}\, t^{x-1}\, dt}$ denotes Euler's Gamma function.
The function $v_a$ is $2\pi$-periodic, odd and belongs to
$C^{2 }(\mathbb{R})$. It takes the form
\begin{equation}
\label{va}
v_a (\xi )= K_h P_h [V_0^\prime (\sin{(\xi +q)}-\sin{\xi})-V_0^\prime (\sin{\xi}-\sin{(\xi -q})) ] + R_a(\xi ), 
\end{equation}
where $\| R_a\|_{L^\infty (\mathbb{R})} \rightarrow 0$ as $a\rightarrow 0$ and the linear operators
$P_h$, $K_h$ are defined by
\begin{equation}
\label{ph}
(P_h f )(\xi )=
f(\xi)-(\sin{\xi})\, \frac{2}{\pi} \int_{0}^{\pi}{\sin{(s)}\, f(s)\, ds}
,
\end{equation}
\begin{equation}
\label{kh}
(K_h f)(\xi )=(\sin{\xi})\, \frac{1}{\pi} \int_{0}^{\pi}{(s-\pi )\, \cos{(s)}\, f(s)\, ds}+
\int_{0}^{\xi}{\sin{(\xi -s)}\, f(s)\, ds}.
\end{equation}
\end{theorem}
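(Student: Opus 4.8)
The plan is to recast the problem as a bifurcation from the linear resonance at $\omega=1$ and to resolve it by a Lyapunov--Schmidt reduction adapted to the non-smooth, homogeneous nonlinearity. First I would substitute the travelling-wave ansatz $x_n(t)=u(\xi)$, $\xi=q\,n-\omega t$, into (\ref{nc}), turning it into the advance--delay equation $\omega^2 u''(\xi)+u(\xi)=V'(u(\xi+q)-u(\xi))-V'(u(\xi)-u(\xi-q))$ for $2\pi$-periodic $u$. Since this equation maps odd functions to odd functions, I would seek odd solutions; this uses up the translation invariance and makes the kernel of the unperturbed operator $L=\frac{d^2}{d\xi^2}+1$ one-dimensional, spanned by $\sin\xi$, the even element $\cos\xi$ being excluded. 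Writing $u=a\sin\xi+a^\alpha v$ and $\omega^2=1+a^{\alpha-1}\Omega$, and exploiting that $V_0'$ is homogeneous of degree $\alpha$ so that the common factor $a^\alpha$ can be pulled out of the leading nonlinearity, the equation becomes, with $\epsilon=a^{\alpha-1}$,
\[
Lv=g(\sin+\epsilon v)+\Omega\sin\xi-\epsilon\Omega v''+\rho_\epsilon(\sin+\epsilon v),
\]
where $(g(y))(\xi)=V_0'(y(\xi+q)-y(\xi))-V_0'(y(\xi)-y(\xi-q))$ and $\rho_\epsilon$ collects the contributions of $W$, which are controlled to be $o(1)$ as $a\to0$ by the hypothesis $W''(x)=o(|x|^{\alpha-1})$.

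Next I would perform the Lyapunov--Schmidt splitting associated with $L$, using $P_h$ as the projection onto the range of $L$ (the complement of $\mathrm{span}\{\sin\xi\}$ among odd functions for the inner product $\langle f,g\rangle=\frac{2}{\pi}\int_0^\pi fg$, which explains the $\int_0^\pi$ in (\ref{ph})--(\ref{kh})) and $Q=I-P_h$ the projection onto $\mathrm{span}\{\sin\xi\}$; one checks by direct computation that $K_h$ is the bounded right inverse of $L$ on the range, valued in $C^2$ and orthogonal to $\sin\xi$, so that $LK_h=P_h$ and $K_h$ inverts $L$ modulo the kernel. The crucial structural observation, and the reason the method works despite the low regularity of $V_0'$, is that $\Omega$ enters linearly: projecting with $Q$ yields the bifurcation equation $\langle g(\sin+\epsilon v),\sin\rangle+\Omega+\langle\rho_\epsilon,\sin\rangle=0$, the term $\epsilon\Omega\langle v'',\sin\rangle$ vanishing once $v\perp\sin$, and this solves \emph{explicitly} for $\Omega=\Omega(v,\epsilon)$. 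No implicit function theorem is needed here, which is essential, because for $1<\alpha<2$ the map $v\mapsto g(\sin+\epsilon v)$ is merely $C^1$ with H\"older-continuous, non-Lipschitz derivative, so the reduced equation is not smooth in the usual sense.

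Substituting $\Omega(v,\epsilon)$ into the range equation turns the whole problem into a single fixed-point equation $v=\mathcal{T}_\epsilon(v):=K_hP_h\big[\,g(\sin+\epsilon v)+\rho_\epsilon(\sin+\epsilon v)-\epsilon\,\Omega(v,\epsilon)\,v''\,\big]$. At $\epsilon=0$ this collapses to the explicit profile $v_0=K_hP_h\,g(\sin\xi)$, which is exactly the leading term in (\ref{va}); since $g(\sin\xi)$ is continuous and $K_h$ smooths by two orders, $v_0\in C^2$. I would then show that for $\epsilon=a^{\alpha-1}$ small, $\mathcal{T}_\epsilon$ maps a small ball around $v_0$ in $C^2_{\mathrm{per}}$ into itself and is a contraction there, with contraction factor $O(\epsilon)$: this follows from the homogeneity of $V_0'$, which produced the prefactor $\epsilon$, together with local Lipschitz bounds on $V_0'$ that are uniform on the bounded range of arguments, and from the two-order smoothing of $K_h$ which controls the $\epsilon\Omega v''$ term. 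The unique fixed point $v_a$ satisfies $v_a\to v_0$, i.e. $R_a=v_a-v_0\to0$ in $C^2$ and hence in $L^\infty$, giving (\ref{va}); it is odd and belongs to $C^2$ because every operator involved preserves these properties.

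Finally I would read off the dispersion relation: by continuity $\Omega(v_a,\epsilon)\to\Omega_0=-\frac{2}{\pi}\int_0^\pi g(\sin\xi)\sin\xi\,d\xi$ as $a\to0$, which is positive and consistent with $\omega_a>1$, whence $\omega_a=(1+\epsilon\Omega)^{1/2}=1+\tfrac12 a^{\alpha-1}\Omega_0+o(a^{\alpha-1})$. Using $\sin(\xi+q)-\sin\xi=2\sin(q/2)\cos(\xi+q/2)$ and the homogeneity of $V_0'$, the integral $\Omega_0$ factors as a constant times $|\sin(q/2)|^{\alpha+1}$, and the remaining one-dimensional integral reduces to a Beta integral, producing the Gamma-function constant $\tau_0$ of (\ref{tau0}) and matching (\ref{wleadingorder}). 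The main obstacle throughout is the low regularity of the Hertzian nonlinearity for $\alpha<2$; the device of keeping $\Omega$ linear, so that the bifurcation equation is solved by hand and the remainder is closed by a contraction rather than by the implicit function theorem, is precisely the ``modified'' Lyapunov--Schmidt step that circumvents it, while the explicit evaluation of $\Omega_0$ is the one genuinely computational ingredient.
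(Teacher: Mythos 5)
Your proposal is correct and follows essentially the same route as the paper: the same advance--delay reformulation for odd $2\pi$-periodic profiles, the same scaling $u=a\sin\xi+a^\alpha v$, $\omega^2=1+a^{\alpha-1}\lambda$, the same ``modified'' Lyapunov--Schmidt step in which the bifurcation equation is solved explicitly for the frequency parameter and the remainder is closed by the contraction mapping theorem rather than the implicit function theorem, and the same homogeneity-plus-Wallis-integral computation yielding (\ref{wleadingorder})--(\ref{tau0}). The only inessential differences are that the paper runs the contraction jointly on the pair $(v,\lambda)\in X_h\times\mathbb{R}$ equipped with the $L^\infty$ norm and recovers $C^2$ regularity a posteriori from the fixed-point identity, whereas you substitute $\Omega(v,\epsilon)$ into the range equation and contract on $v$ alone in a $C^2$ ball.
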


These waves display several unusual features.
Firstly, the family of periodic travelling waves
(parametrized by $q$ and $a\approx 0$) is singular
when $\alpha \in (1,2)$ (which is the case for Hertz's contact law), in the sense that
the frequency $\omega_a$ defined by (\ref{wleadingorder})
is not differentiable with respect to $a$ at $a=0$.
This originates from the limited smoothness of
the potential $V_0$ at the origin ($V_0^{(3)}(0^-)$ is not defined).
Moreover, as a result of fully nonlinear interactions between beads,
for all wavenumber $q$ the wave frequency
$\omega_{a}$ converges towards unity in the small amplitude limit 
$a\rightarrow 0$ (i.e. the linear phonon band reduces to a single frequency).

The periodic travelling waves (\ref{xleadingorder})-(\ref{wleadingorder}) 
carry an energy flux in the direction of wave propagation when $q \notin \{0,\pi \}$. 
In the special case
$q=0$, solution (\ref{xleadingorder}) reduces to $x_n(t)=-a\, \sin{(t )}$ and corresponds
to in-phase oscillations of uninteracting linear pendula. For $q=\pi$ 
these solutions correspond to binary oscillations
$x_n (t)= a\, (-1)^{n+1}\, (\sin{( \omega_a t)}+a^{\alpha -1}  v_a (\omega_a t))$.

The proof of theorem \ref{existthm} proceeds in two steps.
Firstly, we formally derive in section \ref{gdnls} an amplitude equation
\begin{equation}
\label{dfnls}
2i \tau_0 \frac{\partial A_n}{\partial \tau}=
(A_{n+1}-A_n)\, |A_{n+1}-A_n |^{\alpha -1} -
(A_{n}-A_{n-1})\, |A_{n}-A_{n-1} |^{\alpha -1},
\end{equation}
which describes small amplitude approximate solutions of (\ref{nc})
taking the form 
\begin{equation}
\label{ansatz}
x_n^{\rm{app}} (t)=\epsilon\, (A_n(\tau )\, e^{it} + \bar{A}_n(\tau) \, e^{-it}) ,
\ \ \
\tau = \epsilon^{\alpha -1} t ,
\end{equation}
where $\epsilon >0$ is a small parameter and $A_n(\tau ) \in \mathbb{C}$.
The approximate solutions determined by
(\ref{dfnls})-(\ref{ansatz})
consist of slow modulations in time of solutions of 
problem (\ref{nc}) linearized at $x_n =0$.
The ansatz (\ref{ansatz}) and amplitude equation (\ref{dfnls})
are consistent with the nonlinear problem (\ref{nc}), in the sense that (\ref{ansatz})
approximately satisfies
(\ref{nc}) up to an $o(\epsilon^\alpha)$ error term.
We show in section \ref{twsol} that (\ref{dfnls}) admits an explicit family of periodic
travelling wave solutions, which provide the harmonic part of
(\ref{xleadingorder}) and the leading order terms of the
dispersion relation (\ref{wleadingorder}).
In a second step, we prove in section \ref{fixedpoint} the existence of exact 
periodic travelling wave solutions
of (\ref{nc}) close to these approximate solutions.
For this purpose we consider an advance-delay differential equation
that determines travelling waves of given period. We solve this problem
by adapting the Lyapunov-Schmidt technique
to the kind of nonlinearity with limited smoothness present in (\ref{nc})
(using the contraction mapping theorem instead of the implicit function
theorem employed in the usual case).

Our approach has
the advantage of giving explicitly the principal part of
travelling waves in the small amplitude limit, since
expressions (\ref{xleadingorder})-(\ref{va}) provide
an approximation of the exact solutions up to
an $o(a^\alpha)$ error term. In fact,  
we will show that far more complex approximations
of order $o(a^\beta)$ can be obtained for all $\beta >0$,
which constitutes an unexpected result given the limited
smoothness of $V$ at the origin.

In section \ref{stable} we complete theorem \ref{existthm} 
by numerical results, in order to study the limits of the
above analysis when one leaves the 
small amplitude regime. Along the same line,
it would be interesting to analyze periodic travelling wave
solutions of (\ref{nc}) using variational methods
\cite{filip,dreyer,pankovbook,herrmann} or degree theory 
\cite{filip}, since this should relax the assumption of
small amplitude waves (with the limitation of
describing travelling
wave profiles and dispersion relations with
less precision than perturbative methods). 

\ve

We call equation (\ref{dfnls}) the time-dependent
{\em discrete $p$-Schr\"odinger} (DpS) equation, because 
it can be seen as a finite-difference spatial discretization
of the generalized Schr\"odinger equation
$i\, {\partial_\tau A}= \Delta_p A$,
where the usual Laplacian operator is replaced by the one-dimensional
$p$-Laplacian $\Delta_p A = \partial_\xi (\, \partial_\xi A\,  |\partial_\xi A |^{p-2}\, )$
and $p=\alpha +1$.
Its fully nonlinear structure and fractional power nonlinearities
originate from
the fully nonlinear interaction forces of (\ref{nc}).
Interestingly, the unilateral character of Hertz's contact forces
is averaged out along the fast oscillations of the pendula,
which results in much simpler nonlinearities at the
level of the DpS equation.
In the present context, the DpS equation substitutes to the classical
discrete nonlinear Schr\"odinger (DNLS) equation
\begin{equation}
\label{dnls}
i  \frac{\partial A_n}{\partial \tau}=
A_{n+1}-2A_n +A_{n-1} \pm A_n\, |A_n |^{2},
\end{equation}
which can be derived for Hamiltonian lattices with 
smooth (at least $C^4$) potential energies and linear dispersion
\cite{kivshar,daumont,morgante,eilbeckj}.

\ve

Equation (\ref{dfnls}) tells in fact much more on the dynamics
of Newton's cradle than what is described in theorem \ref{existthm}.
In section \ref{unstable} we numerically
observe that certain periodic
travelling waves of (\ref{nc}) are unstable through the phenomenon
of modulational instability. Their envelope self-localizes under the
effect of small perturbations, yielding spatially localized and 
time-periodic intermittent compressions of the beads.
These localized oscillations propagate along the chain 
and sometimes remain pinned at some lattice sites. 
These waves correspond to {\em discrete breathers}, 
a class of nonlinear excitations ubiquitous in spatially
discrete systems \cite{flg,maca,james,aubkadel,guill,pankovbook}.
They have been studied both experimentally and theoretically in 
different types of granular chains, in the absence of local oscillators and under precompression 
\cite{boe,theo}. We check numerically that the DpS equation accurately describes the modulational
instability of certain small amplitude periodic travelling waves, 
provided one avoids near-critical cases and higher-harmonic instabilities
(section \ref{unstable}). Moreover, numerical 
computations reveal that breather solutions of the DpS equation 
(determined as homoclinic orbits of a discrete map)
describe with high precision the envelope of small amplitude breathers of (\ref{nc})
(sections \ref{twsol} and \ref{loc}).  
The DpS equation also qualitatively reproduces 
differents kinds of slow interactions of discrete breathers, namely
the fissionning of a localized perturbation into slowly travelling breathers, and  
the merging of two breathers into a single one after a long transcient
(section \ref{loc}). More generally, we numerically find a plethora of
standing wave solutions of the DpS equation
given by the orbits of an area-preserving and reversible map (sections \ref{twsol}), which
suggests the existence of many time-periodic standing wave solutions
of (\ref{nc}) with a huge variety  of spatial behaviours.

\ve

The paper is organized as follows. Section \ref{gdnlsgen} contains
the formal derivation of the DpS equation and a short discussion
of some travelling wave and standing wave solutions. The
proof of theorem \ref{existthm} is given in section \ref{fixedpoint} and
most numerical computations performed in section \ref {numeric}.  
Lastly, section \ref{conclu} discusses new perspectives and open problems
resulting from the present work.

\section{\label{gdnlsgen}The DpS equation for modulated waves}

The aim of this section is twofold. Section \ref{gdnls} presents a formal derivation of
the DpS equation, starting from Newton's cradle equations (\ref{nc}). One has to
stress that this gives by no means a justification that the DpS equation
approximates the original system for appropriate initial conditions and timescales
(these problems will be studied analytically in section \ref{fptw} for exact travelling waves, and numerically in section
\ref{numeric} for various initial conditions). In section \ref{twsol}, we study particular classes of
solutions of the DpS equation, i.e. periodic travelling waves and standing waves. 
Periodic travelling waves are computed explicitly, a result which will serve as a
basis for the analytical and numerical studies of sections \ref{fixedpoint} and \ref{numeric}.
Standing waves are studied numerically for $\alpha = 3/2$, with a special emphazis on spatially localized standing waves
(these solutions are important to describe modulational instabilities in system (\ref{nc}),
as section \ref{numeric} will show).

\subsection{\label{gdnls}Formal derivation and elementary properties}

In this section we formally derive the DpS equation (\ref{dfnls})
from the original dynamical equations (\ref{nc}), using a
multiscale expansion technique. We look for
solutions of (\ref{nc}) in the form of slow
modulations in time of $2\pi$-periodic functions
($2\pi$ being the period of the linear on-site oscillators). 
More precisely, we set
\begin{equation}
\label{ansatzX}
x_n (t)=X_n (\tau, t),
\end{equation}	
where $\tau = \tilde{\mu} t$ corresponds to a slow time,
$\tilde{\mu} >0$ is a small parameter, and 
\begin{equation}
\label{pbc}
X_n (\tau , t+2\pi) = X_n (\tau , t)
\end{equation}
for all $\tau , t \in \mathbb{R}$. 
We replace (\ref{nc}) by the infinite system of coupled PDE
\begin{equation}
\label{ncX}
[\, (\tilde{\mu} \partial_\tau + \partial_t )^2 + 1 \, ]\, 
X_{n} = 
V^\prime(X_{n+1}-X_n)-V^\prime(X_{n}-X_{n-1}),
\ \ \
n\in \mathbb{Z}, 
\ \ \
(\tau , t ) \in \mathbb{R}^2
\end{equation}
with periodic boundary conditions (\ref{pbc}).
The restriction to the line $\tau =\tilde{\mu} t $
of any solution of (\ref{ncX}) defines
a solution of (\ref{nc}).
Now we look for a family of small amplitude 
solutions of (\ref{ncX}) corresponding to
nearly harmonic oscillations
\begin{equation}
\label{ansatz2}
X_n^{\epsilon} (\tau , t)=\epsilon\, (A_n^\epsilon (\tau )\, e^{it} + \bar{A}_n^\epsilon (\tau) \, e^{-it}) 
+\epsilon^\beta R_n^\epsilon ( \tau , t )
\end{equation}
and parametrized by $\epsilon > 0$ close to $0$. 
The complex amplitude $A_n^\epsilon$ varies slowly in time, but no assumptions are
made on its spatial behaviour.
The remainder $R_n^\epsilon \in \mathbb{R}$ satisfies
$\int_0^{2\pi}{R_n^\epsilon( \tau , t)\, e^{\pm i t}\, dt}=0$ and we fix $\beta >1$.
The assumptions made on $V$ in theorem \ref{existthm}
imply
\begin{equation}
\label{vhertzd}
V^\prime (x)=V_0^\prime (x) + H(-x)\, o(|x|^{\alpha}), \ \ \
V_0^\prime (x)=- |x|^{\alpha}\, H(-x) .
\end{equation}
This leads us to
fix $\tilde{\mu} = \epsilon^{\alpha -1}$ and $\beta = \alpha$, so that 
in equation (\ref{ncX})
the time-modulation term
$\tilde{\mu} \partial^2_{\tau  t} X_n^{\epsilon}$, the nonlinear interaction forces
$V^\prime$ and the remainder $\epsilon^\beta R_n^\epsilon$
have the same order $\epsilon^\alpha$.
Setting $A_n^\epsilon =A_n + o(1)$ and $R_n^\epsilon =R_n + o(1)$
as $\epsilon \rightarrow 0$,
inserting (\ref{ansatz2}) in equation (\ref{ncX}) and multiplying the latter by 
$\epsilon^{-\alpha}$, one obtains
\begin{eqnarray}
\label{calculapp}
&&
2 i \partial_\tau A_n\, e^{it} - 2 i \partial_\tau \bar{A}_n\, e^{-it}
+(\,  \partial_t^2 + 1 \, )\, R_n  \\
\nonumber
&=&V_0^\prime [\, (A_{n+1}-A_n)\, e^{it} +  c.c.   \, ]
-V_0^\prime [\, (A_{n}-A_{n-1})\, e^{it} +  c.c.   \, ] +o(1)
 \end{eqnarray}
as $\epsilon \rightarrow 0$, where we denote by $c.c.$ the complex conjugates.
Now assume that
a family of solutions $X_n^{\epsilon}$ exists for all $\epsilon \approx 0$.
Letting $\epsilon \rightarrow 0$ in
the above equation yields
\begin{equation}
\label{amp}
2 i \partial_\tau A_n = f(A_{n+1}-A_n)-f(A_{n}-A_{n-1}),
\end{equation}
$$
f(z)=\frac{1}{2\pi}\int_0^{2\pi}{ V_0^\prime (\, z\, e^{it} +  \bar{z}\, e^{-it}    \, )  \, e^{- i t}\, dt}
$$
by projection on the first Fourier harmonic.
Moreover, for all fixed $\tau$
the $2\pi$-periodic function $R_n (\tau , .)$ is determined as a 
function of $A_{n\pm 1}(\tau )$ and $A_{n}(\tau )$,
and defined as the unique $2\pi$-periodic solution of
$$
(\,  \partial_t^2 + 1 \, )\, R_n = V_0^\prime [\, (A_{n+1}-A_n)\, e^{it} +  c.c.   \, ]
-V_0^\prime [\, (A_{n}-A_{n-1})\, e^{it} +  c.c.   \, ] - 2 i \partial_\tau A_n\, e^{it} +  c.c.
$$
being $L^2$-orthogonal to $e^{\pm i t}$.
Now there remains to evaluate $f(z)$.
Setting $z=r\, e^{i \theta}$ and using the change of variable $s=t+\theta$
in the integral defining $f$, one obtains
$$
f(z)=\frac{e^{i \theta }}{2\pi}
\int_{\mathbb{T}}{ V_0^\prime (\, 2 r \cos{s}    \, )  \, e^{- i s}\, ds}
$$ 
where $\mathbb{T}$ denotes any interval of length $2\pi$.
Now we fix $\mathbb{T}=(-\pi ,\pi )$ and 
use the fact that $V_0^\prime (x)=- |x|^{\alpha}\, H(-x)$.
One obtains
after elementary computations
\begin{equation}
\label{nonl}
f(z)=2^{\alpha -1}\, c_\alpha \, z \, |z|^{\alpha -1}
\end{equation}
where
\begin{equation}
\label{wallis}
c_\alpha = 
\frac{2}{\pi}
\int_{0}^{\pi /2}{ (\cos{t} )^{\alpha +1}   \, dt}
\end{equation}
is a Wallis integral with fractional index $\alpha +1$.
It follows that
\begin{equation}
\label{cal}
c_\alpha = 
\frac{1}{\pi}\frac{ \Gamma{(\frac{1}{2})}   \Gamma{(\frac{\alpha}{2}+1)}  }{  \Gamma{(\frac{\alpha +1}{2}+1)}   },
\end{equation}
where $\Gamma{(x)}=\int_{0}^{+\infty}{e^{-t}\, t^{x-1}\, dt}$ denotes Euler's Gamma function
(see \cite{abra}, formula 6.2.1 and 6.2.2 p.258).
Since $\Gamma{(1/2)}=\sqrt{\pi}$ and $\Gamma{(a+1)}=a\, \Gamma{(a)}$, we obtain finally
\begin{equation}
\label{thec}
c_\alpha = 
\frac{
2 \alpha \Gamma{(\frac{\alpha}{2})}
}
{
\sqrt{\pi} (\alpha^2 -1)  \Gamma{(\frac{\alpha -1}{2})}
}.
\end{equation}
As a conclusion,
introducing $\tau_0 = 2^{1-\alpha} c_\alpha^{-1}$,
equations (\ref{amp})-(\ref{nonl}) yield
\begin{equation}
\label{dps}
2i \tau_0 \frac{\partial A_n}{\partial \tau}=
(A_{n+1}-A_n)\, |A_{n+1}-A_n |^{\alpha -1} -
(A_{n}-A_{n-1})\, |A_{n}-A_{n-1} |^{\alpha -1}, \ \
n\in \mathbb{Z},
\end{equation}
i.e. one recovers the DpS equation (\ref{dfnls})
introduced in section \ref{intro} and
expression (\ref{tau0}) of coefficient $\tau_0$.

Note that
in the particular case of Hertz's law, one has
$$
\alpha = 3/2, \ \ \
\tau_0 = \frac{5  (\Gamma{(\frac{1}{4})})^2 }{24 \sqrt{\pi}} \approx 1.545
$$
(reference \cite{abra}, formula 6.1.32, 6.1.10 p.255-256).
Moreover, in the case of nearly linear unilateral interaction forces
(i.e. when $\alpha \rightarrow 1^+$), one can use
expression (\ref{wallis}) to obtain
$\lim\limits_{\alpha \rightarrow 1^+}\tau_0 = 2$.

\begin{remark}
\label{remros}
The discrete quasilinear Schr\"odinger (D-QLS) equation
\begin{equation}
\label{dqls}
i  \frac{\partial A_n}{\partial \tau}=
k\, \big[ \, 
(A_{n+1}-A_n)\, |A_{n+1}-A_n |^{2} -
(A_{n}-A_{n-1})\, |A_{n}-A_{n-1} |^{2}
\,\big]
+ A_n\, |A_n |^{2}
\end{equation}
was derived in reference \cite{ros}
for Hamiltonian lattices with
\begin{equation}
\label{ham2}
{\mathcal H}=
\sum_{n\in \mathbb{Z}}{\frac{1}{2}\, (\frac{dx_{n}}{dt})^2+\frac{1}{2}\, x_n^2 - \frac{1}{4}\, x_n^4+k\, (x_{n+1}-x_n)^4}
\end{equation}
($k >0$), where the purely quartic interaction potential
yields a nonlinear dispersion and the on-site potential is  
anharmonic. For stationary solutions $A_n (\tau )=a_n\, e^{i\, \Omega \tau }$,
the additional local nonlinearity makes equation (\ref{dqls})
easier to analyze than (\ref{dps}), because (\ref{dqls}) can be analytically studied
near the anticontinuum limit $k \rightarrow 0$ \cite{aubry,alfimov}.
In addition, the existence of discrete breather solutions of (\ref{dqls})
has been proved in reference \cite{ros}, as well as several results
concerning their stability, spatial decay and continuum limit.
Unfortunately, system (\ref{dqls}) does not apply to the Newton's craddle problem,
because nonlinear terms of the local potential $U(x)=1-\cos{x}$ of nonlinear pendula
are of higher order than the classical Hertz contact interactions
(for this reason on-site nonlinearities have been neglected in Hamiltonian (\ref{ham})). 
\end{remark}

\ve

Equation (\ref{dps}) possesses some interesting properties, in particular
conserved quantities. Let us consider 
spatially localized solutions of (\ref{dps})
satisfying $\{ A_n (\tau)\}\in \ell_2 (\mathbb{Z})$.
One can check that
the squared $\ell_2$ norm $\sum_{n\in \mathbb{Z}}{|A_n|^2}$ is invariant
by the flow of (\ref{dps}). 
In addition, (\ref{dps}) can be written as a Hamiltonian system
$$
\partial_\tau A_n = \frac{\partial H}{\partial p_n}, \ \ \
\partial_\tau p_n = -\frac{\partial H}{\partial A_n},
$$
with $p_n=i\, \bar{A}_n$ and the Hamiltonian
$$
H=-\frac{1}{(\alpha +1)\tau_0}\sum_{n\in \mathbb{Z}}{|A_{n+1}-A_n|^{\alpha +1}}.
$$
It follows that the $\ell_{\alpha +1}$ norm of the forward difference $\{ A_{n+1}-A_n  \}$
is also a conserved quantity. 
If $\{ A_n (\tau)\}\in \ell_1 (\mathbb{Z})$ then
there exists a third conserved quantity
$$
P=\sum_{n\in \mathbb{Z}}{A_{n}}.
$$
In addition the DpS equation admits the gauge invariance
$A_n \rightarrow A_n \, e^{i \varphi}$, the translational invariance $A_n \rightarrow A_n + c$ and
a scale invariance, since any
solution $A_n (\tau )$ of (\ref{dps}) generates a one-parameter family of solutions
$a\, A_n (|a|^{\alpha -1}\, \tau )$, $a\in \mathbb{R}$. This property explains
the structure of velocity-amplitude relations for travelling wave
solutions and the frequency-amplitude scaling of
standing wave solutions that will be obtained in the next section.

To end this section, we point out an interesting 
formal continuum limit of (\ref{dps})
for solutions $A_n(\tau)$ varying slowly in space and time
with an appropriate scaling. 
Setting $p=\alpha +1$,
$A_n(\tau)=\psi( \xi, s )$
with $\xi = h\, n$, $s=(2\tau_0)^{-1}h^{\alpha+1}\tau$ 
in equation (\ref{dps}) and
letting $h$ go to $0$, one obtains 
the generalized Schr\"odinger equation
\begin{equation}
\label{cgnls}
i\, {\partial_s \psi}= \Delta_p \psi, \ \ \
\xi \in \mathbb{R},
\end{equation}
where the usual Laplacian operator is replaced by the one-dimensional
$p$-Laplacian $\Delta_p \psi = \partial_\xi (\, \partial_\xi \psi\,  |\partial_\xi \psi|^{p-2}\, )$.
In addition, more general fully nonlinear generalized Schr\"odinger equations
would result from different
continuum limits, setting $A_n(\tau)=B( \xi, \tau )\, e^{i\, q\, n}$.

\subsection{\label{twsol}Time-periodic travelling and standing waves}

It is well known (see e.g. \cite{eilbeckj})
that the classical DNLS equation (\ref{dnls}) admits
explicit periodic travelling wave solutions of the form
\begin{equation}
\label{ansatz3}
A_n (\tau )=R\, e^{i\, (\Omega \tau - q n - \phi )}, \ \ \ R>0,
\end{equation}
where the frequency $\Omega$ is fixed by the amplitude $R$ and 
the wavenumber $q$ through a nonlinear dispersion relation.
The same property holds true for the DpS equation (\ref{dfnls})
due to its gauge invariance.
Inserting the ansatz (\ref{ansatz3}) in (\ref{dfnls}), one obtains after
elementary computations
\begin{equation}
\label{ds}
\tau_0 \, \Omega = 2^\alpha R^{\alpha -1} |\sin{(\frac{q}{2})}|^{\alpha +1}.
\end{equation}
Since (\ref{dfnls}) was derived as an amplitude equation from
the original system (\ref{nc}), combining the ansatz (\ref{ansatz3}) with
(\ref{ansatzX})-(\ref{ansatz2}) yields {\em approximate}
periodic travelling wave solutions of (\ref{nc}). These solutions
take the form
\begin{eqnarray}
\label{ansatzapprox1}
x_n^{\rm{tw}} (t)&=&\epsilon\, (A_n(\epsilon^{\alpha -1} t )\, e^{it} 
+ \bar{A}_n(\epsilon^{\alpha -1} t) \, e^{-it}) \\
\label{ansatzapprox}
&=& a\, \cos{(q n - \omega_{\rm{tw}} t + \phi)} ,
\end{eqnarray}
where we denote by $a=2 \epsilon R >0$ the wave amplitude and by
\begin{equation}
\label{freq}
\omega_{\rm{tw}} =1 + \frac{2}{\tau_0}\, a^{\alpha -1}\, |\sin{(\frac{q}{2})}|^{\alpha +1}
\end{equation}
its frequency. 

Interestingly,
these expressions are exact for $q=0$
and correspond to linear in-phase oscillations of all oscillators of (\ref{nc})
at frequency $\omega_{\rm{tw}}=1$. More generally,
as announced in theorem \ref{existthm} there exist in fact
exact periodic travelling wave solutions of (\ref{nc}) close
to the above approximate solutions. The proof of this statement
will be the object of section \ref{fixedpoint}.

The nonlinear dispersion relation (\ref{ds}) displays an
unusual feature, since  
for all wavenumber $q$ the wave frequency
$\omega_{\rm{tw}}$ converges towards $1$ in the small amplitude limit 
$a\rightarrow 0$.
This is due to the fact that dispersion originates
from fully nonlinear 
interactions between beads.
Additionally, one can notice that
the band of frequencies $\omega_{\rm{tw}}$ of these periodic travelling
waves lies above $\omega =1$.

\ve

Another special class of solutions of (\ref{dfnls}) takes the form
\begin{equation}
\label{ansatzst}
A_n (\tau )=R_n\, e^{i\, (\Omega \tau  + \phi )}, \ \ \ R_n \in \mathbb{R}.
\end{equation}
Introducing $a_n=2 \epsilon R_n$ and 
$\omega_{\rm{sw}} = 1+\Omega \epsilon^{\alpha -1}$, 
the approximate solutions deduced from (\ref{ansatz2}) read
\begin{equation}
\label{ansatzapproxst}
x_n^{\rm{sw}} (t)= a_n\, \cos{(\omega_{\rm{sw}} t + \phi)},
\end{equation}
and the DpS equation (\ref{dfnls}) becomes
\begin{equation}
\label{dpsst}
-\mu\, a_n =
(a_{n+1}-a_n)\, |a_{n+1}-a_n |^{\alpha -1} -
(a_{n}-a_{n-1})\, |a_{n}-a_{n-1} |^{\alpha -1} ,
\end{equation}
with 
$$
\mu = (\omega_{\rm{sw}} -1)\, 2^\alpha \tau_0.
$$
Solutions of the form (\ref{ansatzapproxst}) correspond to standing waves,
i.e. they oscillate periodically with time and their nodes and extrema
do not change.
We shall refer to (\ref{dpsst}) as the (real) stationary DpS equation.

\ve

Equation (\ref{dpsst}) has been previously derived 
for special classes of lattices that sustain standing wave
solutions $x_n (t) = a_n \, \varphi (t)$ with space-time separation
(see remark~\ref{rem} for more details).

\ve

\begin{remark}
\label{rem}
Equation (\ref{dpsst}) can be simply derived
in some class of nonlinear lattices
with special interaction potentials.  
Consider the symmetrized potential
$
U(x)=V_0(-|x|)= {(1+\alpha)}^{-1}\, |x|^{\alpha +1} 
$
satisfying $U^\prime (x) = x\, |x|^{\alpha -1}$
and the nonlinear system
\begin{equation}
\label{meca}
\frac{d^2 x_{n}}{dt^2} + \kappa \, x_n= 
U^\prime(x_{n+1}-x_n)-U^\prime(x_{n}-x_{n-1}),
\ \ \
n\in \mathbb{Z} .
\end{equation}
System (\ref{meca}) admits some simple mechanical interpretations.
When $\kappa >0$, it describes e.g. the oscillations
a chain of linear pendula coupled by anharmonic 
torsion springs with potential $U$.
For $\kappa <0$ the same analogy can be carried out,
replacing the linear pendula by inverted pendula.
For $\kappa =0$, system (\ref{meca}) reduces to
the Fermi-Pasta-Ulam lattice with potential $U$,
that represents e.g. a chain of masses coupled by anharmonic springs.
Equilibria of (\ref{meca}) satisfy equation (\ref{dpsst})
with $\mu = -\kappa$. Moreover,
due to the special form of $U$, one can find
standing wave solutions of (\ref{meca}) taking the form
$x_n (t) = a_n \, \varphi (t)$,
where $\varphi$ is a periodic solution of
$$
\frac{d^2 \varphi}{dt^2} + \kappa \, \varphi + \mu U^\prime (\varphi )
=0
$$
and $\{ a_n \}$ is a bounded solution of (\ref{dpsst}).
For $\kappa =0$ and when 
$\alpha \geq 3$ is an odd integer, spatially localized
standing wave solutions of (\ref{meca}) have been 
studied in the physics literature
\cite{kivcomp,flach2,flachb,sievpage}
using equation (\ref{dpsst}).
The case $\kappa >0$ and $\alpha = 3$ has been studied
in \cite{fdmf} along the same lines.
\end{remark}

\ve

The simplest case of equation (\ref{dpsst})
corresponds to $\mu =0$ (i.e. $\omega_{\rm{sw}}=1$), 
where it follows that $a_{n+1}-a_n$ is constant, and thus all solutions
of (\ref{dpsst}) read $a_n = \lambda\, n + \beta$, $\lambda , \beta \in \mathbb{R}$. In that case, (\ref{ansatzapproxst}) yields
in fact exact solutions of (\ref{nc}) taking the form
$x_n (t)= ( \lambda\, n + \beta )\, \cos{( t + \phi)}$ and corresponding to collective in-phase oscillations.

\ve

For $\mu \neq 0$, it is interesting to note that $\tilde{a}_n = |\mu|^{\frac{1}{1-\alpha}}\, a_n$
satisfies the renormalized equation
\begin{equation}
\label{dpsstn}
-{\rm sign}(\mu )\, \tilde{a}_n =
(\tilde{a}_{n+1}-\tilde{a}_n)\, |\tilde{a}_{n+1}-\tilde{a}_n |^{\alpha -1} -
(\tilde{a}_{n}-\tilde{a}_{n-1})\, |\tilde{a}_{n}-\tilde{a}_{n-1} |^{\alpha -1} ,
\end{equation}
where by definition $\rm{sign}(\mu ) =1$ for $\mu >0$ and
$\rm{sign}(\mu ) =-1$ for $\mu <0$.
Consequently it is sufficient to restrict to the cases $\mu \in \{-1,0,1 \}$ to find
all solutions of (\ref{dpsst}).
In what follows we fix $\mu = \pm 1$ and drop the tilde in equation (\ref{dpsstn})
for notational simplicity.

\ve
 
Multiplying (\ref{dpsstn}) by $a_n$ and summing over $n$,  one obtains
for all $\{ a_n \} \in \ell_2 (\mathbb{Z}) $
$$
{\rm sign}(\mu ) \sum_{n\in \mathbb{Z}}{a_n^2}=\sum_{n\in \mathbb{Z}}{|a_{n+1}-a_n|^{\alpha +1}},
$$
where an index change has been performed to simplify the right-hand side.
Consequently, nontrivial localized solutions $\{ a_n \} \in \ell_2 (\mathbb{Z}) $ can be found
only for $\mu >0$, and the same conclusion holds true for periodic solutions
(just slightly modifying the above computation). As a consequence
we shall restrict ourselves to the case $\mu =1$.

\ve

In order to reformulate (\ref{dpsstn})
as a two-dimensional mapping, we introduce the auxiliary variable
\begin{equation}
\label{homeo}
b_n = (a_{n}-a_{n-1})\, |a_{n}-a_{n-1} |^{\alpha -1} .
\end{equation}
Inverting (\ref{homeo})
and shifting the index $n$, one finds
\begin{equation}
\label{comp1}
a_{n+1}=a_{n}+b_{n+1} \, |b_{n+1}|^{\frac{1}{\alpha}-1} .
\end{equation}
Moreover, equation (\ref{dpsstn}) at rank $n+1$ reads 
\begin{equation}
\label{comp2}
b_{n+2}= b_{n+1}-\, a_{n+1}.
\end{equation}
Introducing the variable 
$$U_n =(a_n , b_{n+1}),$$
equations (\ref{comp1})-(\ref{comp2}) define a two-dimensional mapping
$U_{n+1}=G (U_n)$. 
One can check that ${\rm det}\, DG(U)=1$ for all $U\in \mathbb{R}^2 \setminus \{0\}$, 
which implies that $G$ is an area-preserving map. 
Moreover, the mapping possesses the invariance $U_n \rightarrow -U_n$
since $G(-U)=-G(U)$.
In addition the map $G$ is reversible with respect to the symmetry $R$ defined by
$R(a,b)=(a,-a-b)$, i.e. one has $R\, G^{-1} = G \circ R$. Equivalently,
$\{ U_n \}$ is an orbit of $G$ if and only if $\{ R\, U_{-n} \}$ is an orbit of $G$. 
This property originates from the invariance $n\rightarrow -n$ of (\ref{dpsstn}).

\ve

In what follows we consider (\ref{comp1})-(\ref{comp2}) for $\alpha = 3/2$.
Several orbits of $G$ are shown in figures \ref{traj3} and \ref{traj2}.
Figure \ref{traj3} shows an orbit of $G$, for
an initial condition with $a_0 =0$ and $b_1$ 
very small. The trajectory suggests
the existence of pairs of symmetric orbits homoclinic to $0$, i.e. satisfying
$\lim\limits_{n\rightarrow \pm \infty}U_n=0$.
Figure \ref{traj2} shows many kinds of trajectories frequently encountered
in area-preserving or reversible mappings. 
Plots in the left column show periodic orbits and invariant tori,
while right plots reveal intricate trajectories in the vicinity of stable and unstable manifolds.

\begin{figure}[!h]
\begin{center}
\includegraphics[scale=0.25]{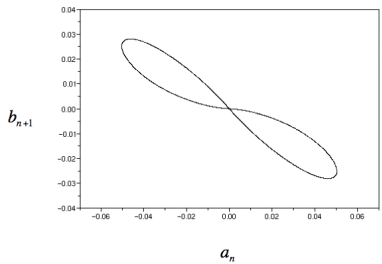}
\end{center} 
\caption{\label{traj3}
Orbit of $G$ for the initial condition $a_0 =0$, $b_1 = 10^{-13}$.}
\end{figure}

\begin{figure}[!h]
\begin{center}
\includegraphics[scale=0.55]{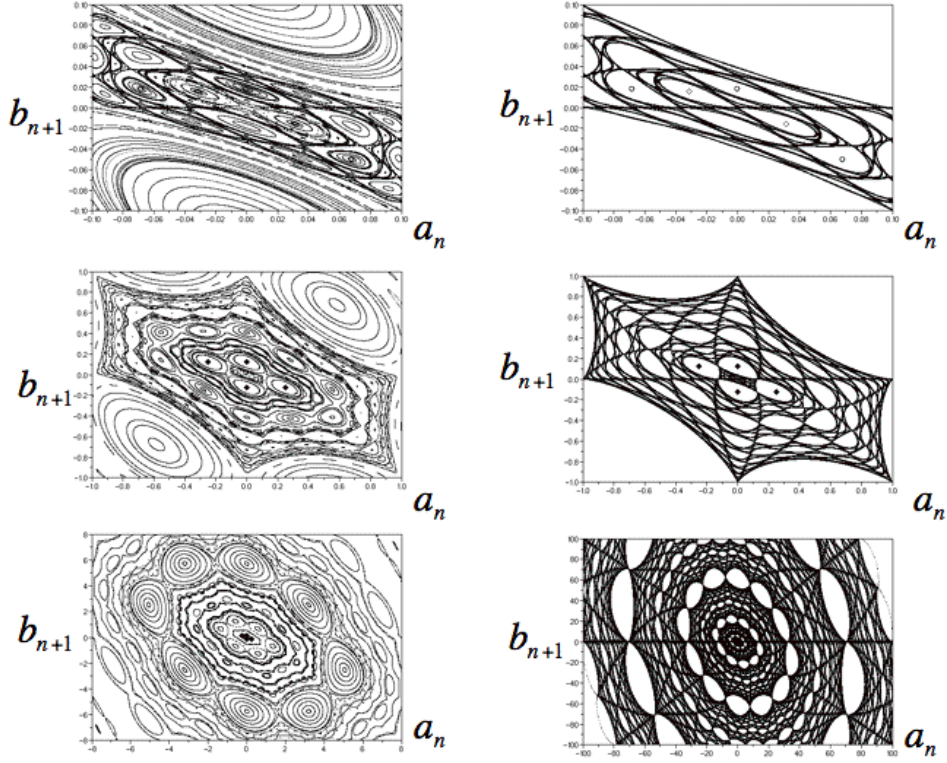}
\end{center} 
\caption{\label{traj2}
Trajectories of (\ref{comp1})-(\ref{comp2}) for $\alpha = 3/2$, in neighbourhoods of $U=0$ of different sizes
(top~: $\| U_n \|_\infty \leq 0.1$, middle~: $\| U_n \|_\infty \leq 1$, bottom left~: $\| U_n \|_\infty \leq 8$, 
bottom right~: $\| U_n \|_\infty \leq 100$). Left plots shows periodic orbits and invariant tori,
and right plots mainly focus on trajectories in the vicinity of stable and unstable manifolds.
Marks in the first and second plot of the right column correspond to simple periodic orbits
that can be explicitly computed, a period-$2$ orbit $a_n = 2^{\frac{\alpha +1}{1-\alpha}}\, (-1)^n$,  
a period-$3$ orbit $\{a_n \} = \{ \ldots , 0,a,-a,0,a,-a,\ldots  \}$ with
$a=(1+2^{\alpha})^{\frac{1}{1-\alpha}}$, 
a period-$4$ orbit $\{a_n \} = \{ \ldots , 0,-a,0,a,0,-a,\ldots  \}$ with
$a=2^{\frac{1}{1-\alpha}}$.
The trajectory of figure \ref{traj3} near the
double homoclinics is visible in the top right plot, surrounding the period-$2$ orbit.
}
\end{figure}

\ve

Now we concentrate on solutions homoclinic to $0$. To approximate
these solutions numerically, 
we consider equation (\ref{dpsstn}) with periodic boundary conditions $a_{n+N}=a_n$
(computations are performed for $N=50$), and look for solutions 
close to homoclinic ones. This approach is consistent with figure \ref{traj2},
which shows periodic orbits and invariant tori very close to the pair of
homoclinics. We numerically solve this nonlinear equation using the
fsolve function of the software package Scilab, based on a modified
Powell hybrid method. 

We start by computing an homoclinic solution of (\ref{dpsstn}) with 
site-centered symmetry $a_{N/2 -n}=a_{N/2 +n}$. 
The numerical iteration is initialized by
setting $a_{N/2}=-a$ and $a_n=0$ elsewhere.
We fix $a=0.1$, of the order of the size of the homoclinic loop of figure \ref{traj2}.
In that case the iteration converges towards a spatially localized symmetric
solution $\{ a_n \}$, whose profile is shown in the top left plot of
figure \ref{bcsc}. As shown by the top right plot in semi-logarithmic scale, 
the homoclinic solution has a super-exponential spatial decay.
This unusual feature comes from the fact that $G$ is not differentiable
at the fixed point $U=0$ (since $\alpha >1$), hence the classical results 
on exponential convergence along stable and unstable manifolds do not apply.

One can also compute an homoclinic solution of (\ref{dpsstn}) with 
bond-centered symmetry $a_{N/2 -n}=-a_{N/2 +n+1}$. 
In that case one starts the numerical iteration by
setting $a_{N/2}=a$, $a_{N/2 +1}=-a$  and $a_n=0$ elsewhere.
The iteration converges towards a spatially localized antisymmetric
solution $\{ a_n \}$ shown in figure \ref{bcsc}
(bottom left plot), which also decays super-exponentially. 
More complex homoclinic solutions can be
computed by changing the initial condition (one example is shown in the bottom right plot)
and the lattice size.

\begin{figure}[!h]
\psfrag{n}[0.9]{\huge $n$}
\psfrag{a}[1][Bl]{\huge $a_n$}
\psfrag{b}[1][Bl]{\huge $|a_n |$}
\begin{center}
\includegraphics[angle=-90,scale=0.22]{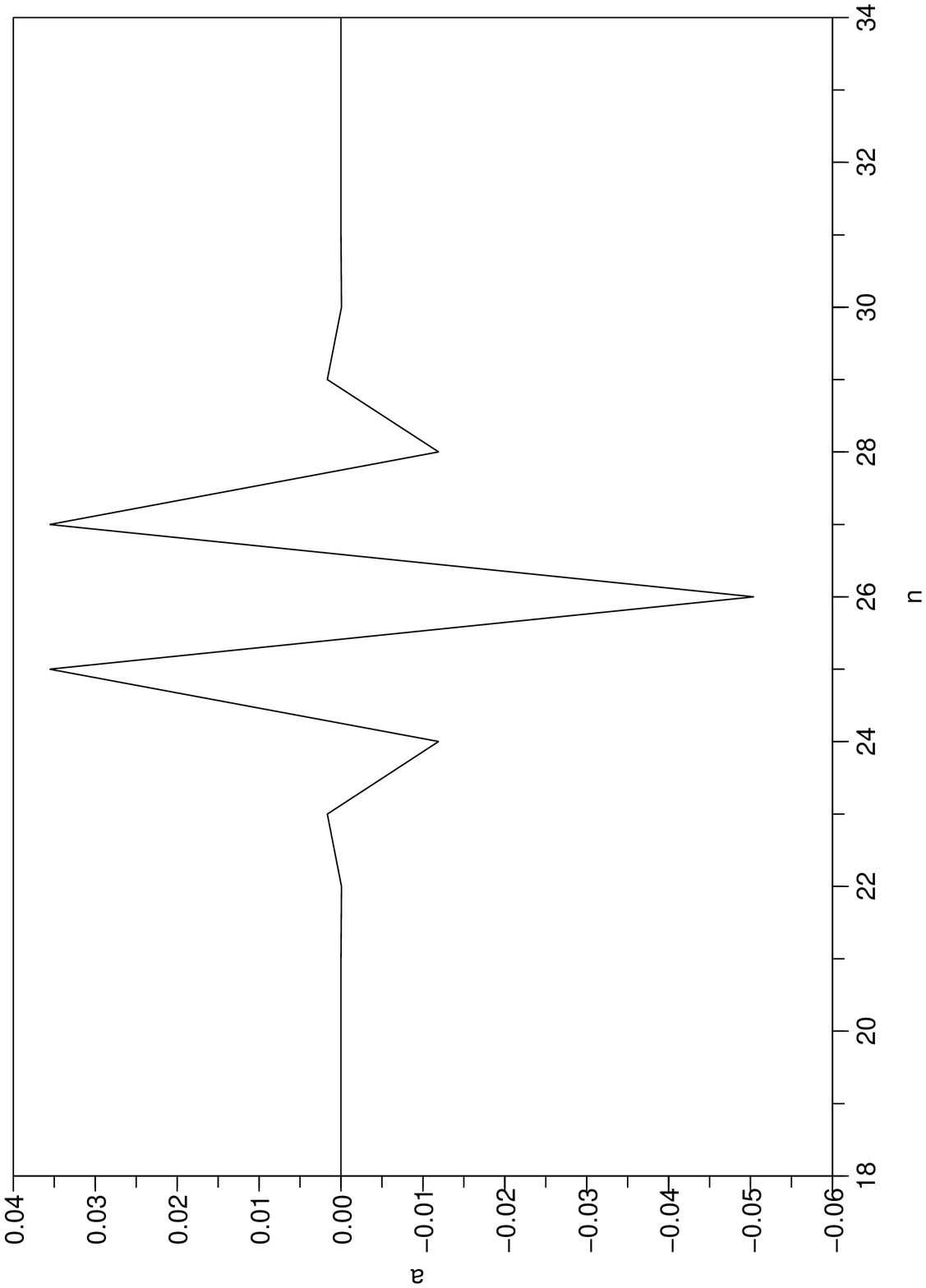}
\includegraphics[angle=-90,scale=0.22]{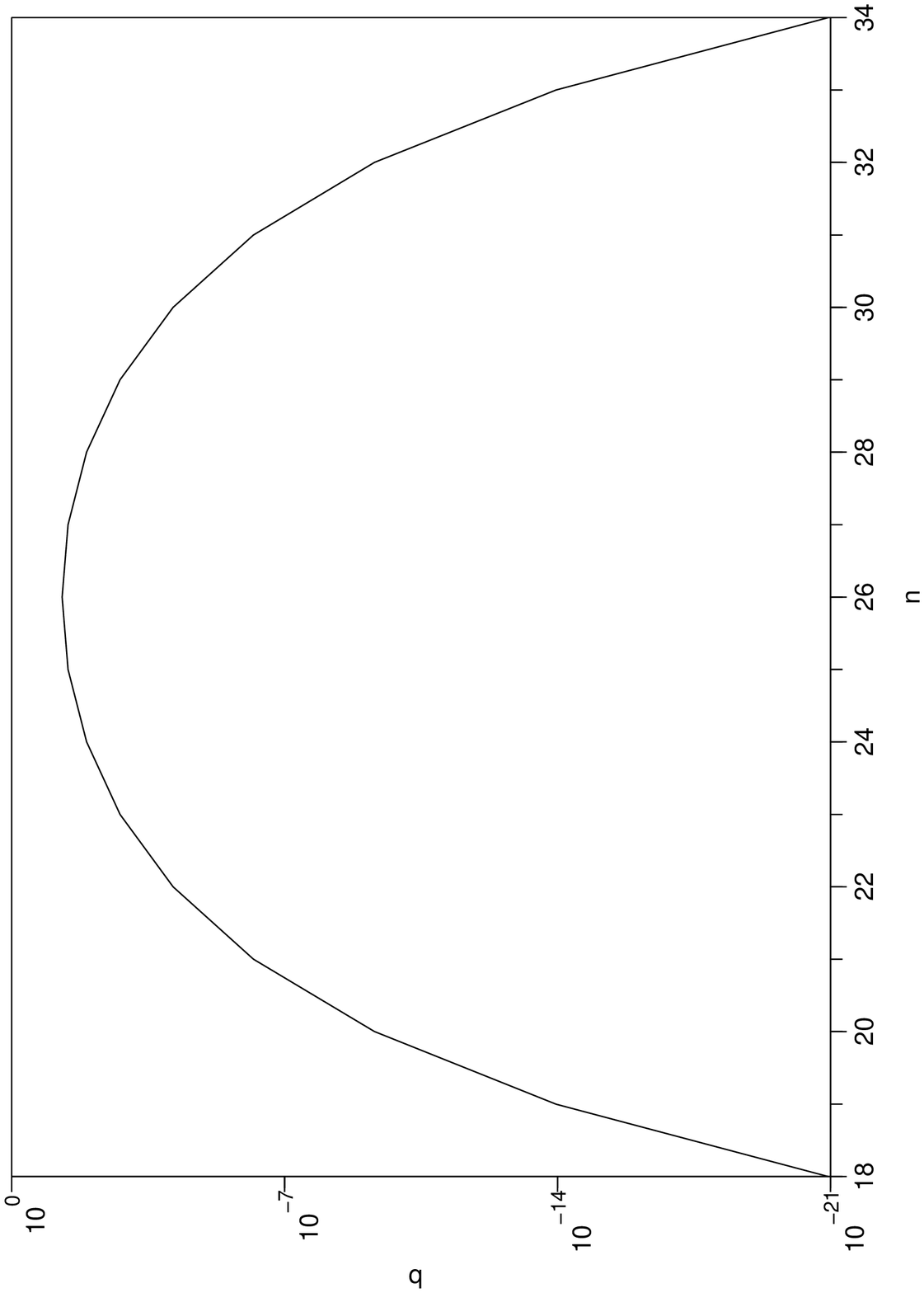}
\includegraphics[angle=-90,scale=0.22]{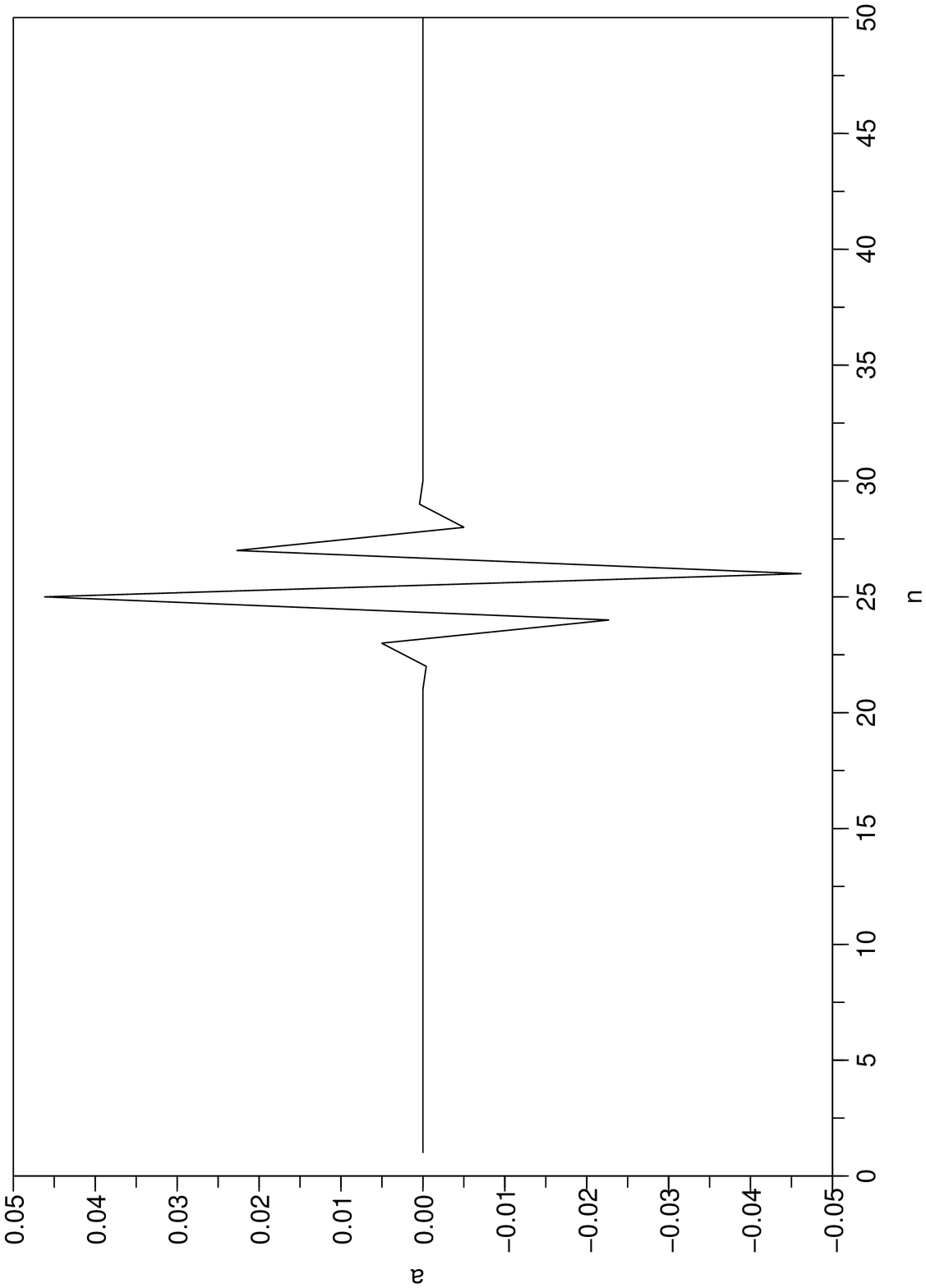}
\includegraphics[angle=-90,scale=0.22]{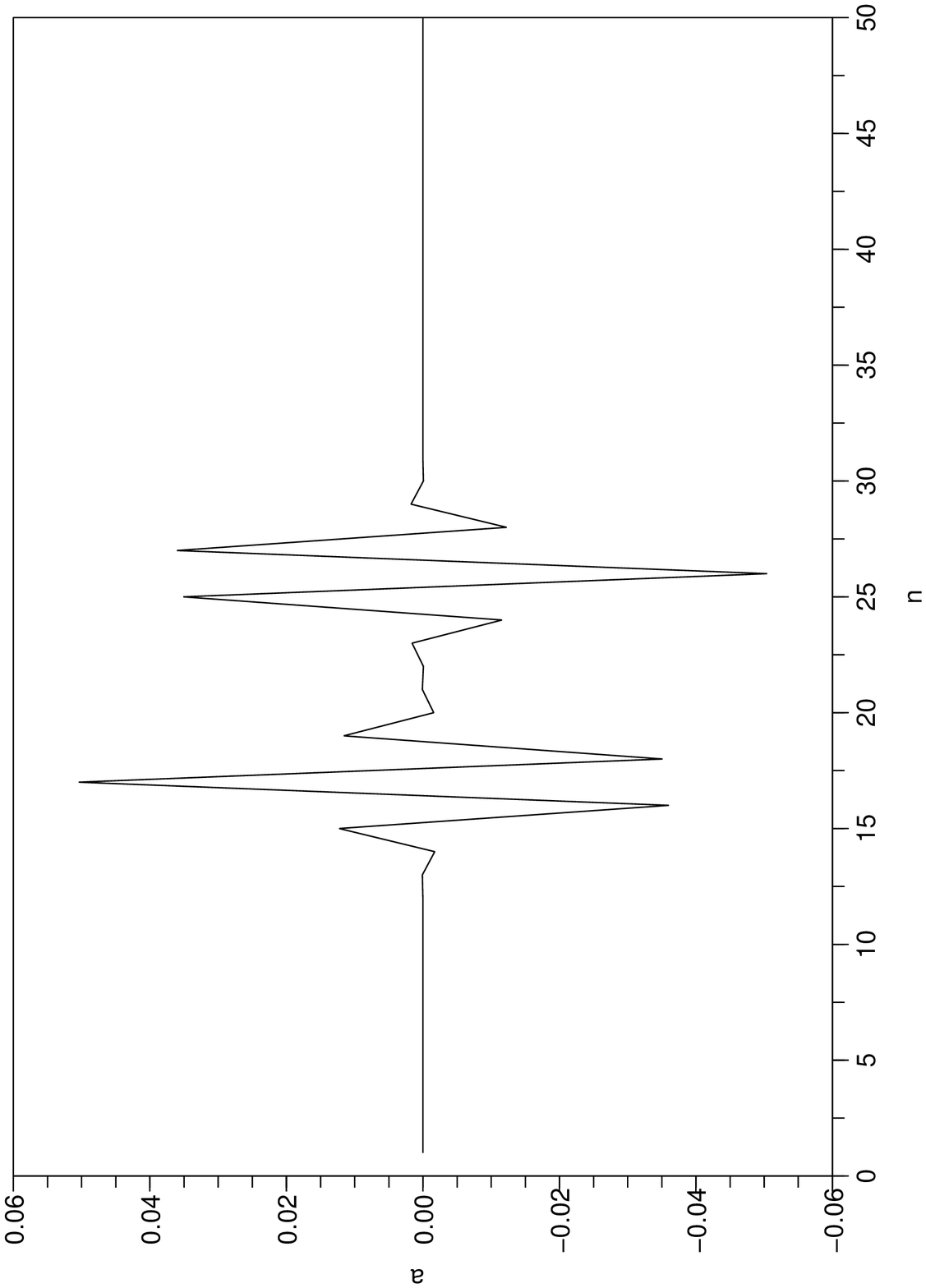}
\end{center} 
\caption{\label{bcsc} 
This figure shows three different solutions of (\ref{dpsstn}) homoclinic to $0$.
Top left plot~: spatially symmetric solution.
Top right plot~: amplitude of the same homoclinic solution in semi-logarithmic scale,
showing its super-exponential decay
(profiles are only plotted for $18\leq n \leq 34$,
because $|a_n|$ drops below machine precision at the other sites).
Bottom left plot~: spatially antisymmetric solution.
Bottom right plot~: double-humped homoclinic solution.}
\end{figure}

\ve

Any solution of (\ref{dpsstn}) corresponds to the approximate solution of (\ref{nc})
\begin{equation}
\label{ansatzapproxst2}
x_n^{\rm{sw}} (t)=  |\mu|^{\frac{1}{\alpha -1}}\, {a}_n\, \cos{(\omega_{\rm{sw}} t )},
\ \ \
\mu = (\omega_{\rm{sw}} -1)\, 2^\alpha \tau_0.
\end{equation}
In particular, any orbit homoclinic to $0$ corresponds via (\ref{ansatzapproxst2}) to 
a one-parameter family of time-periodic
and spatially localized approximate solutions of (\ref{nc}), i.e. 
{\em discrete breathers}. 
Interestingly, it follows from expression (\ref{ansatzapproxst2}) that
the spatial extension of these breathers is independent of their amplitude 
(or equivalently of their frequency). This property is unusual
(the spatial extension of classical discrete breathers diverges
when their amplitude goes to $0$ \cite{james,guill}) and originates
from the fully-nonlinear interactions between beads present in (\ref{nc}). 
It is an analogue for discrete breathers of a similar invariance present
in Nesterenko's solitary wave \cite{neste1,neste2}, whose
spatial extension is independent of amplitude and velocity.

\ve

More generally, this study
leads us to conjecture the existence of time-periodic standing wave solutions
of (\ref{nc}) having a large variety of spatial behaviours, e.g. spatially-periodic (with arbitrarily large periods),
quasi-periodic, spatially localized, homoclinic to periodic or spatially disordered.

\begin{remark}
Reference \cite{hutzler} provides interesting 
numerical and experimental results on
standing waves in Newton's cradle with a small number of beads,
in particular its relaxation towards in-phase oscillations when
energy dissipation during collisions is not neglected,
and the near-recurrence of specific modes of oscillation in
the conservative case.
\end{remark}
  
\section{\label{fixedpoint}Proof of theorem \ref{existthm}}

The aim of this section is to prove theorem \ref{existthm}, i.e.
show the existence of small amplitude exact periodic travelling wave solutions
of (\ref{nc}) close to the approximate travelling waves provided by the 
DpS equation. For this purpose we reformulate the search of 
periodic travelling waves as a fixed point problem in section \ref{fptw}, using a suitable
scaling of the solutions deduced from the formal analysis of section \ref{gdnlsgen}.
In section \ref{sfp} we solve the resulting equation
by the contraction mapping theorem. We end this section by
pointing out some symmetry properties of periodic travelling waves
(section \ref{symprop}).

\subsection{\label{fptw}Fixed point problem for periodic travelling waves}

We search for solutions of (\ref{nc}) taking the form
$x_n (t)= u(\xi )$, where 
$u$ is $2\pi$-periodic, 
$\xi =q\, n -\omega\, t$ denotes a moving frame coordinate,
and $q \in [-\pi , \pi ]$, $\omega \in \mathbb{R}^\ast$ are the wave parameters. 
Equation (\ref{nc}) yields the advance-delay differential equation
\begin{equation}
\label{ad}
\omega^2 u^{\prime\prime} (\xi ) + u(\xi ) =
V^\prime (u(\xi +q)-u(\xi ))-V^\prime (u(\xi )-u(\xi -q)), \ \ \
\xi \in \mathbb{R}.
\end{equation}
Notice that (\ref{ad}) possesses the symmetry $u(\xi ) \rightarrow -u({-\xi})$.
To simplify the analysis, we look for solutions of (\ref{ad}) invariant under
this symmetry, i.e. odd in $\xi$. 

We look for solutions with frequency $\omega$
close to $1$ as in section \ref{twsol}. 
Problem (\ref{ad}) can be rewritten in more compact form
\begin{equation}
\label{ad2}
u^{\prime\prime}  + u =
(1-\omega^2 )\, u^{\prime\prime}  +N(u),
\end{equation}
where 
\begin{equation}
\label{defn}
[N(u)](\xi)=V^\prime (u(\xi +q)-u(\xi ))-V^\prime (u(\xi )-u(\xi -q)).
\end{equation}

Now we reformulate equation (\ref{ad2}) as a suitable fixed-point problem
for nontrivial solutions (i.e. our formulation will eliminate the
trivial solution $u=0$).

We begin by introducing some notations. 
We consider the Banach space
$$
X = \{\, 
v\in C^0_{\rm{per}}(0,2\pi ), \
v(-\xi )=-v(\xi )
\, \} 
$$
endowed with the $L^\infty$ norm,
where $C^k_{\rm{per}}(0,2\pi )$ denotes the classical
Banach space of $2\pi$-periodic and $C^k$ functions 
$v\, : \mathbb{R}\rightarrow \mathbb{R}$.
Note that the map $N$ maps $X$ into itself.
We consider the closed subspace of $X$
$$
X_h = \{\, 
v\in X, \
\zeta^\ast (v)=0
\, \} ,
$$
where
$$
\zeta^\ast (u)=
\frac{1}{\pi}\int_{-\pi}^{\pi}{u(\xi )\, \sin{\xi}\, d\xi}. 
$$
Note that $\zeta^\ast (u)=
\frac{2}{\pi}\int_{0}^{\pi}{u(\xi )\, \sin{\xi}\, d\xi}$ for all $u \in X$.
We have $X=  \mathbb{R}\, \sin{\xi} \oplus X_h $ and note
$
P\, u = \zeta^\ast (u)\, \sin{\xi}, \ \ \ P_h =I-P
$
the corresponding projectors.

Now we split small amplitude solutions as in section \ref{gdnlsgen}, using
the ansatz (\ref{ansatz2}) and the simplified expression
(\ref{ansatzapprox}), in which we fix the phase $\phi = -\pi /2$
to recoved odd solutions. More precisely,
we set
\begin{equation}
\label{ansatzu}
u(\xi)=a\, \sin{\xi}+ a^\alpha v(\xi),
\end{equation}
where $a>0$ is assumed close to $0$
and $v\in D_h = X_h \, \cap\, C^2_{\rm{per}}(0,2\pi )$.
This splitting is reminiscent of the classical Lyapunov-Schmidt 
technique, since the closed operator at the left side of (\ref{ad2}) 
has a kernel spanned by $\sin{\xi}$ in $X$ and 
its restriction to $X_h$ is invertible.
The nonlinear dispersion relation (\ref{freq}) suggests the scaling
$\omega^2 -1 = a^{\alpha -1} \lambda$, where $\lambda \in \mathbb{R}$ 
and $v\in D_h$ will be subsequently determined as functions of $a$.
Inserting (\ref{ansatzu}) in (\ref{ad2}) and
projecting the resulting equation on the first Fourier harmonic
results in
\begin{equation}
\label{adcentral}
\lambda  = F_0 (v,a),
\end{equation}
\begin{equation}
\label{f0}
F_0 (v,a)= -a^{-\alpha}\, \zeta^\ast [\, N( a\, \sin{\xi}+ a^\alpha v ) \, ] .
\end{equation}
Now there remains to complete (\ref{adcentral}) by a fixed point equation for $v$.
For this purpose we
project  (\ref{ad2}) onto $X_h$, apply
$K_h = (\frac{d^2}{d\xi^2}+I)^{-1} \in \mathcal{L}(X_h , D_h)$
to both sides of (\ref{ad2})
and use the fact that $K_h \frac{d^2}{d\xi ^2} = \frac{d^2}{d\xi ^2}K_h=I-K_h$
on $D_h$. This yields
\begin{equation}
\label{adhyp}
v=F_h (v,\lambda , a),
\end{equation}
\begin{equation}
\label{fh}
F_h (v,\lambda , a)= 
a^{\alpha -1}\, \lambda (K_h -I ) v\, 
+a^{-\alpha}\, K_h P_h N( a\, \sin{\xi}+ a^\alpha v ) ,
\end{equation}
where $K_h$ is explicitly given by equation (\ref{kh})
(checking that $K_h$ maps $X_h$ into $D_h$ is lengthy but straightforward). 
As a conclusion, nontrivial solutions of (\ref{ad}) taking the form
(\ref{ansatzu}) are given by the fixed point equation
\begin{equation}
\label{fpe}
(v,\lambda )=F_a (v,\lambda ), 
\end{equation}
where the map
$F_a = (F_h , F_0)$ is defined by (\ref{f0})-(\ref{fh}) and depends
on the parameter $a>0$.
We shall consider (\ref{fpe}) as a fixed point equation in
some closed ball of $X_h \times  \mathbb{R}$.
Then any solution of 
(\ref{fpe}) will have the automatic smoothness $v \in D_h$ since
\begin{equation}
\label{smoothness}
\omega^2\, v = 
K_h \, \big(\, 
a^{\alpha -1}\, \lambda  v\, 
+a^{-\alpha}\, P_h N( a\, \sin{\xi}+ a^\alpha v )
\, \big)
\end{equation}
and $K_h \in \mathcal{L}(X_h , D_h)$.

\subsection{\label{sfp}Solution of the fixed point problem}

In what follows we prove some key properties of
$F_a$ required to apply the contraction mapping theorem.
We shall note $V=(v,\lambda ) \in X_h \times  \mathbb{R}$ and
equip $X_h \times  \mathbb{R}$ with the supremum norm
$\| V  \|_\infty ={\rm Max}(\| v \|_{L^\infty},|\lambda|)$. 
We denote by $B(\rho)$ the closed ball of center $0$ and
radius $\rho$ in $X_h \times  \mathbb{R}$. 

\begin{lemma}
\label{contrac}
Consider the map $F_a = (F_h (.,a) , F_0(.,a))$ defined by 
expressions (\ref{f0})-(\ref{fh}), where the nonlinear term $N$
takes the form (\ref{defn}). Assume that the potential $V$
defining $N$ satisfies the same conditions as
in theorem \ref{existthm}. Then
there exist $\rho >0$, $a_0 >0$ independent of $q$
such that $F_a$ maps $B(\rho)$ into itself 
for all $a\in (0,a_0 )$. Moreover, one has as $a \rightarrow 0^+$
$$
\bsup{-1}{V_1 , V_2 \in B(\rho), V_1 \neq V_2}
\frac{
\| F_a (V_1) - F_a (V_2)\|_{\infty}
}
{
\| V_1-V_2  \|_\infty
}
=O(a^{\alpha -1})
$$
uniformly in $q\in [-\pi , \pi]$.
\end{lemma}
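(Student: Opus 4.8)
The plan is to exploit the scaling built into the ansatz (\ref{ansatzu}) to show that, for small $a$, the map $F_a$ is a small Lipschitz perturbation of a map which is \emph{constant} in $V=(v,\lambda)$. The decisive structural remark is that when $u=a\,\sin\xi+a^\alpha v$ with $v\in B(\rho)$, each argument of $V^\prime$ occurring in $N(u)$ (recall (\ref{defn})) reads
$$
\delta^{+}=a\,d^{+}+a^{\alpha}\bigl(v(\xi+q)-v(\xi)\bigr),\qquad \delta^{-}=a\,d^{-}+a^{\alpha}\bigl(v(\xi)-v(\xi-q)\bigr),
$$
with $d^{+}=\sin(\xi+q)-\sin\xi$ and $d^{-}=\sin\xi-\sin(\xi-q)$. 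Since $|d^{\pm}|\le 2\,|\sin(q/2)|\le 2$ and $\|v\|_{L^\infty}\le\rho$ while $\alpha>1$, one has the uniform bound $|\delta^{\pm}|\le C\,a$ for all $\xi$, all $q\in[-\pi,\pi]$ and all $v\in B(\rho)$ once $a\le a_0$. This single estimate is what renders every subsequent bound uniform in $q$.

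For the self-mapping property I would compute the limit $a\to0^{+}$. Using the positive homogeneity $V_0^\prime(s\,x)=s^\alpha V_0^\prime(x)$ ($s>0$) read off from (\ref{vhertzd}), one has $a^{-\alpha}V_0^\prime(\delta^{+})=V_0^\prime\bigl(d^{+}+a^{\alpha-1}(v(\xi+q)-v(\xi))\bigr)$, which converges uniformly to $V_0^\prime(d^{+})$ by uniform continuity of $V_0^\prime$ on bounded sets, while the correction $H(-\delta^{+})\,o(|\delta^{+}|^\alpha)$ contributes $a^{-\alpha}o((Ca)^\alpha)=o(1)$. Hence $a^{-\alpha}N(a\sin\xi+a^\alpha v)\to g_q:=V_0^\prime(d^{+})-V_0^\prime(d^{-})$ in $L^\infty$, uniformly in $v\in B(\rho)$ and $q$. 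Consequently $F_0(v,a)\to-\zeta^\ast(g_q)$ and the nonlinear part of $F_h$ tends to $K_hP_h\,g_q$ — which is exactly the leading term $K_hP_h[\ldots]$ displayed in (\ref{va}) — whereas the term $a^{\alpha-1}\lambda(K_h-I)v$ is $O(a^{\alpha-1})$. Since $\|g_q\|_{L^\infty}\le 2^{\alpha+1}$ and $\zeta^\ast$, $K_hP_h$ are bounded, these limits are bounded uniformly in $q$; it then suffices to fix $\rho$ strictly larger than this uniform bound and $a_0$ small enough that the $o(1)$ and $O(a^{\alpha-1})$ remainders stay below $1$, so that $F_a$ maps $B(\rho)$ into itself.

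For the Lipschitz estimate I would take $V_1,V_2\in B(\rho)$, write $u_i=a\sin\xi+a^\alpha v_i$ with arguments $\delta_i^{\pm}$, and use that $V\in C^2$. The mean value theorem gives $V^\prime(\delta_1^{+})-V^\prime(\delta_2^{+})=V^{\prime\prime}(\theta)\,(\delta_1^{+}-\delta_2^{+})$ for some $\theta$ between $\delta_1^{+}$ and $\delta_2^{+}$, hence $|\theta|\le Ca$; together with the bound $|V^{\prime\prime}(x)|\le C\,|x|^{\alpha-1}$ near $0$ (extracted from $W^{\prime\prime}(x)=o(|x|^{\alpha-1})$) this yields $|V^{\prime\prime}(\theta)|\le C\,a^{\alpha-1}$, and with $|\delta_1^{+}-\delta_2^{+}|\le 2a^\alpha\|v_1-v_2\|_{L^\infty}$ one obtains $|N(u_1)-N(u_2)|\le C\,a^{2\alpha-1}\|v_1-v_2\|_{L^\infty}$ pointwise. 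Multiplying by $a^{-\alpha}$ and invoking boundedness of $\zeta^\ast$ and $K_hP_h$ produces an $O(a^{\alpha-1})$ Lipschitz bound for the nonlinear parts of both $F_0$ and $F_h$; the extra term of $F_h$ expands as $a^{\alpha-1}[(\lambda_1-\lambda_2)(K_h-I)v_1+\lambda_2(K_h-I)(v_1-v_2)]$ and is likewise $O(a^{\alpha-1})\,\|V_1-V_2\|_\infty$ since $\|v_1\|,|\lambda_2|\le\rho$. Summing the contributions gives the claimed Lipschitz constant $O(a^{\alpha-1})$, uniform in $q$.

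I expect the main obstacle to be the uniformity of these estimates against the limited smoothness of $V$. The gain of the factor $a^{\alpha-1}$ rests entirely on $|V^{\prime\prime}(\theta)|\le C|\theta|^{\alpha-1}$ holding \emph{uniformly} in $\xi$ and $q$, including near the degenerate points where $d^{\pm}(\xi)$ vanishes: there one cannot claim that the linear-in-$a$ part dominates, and the only available control is that $\theta$ itself is $O(a)$, so that the singular factor $|\theta|^{\alpha-1}$ is automatically $O(a^{\alpha-1})$ regardless. Making this rigorous requires deriving the clean neighbourhood estimate $|V^{\prime\prime}(x)|\le C|x|^{\alpha-1}$ for $|x|\le\delta_0$, applying the mean value theorem with only $V\in C^2$ (one cannot use $C^3$, which fails for $\alpha<2$), and checking that every convergence in the self-mapping step is uniform over $v\in B(\rho)$ and $q\in[-\pi,\pi]$ simultaneously. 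Once this uniform bookkeeping of powers of $a$ is in place, the contraction mapping theorem applies for $a$ small and delivers the fixed point of (\ref{fpe}).
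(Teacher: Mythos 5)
Your proposal is correct and follows essentially the same route as the paper: the self-mapping step rests on the scaling $|\delta^{\pm}|\le Ca$ together with the homogeneity of $V_0'$ (the paper uses the slightly cruder bound $\|N(u)\|_{L^\infty}=O(\|u\|_{L^\infty}^{\alpha})$, with constant independent of $\rho$, where you compute the explicit $a\to 0^+$ limit), and the Lipschitz step is exactly the paper's: the mean value inequality for $V'$ combined with the neighbourhood estimate $|V''(x)|\le C|x|^{\alpha-1}$ extracted from $V\in C^2$ and $W''(x)=o(|x|^{\alpha-1})$, yielding the $O(a^{\alpha-1})$ contraction constant uniformly in $q$. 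Your ordering of choices (fix $\rho$ from a $\rho$-independent bound, then shrink $a_0$ depending on $\rho$) matches the paper's and avoids any circularity.
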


\begin{proof}
We first prove that $F_a$ maps a suitable ball
$B(\rho)$ into itself. For notational simplicity 
we shall use the same symbol $C$ for different multiplicative
constants that are all independent of $\rho$, $q$ and $a$.
The assumptions made on $V$ in theorem \ref{existthm}
imply property (\ref{vhertzd}), from which it follows
that $\| V^\prime (u) \|_{L^\infty}=O(\| u \|_{L^\infty}^\alpha )$
as $\| u \|_{L^\infty} \rightarrow 0$, and thus
$\| N (u) \|_{L^\infty}=O(\| u \|_{L^\infty}^\alpha )$
uniformly in $q\in [-\pi , \pi]$.
Consequently, given $\rho \geq 1$ and $a_0 (\rho ) = \rho^{\frac{2}{1-\alpha}}$,
there exists $C>0$
such that for all $a\in (0, a_0 (\rho ) )$, for all $(v,\lambda )\in B(\rho)$,
for all $q\in [-\pi , \pi]$,
\begin{equation}
\| a^{-\alpha}\, N ( a\, \sin{\xi}+ a^\alpha v ) \|_{L^\infty}\leq C,
\end{equation}
\begin{equation}
\| a^{\alpha -1}\, \lambda (K_h -I ) v \|_{L^\infty}\leq C .
\end{equation}
This yields immediately the same type of estimate for $F_a$, namely
\begin{equation}
\| F_a (v,\lambda ) \|_{\infty}\leq C .
\end{equation}
Since $C$ is independent of $\rho$, $F_a$ maps
$B(\rho)$ into itself provided $\rho \geq \rm{Max}(C,1)$ and $a<a_0(\rho )$.

Now we fix a value of $\rho$ such that $F_a$ maps
$B(\rho)$ into itself for small enough $a$ and
estimate the Lipschitz constant of $F_a$.
In what follows
we use the same notation $C$ for different multiplicative
constants independent of $q$ and $a$.
The assumptions made on $V$ in theorem \ref{existthm}
and the mean value inequality imply
$$
| V^\prime (x)-V^\prime (y) | \leq C\, 
|x-y|\, \| (x,y) \|^{\alpha -1}_\infty 
$$
for all $(x,y) \in \mathbb{R}^2$ close to $0$. 
Consequently, one obtains successively for all functions
$u_1,u_2 \in L^\infty (\mathbb{R})$ close to $0$
$$
\| V^\prime (u_1)-V^\prime (u_2) \|_{L^\infty} \leq C\, 
\|u_1-u_2\|_{L^\infty}  \, \| (u_1,u_2) \|^{\alpha -1}_{(L^\infty )^2}, 
$$
$$
\| N (u_1)-N (u_2) \|_{L^\infty} \leq C\, 
\|u_1-u_2\|_{L^\infty}  \, \| (u_1,u_2) \|^{\alpha -1}_{(L^\infty )^2}
$$
uniformly in $q\in [-\pi , \pi]$.
Consequently, there exists $C>0$
such that for all $a$ small enough, for all $(v_i,\lambda_i )\in B(\rho)$,
\begin{equation}
\| a^{-\alpha}\, 
\big( \, N ( a\, \sin{\xi}+ a^\alpha v_1 )
-N ( a\, \sin{\xi}+ a^\alpha v_2 )
\, \big)
 \|_{L^\infty}\leq C\, a^{\alpha -1}\, \|v_1-v_2\|_{L^\infty},
\end{equation}
\begin{equation}
\| F_a (v_1,\lambda_1 ) - F_a (v_2,\lambda_2 )\|_{\infty}\leq C \, a^{\alpha -1}\,
\| (v_1,\lambda_1 )-(v_2,\lambda_2 ) \|_\infty .
\end{equation}

\qquad\end{proof}

As a corollary of lemma \ref{contrac}, one obtains the following result.

\ve

\begin{theorem}
\label{efp}
Fix $B(\rho)$ as in lemma \ref{contrac}. There exists
$a_0 >0$ independent of $q$ such that 
$F_a$ admits a unique fixed point $V(a) = (v_a , \lambda_a)$
in $B(\rho)$ for all $a\in (0,a_0 )$.
Moreover, there exist $M,C>0$ 
independent of $q$
such that the sequence 
$\{ V_n (a)\}_{n \geq 0}$ in $B(\rho)$ defined by
$V_n (a)= F_a^{n+1}(0)$ satisfies
\begin{equation}
\label{approxseq}
\| V_n (a) - V(a) \|_\infty 
\leq M\, 
(C a^{\alpha -1})^{n+1} 
\mbox{ for all }n\geq 0.
\end{equation}
\end{theorem}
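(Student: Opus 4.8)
The plan is to read Theorem \ref{efp} as a direct application of the contraction mapping theorem, with Lemma \ref{contrac} supplying exactly the two hypotheses it requires. First I would observe that $B(\rho)$ is a closed ball in the Banach space $X_h \times \mathbb{R}$ equipped with $\| \cdot \|_\infty$, hence a complete metric space. Lemma \ref{contrac} guarantees, for a fixed admissible $\rho$, that $F_a$ maps $B(\rho)$ into itself for all $a \in (0, a_0(\rho))$, and that its Lipschitz constant on $B(\rho)$ is $O(a^{\alpha -1})$ uniformly in $q$. In particular there is a constant $C>0$, independent of $q$, and some $a_0 > 0$ such that the Lipschitz constant $k_a$ of $F_a$ on $B(\rho)$ satisfies $k_a \leq C\, a^{\alpha -1} < 1$ for all $a \in (0, a_0)$. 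Thus $F_a$ is a genuine contraction of the complete space $B(\rho)$ for every such $a$, and the Banach fixed-point theorem yields a unique fixed point $V(a) = (v_a, \lambda_a) \in B(\rho)$. Since $a_0$, $\rho$ and $C$ are all independent of $q$, so is this conclusion. The automatic regularity $v_a \in D_h$ then follows from the identity (\ref{smoothness}) together with $K_h \in \mathcal{L}(X_h, D_h)$, so that the fixed point does provide a genuine $C^2$ travelling-wave profile.

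For the quantitative bound (\ref{approxseq}), I would use the standard a priori estimate for Picard iterates of a contraction. Because $V(a)$ is fixed by $F_a$ and $V_n(a) = F_a^{n+1}(0)$, iterating the Lipschitz inequality $n+1$ times gives
\[
\| V_n(a) - V(a) \|_\infty = \| F_a^{n+1}(0) - F_a^{n+1}(V(a)) \|_\infty \leq k_a^{\, n+1}\, \| 0 - V(a) \|_\infty .
\]
Since $V(a) \in B(\rho)$ we have $\| V(a) \|_\infty \leq \rho$, whence, with $M = \rho$ and $k_a \leq C\, a^{\alpha -1}$,
\[
\| V_n(a) - V(a) \|_\infty \leq \rho\, (C\, a^{\alpha -1})^{n+1},
\]
which is precisely (\ref{approxseq}) with $M = \rho$. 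All constants inherit their $q$-independence from Lemma \ref{contrac}.

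I do not expect any serious obstacle here: the entire analytic difficulty — controlling the limited-smoothness nonlinearity $N$ and establishing both the self-mapping property and the $O(a^{\alpha -1})$ smallness of the Lipschitz constant \emph{uniformly in} $q$ — has already been absorbed into Lemma \ref{contrac}. The only points demanding care are bookkeeping ones: choosing the single threshold $a_0$ small enough to enforce $C\, a^{\alpha -1} < 1$ simultaneously for every $q \in [-\pi, \pi]$, and tracking that the constant $M$ in the error bound may be taken equal to $\rho$ so that it, too, is $q$-independent. The corollary is therefore essentially a clean packaging of Lemma \ref{contrac}, where the explicit iteration $V_n(a) = F_a^{n+1}(0)$ starting from the origin both recovers the fixed point and, via the geometric rate $(C a^{\alpha -1})^{n+1}$, justifies the successive-approximation scheme used later to expand the travelling-wave profile to arbitrary order in $a$.
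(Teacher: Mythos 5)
Your proposal is correct and takes essentially the same approach as the paper: Lemma \ref{contrac} makes $F_a$ a self-mapping contraction of the complete ball $B(\rho)$ with Lipschitz constant $\kappa \leq C\, a^{\alpha-1} < 1$ for $a$ small (uniformly in $q$), and the Banach fixed-point theorem plus the standard geometric convergence estimate yields (\ref{approxseq}). Your way of extracting the bound, namely $\| F_a^{n+1}(0)-F_a^{n+1}(V(a))\|_\infty \leq \kappa^{n+1}\|V(a)\|_\infty \leq \rho\,(C a^{\alpha-1})^{n+1}$, is a trivially equivalent variant of the paper's classical estimate $\|V_n(a)-V(a)\|_\infty \leq \frac{\kappa^{n+1}}{1-\kappa}\|V_0(a)\|_\infty$.
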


\begin{proof}
According to the estimate of lemma \ref{contrac}, $F_a$ defines
a contraction on $B(\rho)$ for all $a$ small enough, with
Lipschitz constant $\kappa \leq C\, a^{\alpha -1}$,
where $C$ is independent of $q$. In that case, by
the contraction mapping theorem, $F_a$
admits a unique fixed point $V(a)$ in $B(\rho)$ which is exponentially
attracting. Moreover, the classical estimate of the convergence speed yields
$$
\| V_n (a) - V(a) \|_\infty \leq \frac{\kappa^{n+1}}{1-\kappa}\,
\| V_0 (a)  \|_\infty ,
$$
which implies (\ref{approxseq}) for $a\approx 0$.
 
\qquad\end{proof}

Theorem \ref{efp} implies the existence of
nontrivial solutions $(u_a , \omega_a^2 )$
of (\ref{ad}) taking the form
\begin{equation}
\label{ansatzubis}
u_a(\xi)=a\, \sin{\xi}+ a^\alpha v_a(\xi),
\end{equation}
\begin{equation}
\label{oma2}
\omega_a^2 = 1+a^{\alpha -1} \lambda_a.
\end{equation}
This in turn implies the existence of periodic travelling wave solutions of (\ref{nc})
taking the form (\ref{xleadingorder}).
In order to complete the proof of theorem \ref{existthm},
we now need to determine the principal parts of 
$v_a$ and $\lambda_a$ for $a\approx 0$.
Considering $F_a (0)=(v_0 (a), \lambda_0 (a))$, we have
by estimate (\ref{approxseq})
\begin{equation}
\label{estim1}
\| v_0 (a) - v_a \|_{L^\infty} + |\lambda_0 (a) - \lambda_a|=  O(a^{\alpha -1}),
\end{equation}
hence we are led to compute $v_0(a)$ and $\lambda_0 (a)$.
One finds
\begin{equation}
\label{cl0}
\lambda_0 (a) = -a^{-\alpha}\, \zeta^\ast [\, N( a\, \sin{\xi} ) \, ] ,
\end{equation}
\begin{equation}
\label{fv0}
v_0 (a)=  
a^{-\alpha}\, K_h P_h N( a\, \sin{\xi} ) ,
\end{equation}
where $K_h$ is defined by (\ref{kh}). 
Keeping in mind expansion (\ref{vhertzd}) of $V^\prime $,
the principal part of the nonlinear map $N$ reads
\begin{equation}
\label{defn0}
[N_0(u)](\xi)=V_0^\prime (u(\xi +q)-u(\xi ))-V_0^\prime (u(\xi )-u(\xi -q)),
\end{equation}
where $V_0^\prime (x)=- |x|^{\alpha}\, H(-x)$.
For all $a>0$ one has $V_0^\prime (a x) = a^\alpha V_0^\prime (x) $,
and in the same way $N_0 (a\, u)=a^\alpha N_0 (u)$.
Using this property and expansion (\ref{vhertzd}) in equations (\ref{cl0})-(\ref{fv0}),
one obtains consequently as $a\rightarrow 0$
\begin{equation}
\label{l0}
\lambda_0 (a) = - \zeta^\ast [\, N_0(  \sin{\xi} ) \, ] +o(1),
\end{equation}
\begin{equation}
\label{v0}
v_0 (a)=   K_h P_h N_0(  \sin{\xi} ) +o(1)
\mbox{ in } D_h.
\end{equation}
Estimate (\ref{estim1}) and expansion (\ref{v0}) yield
expansion (\ref{va}) of theorem \ref{existthm}.
Now there remains to recover expansion (\ref{wleadingorder}), which
requires to compute leading order terms of (\ref{l0}).
We can restrict our computations 
to the case $q \in [0, \pi ]$, which allows one to 
recover the results for $q \in [-\pi ,0]$
by symmetry considerations (see section \ref{symprop}).
After lengthy but straightforward computations
based on classical trigonometric identities, one obtains
for all $q\in [0,\pi ]$
\begin{eqnarray*}
&&\zeta^\ast [\, V_0^\prime (\sin{(\xi +q)}-\sin{\xi}) \, ] \\
 &=&
-\frac{2}{\pi}\, [2 \sin{(\frac{q}{2})}]^\alpha  
\, \int_{\frac{\pi}{2}-\frac{q}{2}}^{\pi }{|\cos{(\xi + \frac{q}{2})}|^{\alpha}\, \sin{\xi}\, d\xi}
 \\
 &=&
-\frac{2}{\pi}\, [2 \sin{(\frac{q}{2})}]^\alpha 
\big(\, 
\sin{(\frac{q}{2})}\, \int_{\frac{\pi}{2}}^{\pi + \frac{q}{2}}{|\cos{s}|^{\alpha +1}\, ds}
+\frac{1}{\alpha +1}[\cos{(\frac{q}{2})}]^{\alpha +2}
\, \big) ,
\end{eqnarray*}
and in the same way
\begin{eqnarray*}
&&
\zeta^\ast [\, V_0^\prime ( \sin{\xi}-\sin{(\xi -q})) \, ] \\
&=&
\frac{2}{\pi}\, [2 \sin{(\frac{q}{2})}]^\alpha 
\big(\, 
\sin{(\frac{q}{2})}\, \int_{\frac{q}{2} }^{\frac{\pi}{2}}{(\cos{s})^{\alpha +1}\, ds}
-\frac{1}{\alpha +1}[\cos{(\frac{q}{2})}]^{\alpha +2}
\, \big) .
\end{eqnarray*}
Combining (\ref{defn0})-(\ref{l0}) with the above identities, one finds
after elementary computations
\begin{equation}
\label{l0bis}
\lambda_0 (a) =   [2 \sin{(\frac{q}{2})}]^{\alpha +1} c_\alpha +o(1),
\end{equation}
where $c_\alpha$ is defined in equation (\ref{wallis}).
Combining expression (\ref{oma2}) with estimate (\ref{estim1})
and expansion (\ref{l0bis}) yields finally for $q\in [0,\pi ]$
\begin{equation}
\label{oma2bis}
\omega_a^2 = 1+a^{\alpha -1}  [2 \sin{(\frac{q}{2})}]^{\alpha +1} c_\alpha + o(a^{\alpha -1} ) .
\end{equation}
Expressing $c_\alpha$ using Euler's Gamma function as in
(\ref{thec}), one finally obtains expansion (\ref{wleadingorder}) of theorem
\ref{existthm} for $q\in [0,\pi ]$.

\ve

\begin{remark}
As shown by estimate (\ref{approxseq}), computing $V_n (a)$ for
$n$ large enough allows one to obtain approximations of the fixed point $(v_a , \lambda_a)$
at any order in $a \approx 0$, which provides 
approximations of the wave profile (\ref{xleadingorder})
and wave frequency at arbitrary order. 
However the expression of $V_n (a)$ becomes
highly complex already at $n=2$, hence this series of explicit
approximations doesn't seem useful in practice. 
Nevertheless, it is interesting to see that approximations at
arbitrary order in $a$ can be obtained despite the limited smoothness
of $V$ at the origin.
\end{remark}

\ve

\begin{remark}
It is again interesting at this stage to reinterpret the results in the framework of
Lyapunov-Schmidt reduction. 
Adding an arbitrary
phase $\phi$ to solution (\ref{xleadingorder}) and introducing $A=a\, e^{i \phi}$,
the nonlinear dispersion relation (\ref{wleadingorder}) describes
nontrivial solutions of the bifurcation equation
\begin{equation}
\label{eqbifurc}
A\, (\omega -1 ) - \frac{2}{\tau_0}\, A\, |A|^{\alpha -1}\, |\sin{(\frac{q}{2})}|^{\alpha +1}+\rm{h.o.t.}=0 .
\end{equation}
More precisely, solutions (\ref{xleadingorder})-(\ref{wleadingorder}) and their
phase shifts appear through a pitchfork bifurcation with $SO(2)$ symmetry
which occurs at $(A,\omega )=(0,1)$. Note that the classical 
pitchfork bifurcation equation corresponds to the case
$\alpha =3$ of (\ref{eqbifurc}), and for $\alpha >1$ it can be recovered from (\ref{eqbifurc})
after the $C^0$ change of variable $Z=|A|^{\frac{\alpha -3}{2}}\, A$.
\end{remark}

\subsection{\label{symprop}Symmetries}

In this section we point out some symmetry properties of the periodic travelling waves.
We begin by examining more closely their dependency on the wavenumber $q$,
hence we will denote the map (\ref{defn}) by $N_q$, the map $F_a$ by $F_{a,q}$
and the fixed point of $F_{a,q}$ in $B(\rho)$ by $V(a,q) = (v_{a,q} , \lambda_{a,q})$.
Let us introduce the operator $[\tau_\pi u](\xi)=u(\xi +\pi)$ corresponding to  
half-period space-shift and the symmetry 
$S \in \mathcal{L}(X)$ defined by
$S=-\tau_\pi$. One can check that
\begin{equation}
\label{symn}
S\, N_q ( a\, \sin{\xi}+ a^\alpha v )=N_{-q} ( a\, \sin{\xi}+ a^\alpha S\, v ).
\end{equation}
Consequently, from the identity $(v_{a,q} , \lambda_{a,q})=F_{a,q}(v_{a,q} , \lambda_{a,q})$
and definition (\ref{adcentral})-(\ref{adhyp}) of $F_a$
we obtain
$$
(S\, v_{a,q}, \lambda_{a,q} )= F_{a,-q}(S\, v_{a,q}, \lambda_{a,q} )
$$
since $S$ commutes with $K_h$ and $P\, S=S\, P=P$.
By the uniqueness of the fixed point of $F_{a,-q}$ in $B(\rho)$, it
follows
\begin{equation}
\label{symq}
v_{a,-q}=S\, v_{a,q}, \ \ \ \lambda_{a,-q}=\lambda_{a,q}.
\end{equation}
In particular, expression (\ref{oma2bis}) becomes for all  $q\in [-\pi,\pi ]$
\begin{equation}
\label{oma2bisbis}
\omega_a^2 = 1+a^{\alpha -1}    |2\, \sin{(\frac{q}{2})}|^{\alpha +1} c_\alpha + o(a^{\alpha -1} ) ,
\end{equation}
hence one finally recovers expansion (\ref{wleadingorder}) of theorem
\ref{existthm} for all $q\in [-\pi,\pi ]$.

In what follows we concentrate on the particular case $q=\pi$. 
Since functions belonging to $X$ are $2\pi$-periodic, the operators $N_q$ and
$F_{a,q}$ (acting respectively on $X$ and $X_h \times \mathbb{R}$)
are $2\pi$-periodic in $q$, and the same holds true for 
$v_{a,q}, \lambda_{a,q}$. Consequently, property (\ref{symq}) yields
$v_{a,\pi}=S\, v_{a,\pi}$, i.e. for all $\xi \in \mathbb{R}$ one has 
\begin{equation}
\label{symqpi}
v_{a,\pi} (\xi + \pi)=-v_{a,\pi} (\xi).
\end{equation}
It follows that for $q=\pi$, the solution (\ref{xleadingorder}) takes the form 
\begin{equation}
\label{devbin}
x_n (t)= a\, (-1)^{n+1}\, (\sin{( \omega_a t)}+a^{\alpha -1}  v_a (\omega_a t)),
\end{equation}
i.e. it corresponds to a binary oscillation
for which nearest-neighbours oscillate out of phase with opposite displacements.

Note that there exists another simple
method to obtain binary oscillations in the special case $V=V_0$, without  restriction to the small amplitude limit.
One can find solutions
of (\ref{nc}) of the form
$
x_n (t)= (-1)^n\, a(t)
$
after a short discussion with respect to the sign of $a(t)$ and parity of $n$.
This expression defines a solution of (\ref{nc}) if and only if $a$ satisfies the integrable equation 
\begin{equation}
\label{inteq}
\frac{d^2 a}{dt^2}+a+2^\alpha a \, |a|^{\alpha -1}=0,
\end{equation}
the phase space of which is filled by periodic orbits.

\section{\label{numeric}Numerical comparison of the DpS model and Newton's cradle}

In this section we numerically analyze how the dynamics of the DpS equation
approximates the one of the nonlinear chain (\ref{nc}), 
in the case of classical Hertz's contact law ($V=V_0$, $\alpha = 3/2$) and for
some classes of initial conditions. 
We start our investigations with initial conditions leading to periodic travelling waves.
In section \ref{fixedpoint} we have proved the existence of exact periodic travelling waves
of (\ref{nc}) whose profile and velocity are determined at leading order by the DpS
equation in the small amplitude limit. However these results do not indicate the range
of validity of the asymptotic analysis and do not provide informations on the local
dynamics around the travelling waves. We examine such questions in section
\ref{stable} for stable travelling waves and in section \ref{unstable} for unstable ones.
As we shall see, system (\ref{nc}) and the DpS equation
display similar features concerning 
the occurence of modulational instability, which yields the formation of discrete breathers.
A more precise study of these localized structures requires 
well chosen localized initial conditions. In section \ref{loc}, we use
initial conditions deduced from the study of localized standing waves
of the DpS equation (section \ref{twsol}) in order to generate 
static and travelling breathers.

\ve

Let us now describe our general numerical procedure
before going further.
We consider a given 
initial condition $\{A_n (0) \}_{n \in \mathbb{Z}}$ and
use ansatz (\ref{ansatz}) to determine an initial
condition for (\ref{nc})
(computing $\dot{x}_n (0)$ requires the knowledge of $\partial_\tau A_n (0)$
which is determined by the case $\tau=0$ of (\ref{dfnls})). 
Then we integrate (\ref{nc}) and (\ref{dfnls}) numerically
(or use an explicit solution of (\ref{dfnls}) when available)
and compare the solution of (\ref{nc}) with the ansatz (\ref{ansatz})
at subsequent times. In what follows the corresponding solution of (\ref{nc})
will be referred to as the {\em exact solution}, and the 
{\em approximate solution} will denote the ansatz (\ref{ansatz}) computed with the
DpS equation.

We realize different
tests for chains ranging from $50$ or $200$ beads with periodic boundary conditions. 
Numerical integrations are performed using the standard ode solver of the
software package Scilab. The numerical integrator doesn't exactly conserve
the Hamiltonian (\ref{ham}), but in all simulations we have checked that 
the relative errors $|{\mathcal H }(t)- {\mathcal H }(0)|/{\mathcal H}(0)$
are less than $3,\! 5.10^{-4}$, this upper bound corresponding to simulations over
long times of the order of
$540$ periods of the linear oscillators (section \ref{loc}). 

\subsection{\label{stable}Approximation of stable travelling waves}

We begin by computing periodic travelling waves in the small amplitude regime 
in which the weakly nonlinear analysis of section \ref{gdnls} has been performed.
We normalize the travelling wave solution (\ref{ansatz3}) by fixing $R=1$,
so that $a=2 \epsilon$ in ansatz (\ref{ansatzapprox}).
For the first numerical run, we choose
(\ref{ansatzapprox}) as an initial condition,
with $\phi=0$, $q=\pi /5$ and $\omega_{\rm{tw}}=1.01$, which yields
$a\approx 2.10^{-2}$ in the nonlinear dispersion relation (\ref{freq}). These parameter values
correspond to a weakly-nonlinear regime, since
the maximal value of the nonlinear interaction forces $| V_0^\prime (x_{n+1}^{\rm{tw}} - x_n^{\rm{tw}} )  |$
estimated with (\ref{ansatzapprox}) is close to $a^{3/2}/2$, which corresponds 
to seven percent of the maximal linear restoring force $a$.
As shown by figure \ref{plot4-5} (left plot),
the agreement between the exact and approximate solutions
is excellent, even over long times corresponding to
$100$ multiples of the wave period $T_{\rm{tw}}=2\pi /\omega_{\rm{tw}}$.
More precisely, the relative error between exact and approximate solutions
(measured with the supremum norm) is less than five percent at the final
time of computation (figure \ref{plot4-5}, right plot).
These results show that the wave profile remains almost sinusoidal.
However, nonlinear effects cannot be neglected 
on the timescales we have considered, because they still affect the wave velocities. 
This appears clearly in the right plot of figure \ref{plotcomplin}, which
compares the exact and approximate solutions to the linear wave with
same amplitude $a$ and frequency $\omega =1 < \omega_{\rm{tw}}$.
This wave has a lower velocity than the exact solution, so that it 
becomes approximately out of phase at time $t_0=333.7$.
Nonlinear selection of the wave velocity is correctly captured by the DpS equation, 
since the graphs of the exact and approximate solutions
are almost superposed.

In a second numerical run,
we modify the previous initial condition by setting 
\begin{equation}
\label{perturb}
A_n (0 )=
(1+\rho_n^{(1)} )\, \cos{(q n)}
-i\, (1+\rho_n^{(2)} )\, \sin{(q n)},
\end{equation}
where $\rho_n^{(1)}, \rho_n^{(2)} \in [-1/2 , 1/2]$ are
uniformly distributed random variables. 
The mismatch between exact and approximate solutions
increases significantly compared to figure \ref{plot4-5} 
(see figure \ref{plotcomplin}, left plot), but the approximation
remains very satisfactory given the large time interval 
considered. A better agreement between the exact and approximate solutions
could probably be achieved by taking into account the higher order correction $R_n$
to (\ref{ansatz}) provided by (\ref{calculapp}).
Besides the comparison between exact and approximate
solutions, this numerical experiment reveals the
stability of the corresponding travelling waves 
of equations (\ref{nc}) and (\ref{dfnls}), at least
over $100$ periods of the linear oscillators
(however instabilities might occur on longer timescales).
As we shall see later, fast modulational instabilities can show up 
for other choices of wavenumbers $q$.

\begin{figure}[!h]
\psfrag{n}[0.9]{\huge $n$}
\psfrag{x}[1][Bl]{\huge $x_n(t)$}
\psfrag{t}[0.9]{\huge $t$}
\psfrag{er}[1][Bl]{\huge $E(t)$}
\begin{center}
\includegraphics[angle=-90,scale=0.22]{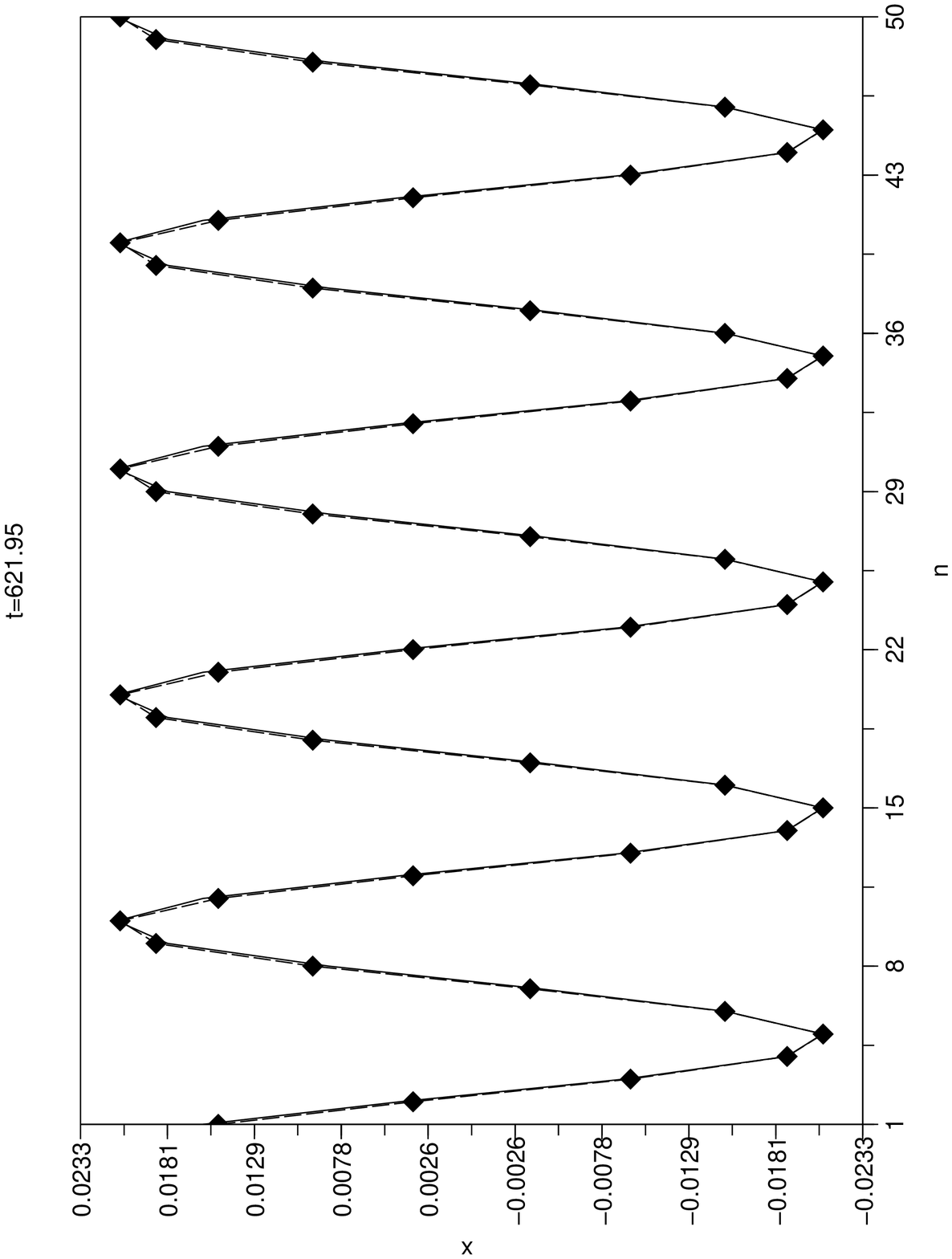}
\includegraphics[angle=-90,scale=0.22]{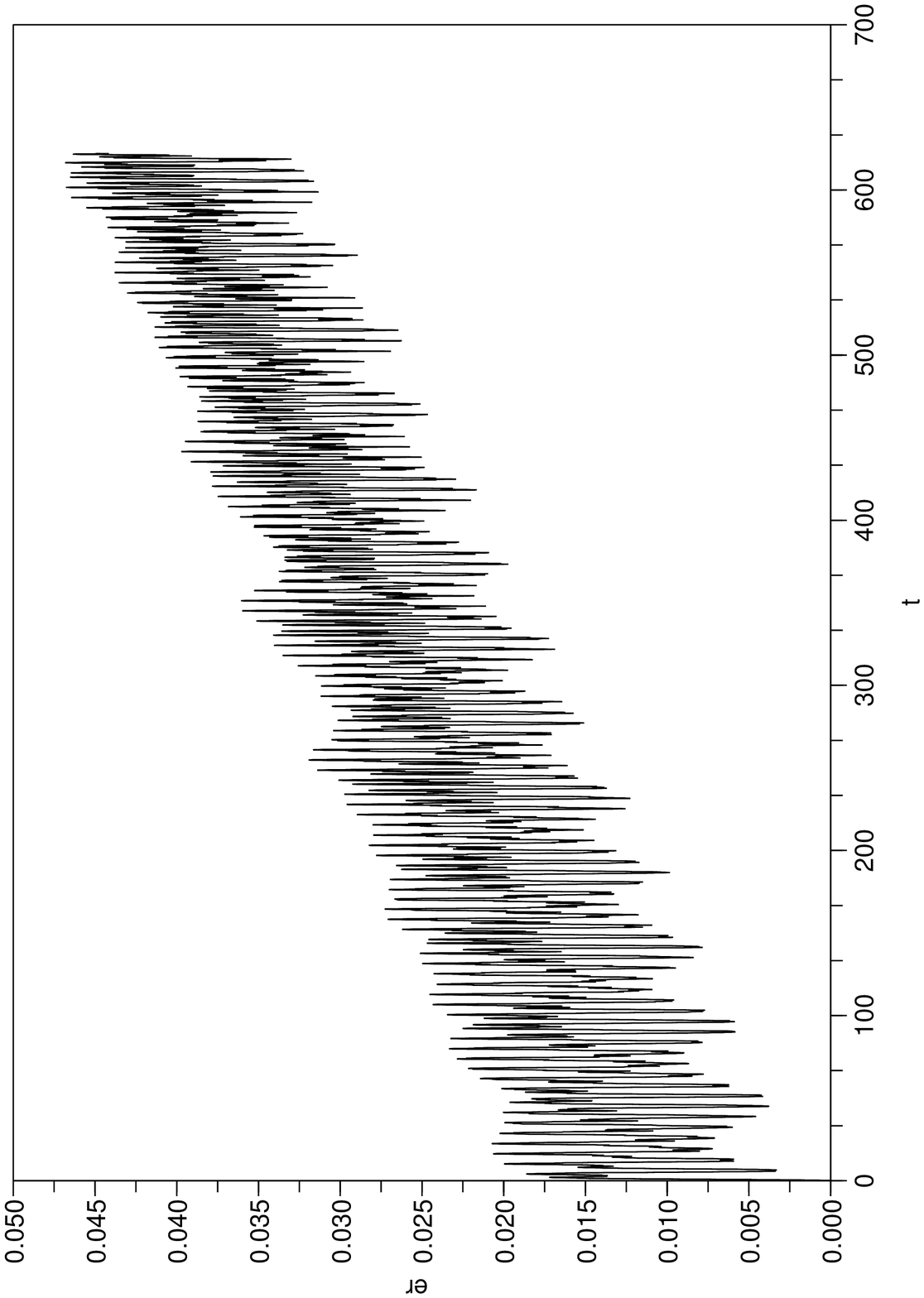}
\end{center} 
\caption{\label{plot4-5} 
Left plot~: small amplitude exact solution (continuous line) and approximate solution (dash-dot line)
for an initial condition corresponding to ansatz (\ref{ansatzapprox}),
with $\phi=0$, $q=\pi /5$ and $\omega_{\rm{tw}}=1.01$. Bead displacements are plotted
at a time $t=621.95$ of the order of $100$ periods of the linear oscillators.
The agreement is excellent since both graphs are almost indistinguishable.
Right plot~: relative error $E(t)=a^{-1}\, \| x_n^{\rm{tw}} (t) - x_n(t) \|_{\infty}$ between
exact and approximate solutions.}
\end{figure}

\begin{figure}[!h]
\psfrag{n}[0.9]{\huge $n$}
\psfrag{x}[1][Bl]{\huge $x_n(t)$}
\begin{center}
\includegraphics[angle=-90,scale=0.22]{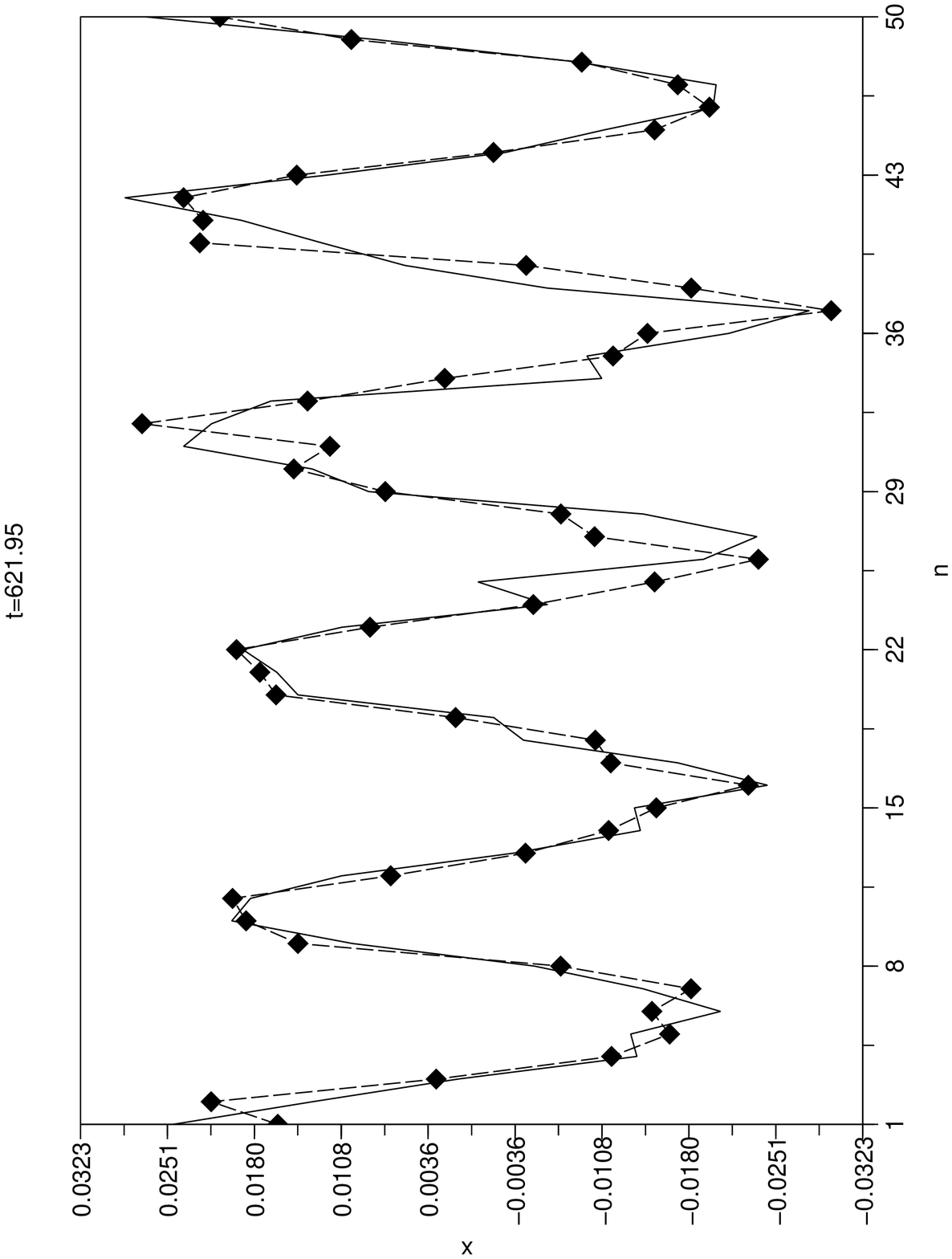}
\includegraphics[angle=-90,scale=0.22]{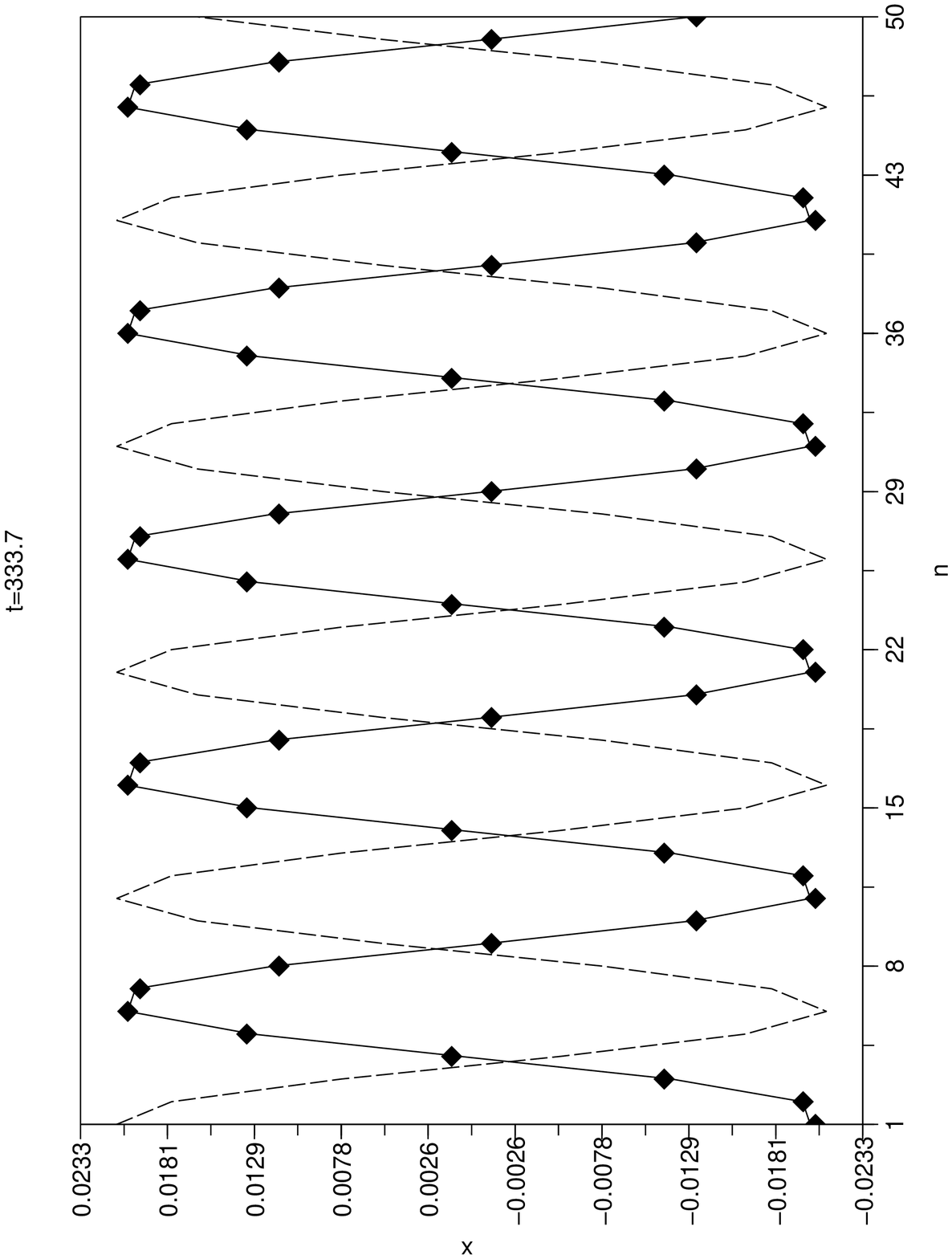}
\end{center} 
\caption{\label{plotcomplin}
In the left plot, the
same experiment as in figure \ref{plot4-5} is performed
with a random noise added to the initial
condition (equation (\ref{perturb})). 
The right plot compares the exact and approximate solutions to the linear wave 
$x_n (t)=a\, \cos{(q\, n - t)}$ solution of equation (\ref{nc}) linearized at $x_n =0$. 
Comparison is made at time $t_0=333.7$.
This wave has a nearly identical profile (dashed line) but a lower velocity, so that it 
becomes out of phase with the exact solution at $t=t_0$. On the contrary,
the approximate solution deduced from the nonlinear DpS equation captures the correct wave velocity
with excellent precision, and the graphs of the exact and approximate solutions
(continuous and dash-dot lines) are almost superposed.}
\end{figure}

Next we again choose 
ansatz (\ref{ansatzapprox}) with $q=\pi /5$ as an initial condition, but increase the frequency up to
$\omega_{\rm{tw}}=1.1$, which yields $a\approx 2.11$ in (\ref{freq}). In that case,
the maximal values of the nonlinear interaction force and the linear restoring force 
estimated with (\ref{ansatzapprox}) are of the same order, i.e. this initial
condition should correspond to a fully nonlinear regime. 
As shown by figure \ref{plot6-7}, the initial condition still generates a periodic travelling wave
that is qualitatively well described by the DpS equation. 
Quite surprisingly, the waveform remains nearly sinusoidal and close to the ansatz (\ref{ansatzapprox})
(see figure \ref{plot2-3}). However the ansatz doesn't accurately describe the wave velocity,
as revealed by the shift between the exact and approximate solutions.

\begin{figure}[!h]
\begin{center}
\includegraphics[scale=0.30]{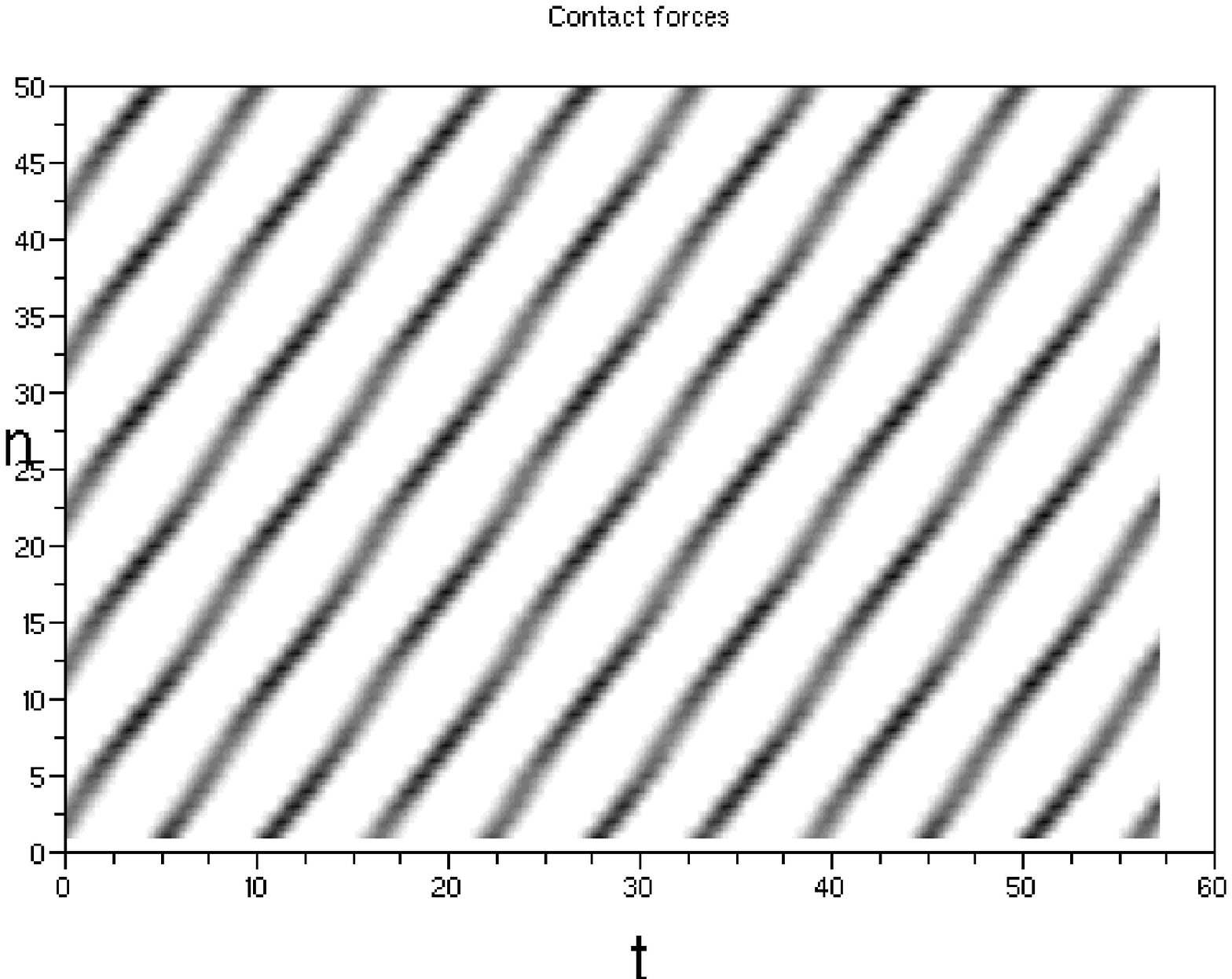}
\includegraphics[scale=0.30]{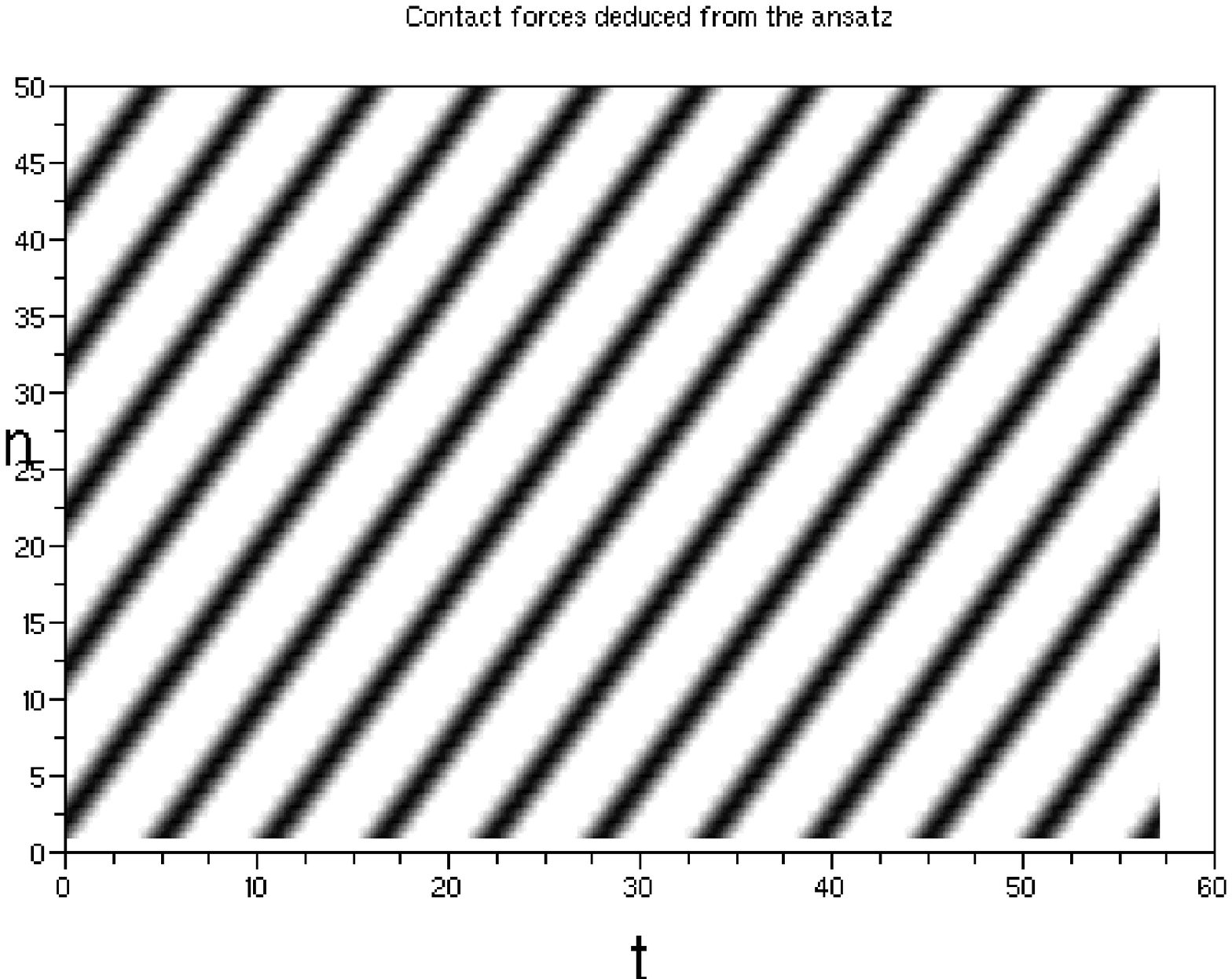}
\end{center} 
\caption{\label{plot6-7} 
Space-time diagram showing the interaction forces $V_0^\prime (x_{n+1}-x_n)$, for
an initial condition given by ansatz (\ref{ansatzapprox}) with $q=\pi /5$ and
$\omega_{\rm{tw}}=1.1$. Forces are represented 
in grey levels, white corresponding to vanishing interactions (i.e. beads not in contact)
and black to a minimal negative value of the contact force. The left plot corresponds to
the exact solution, taking the form of a periodic travelling wave with  almost constant velocity.
The right plot shows the approximate solution.}
\end{figure}

\begin{figure}[!h]
\psfrag{n}[0.9]{\huge $n$}
\psfrag{x}[1][Bl]{\huge $x_n(t)$}
\begin{center}
\includegraphics[angle=-90,scale=0.22]{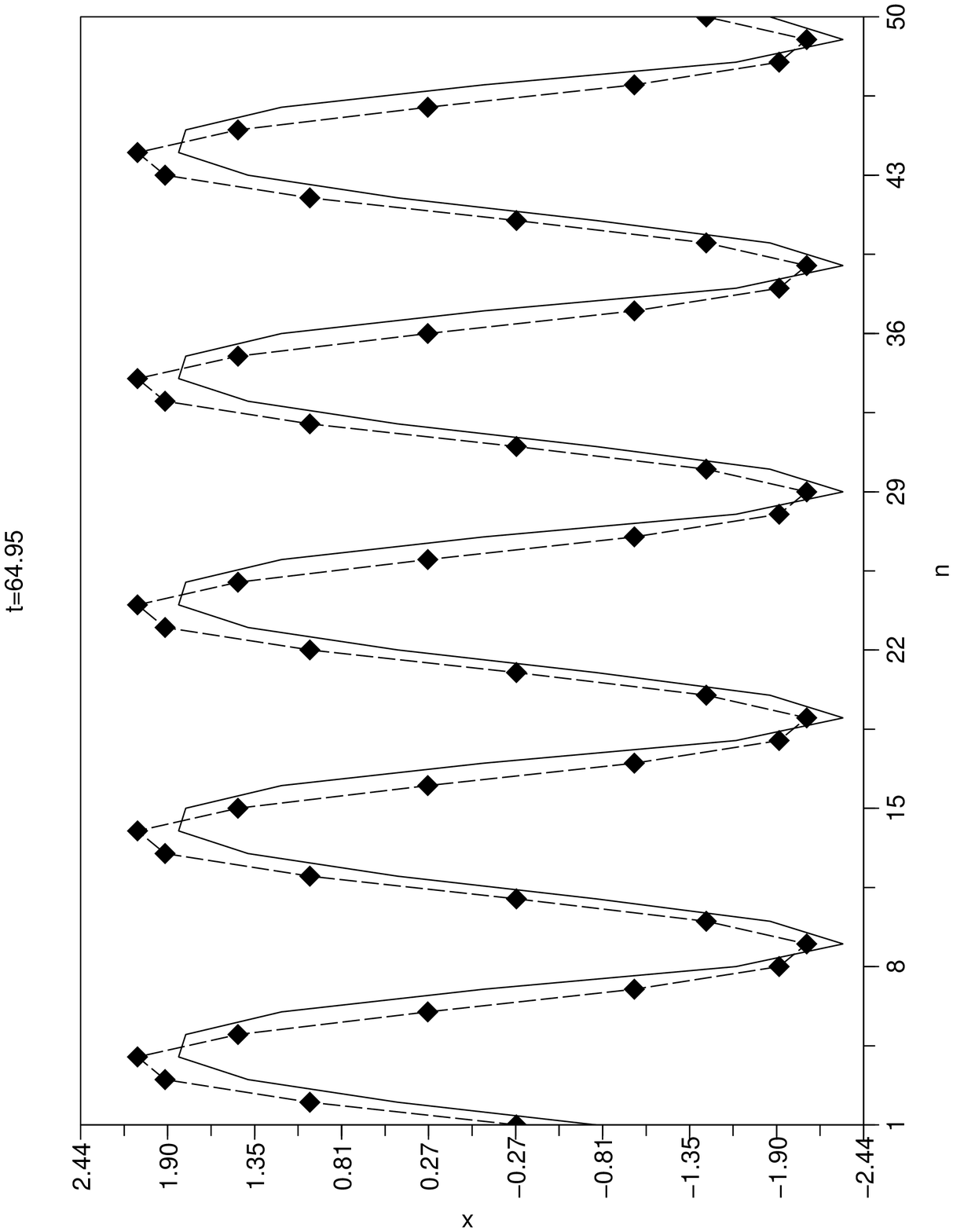}
\includegraphics[angle=-90,scale=0.22]{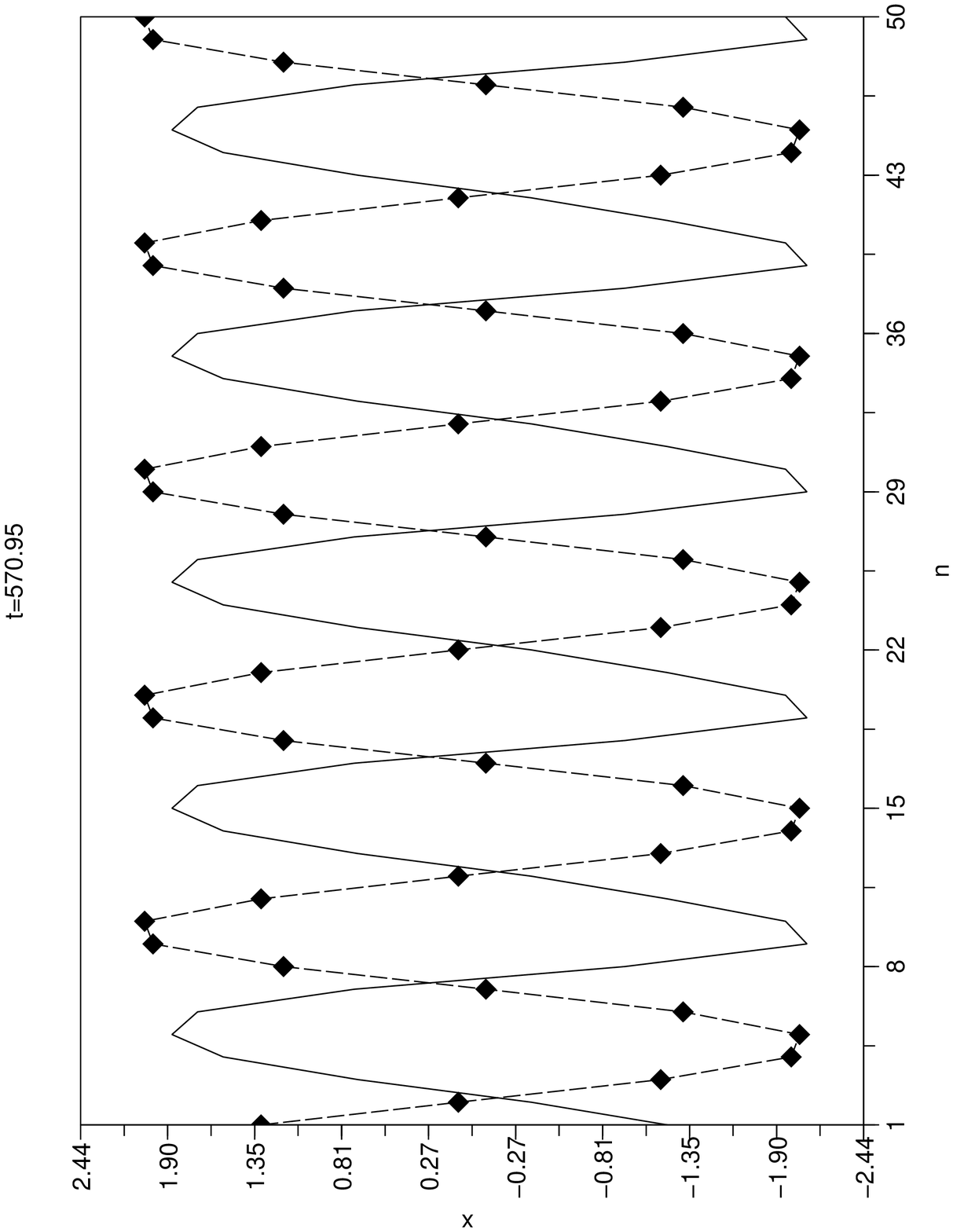}
\end{center} 
\caption{\label{plot2-3} 
Wave profiles of the exact (continuous line) and approximate (dash-dot line) solutions
in the numerical experiment of figure \ref{plot6-7}, shown at two different times
$t=64.95$ (left) and $t=570.95$ (right).
For the exact solution, the travelling wave motion is 
coupled to small collective in-phase oscillations, which explains the vertical shift of
the profile. The wave velocity is less accurately described by the DpS equation
in this regime, so that a small shift of the two profiles is already present at 
$t=64.95$, and the exact and approximate solutions are out of phase at $t=570.95$.}
\end{figure}

In the next computation we 
keep the same parameter values except we increase the frequency up to 
$\omega_{\rm{tw}}=1.5$, so that $a\approx 52.9$. This 
correspond to a highly nonlinear regime, where one can expect
that the DpS equation will not properly describe the dynamics
of (\ref{nc}). The two plots of figure \ref{plot1v5st} reveal indeed
notable differences between 
the travelling wave patterns of the exact and approximate solutions.
The initial condition generates in fact a pulsating travelling
wave, showing internal oscillations that appear as spots
along the wavefronts of figures \ref{plot1v5st} and \ref{plot1v5}.
This internal breathing appears at the beginning of the simulation
(around $t=1$) and releases some energy in the form of small
amplitude travelling waves with lower velocity. These waves
subsequently collide with the main travelling wave, generating
travelling waves with negative velocities. These phenomena
are absent in the DpS equation as shown by the right plot of
figure \ref{plot1v5st}. In addition the wave profile is not any more sinusoidal
(see figure \ref{plot1v5}, right plot).

\begin{figure}[!h]
\begin{center}
\includegraphics[scale=0.30]{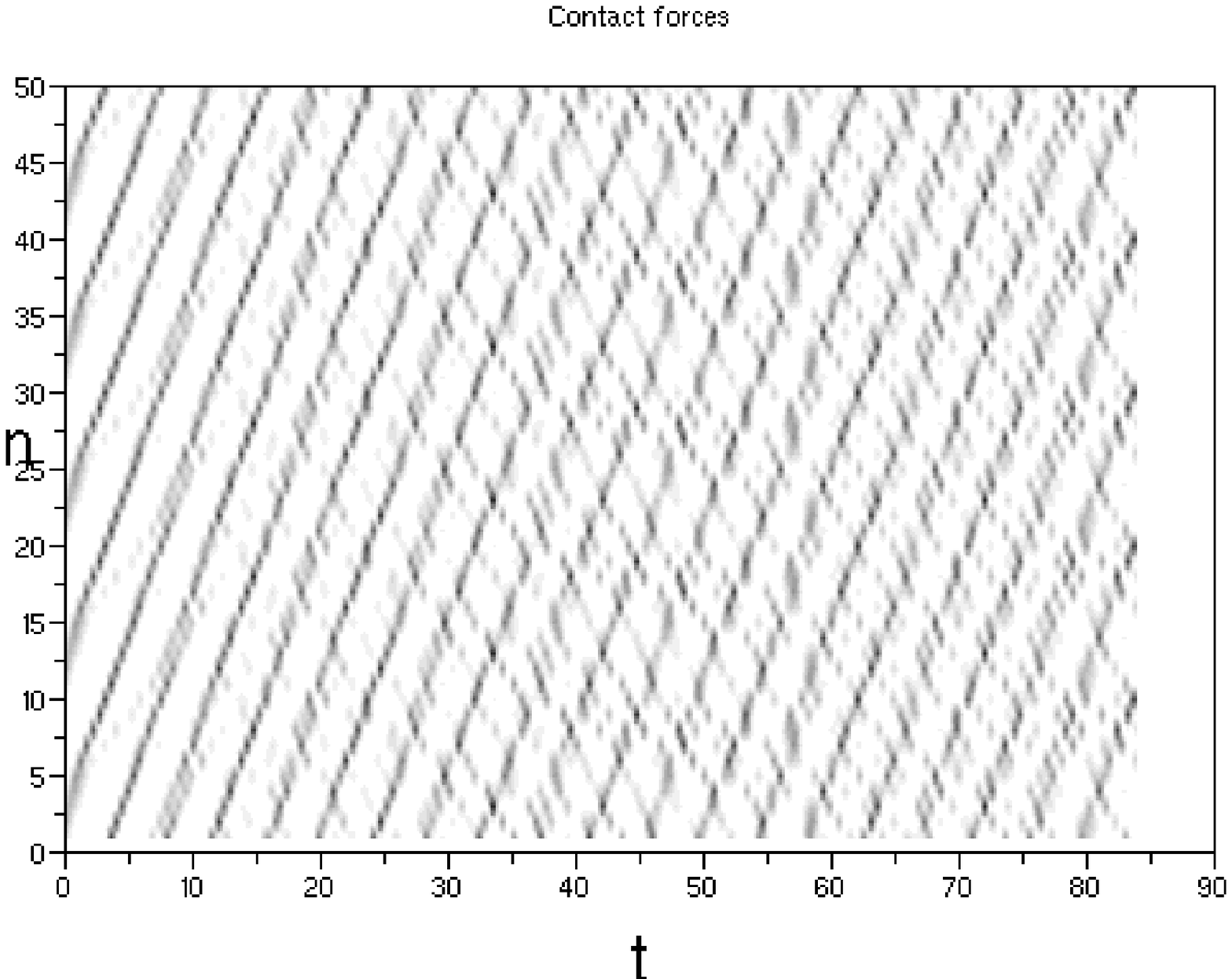}
\includegraphics[scale=0.30]{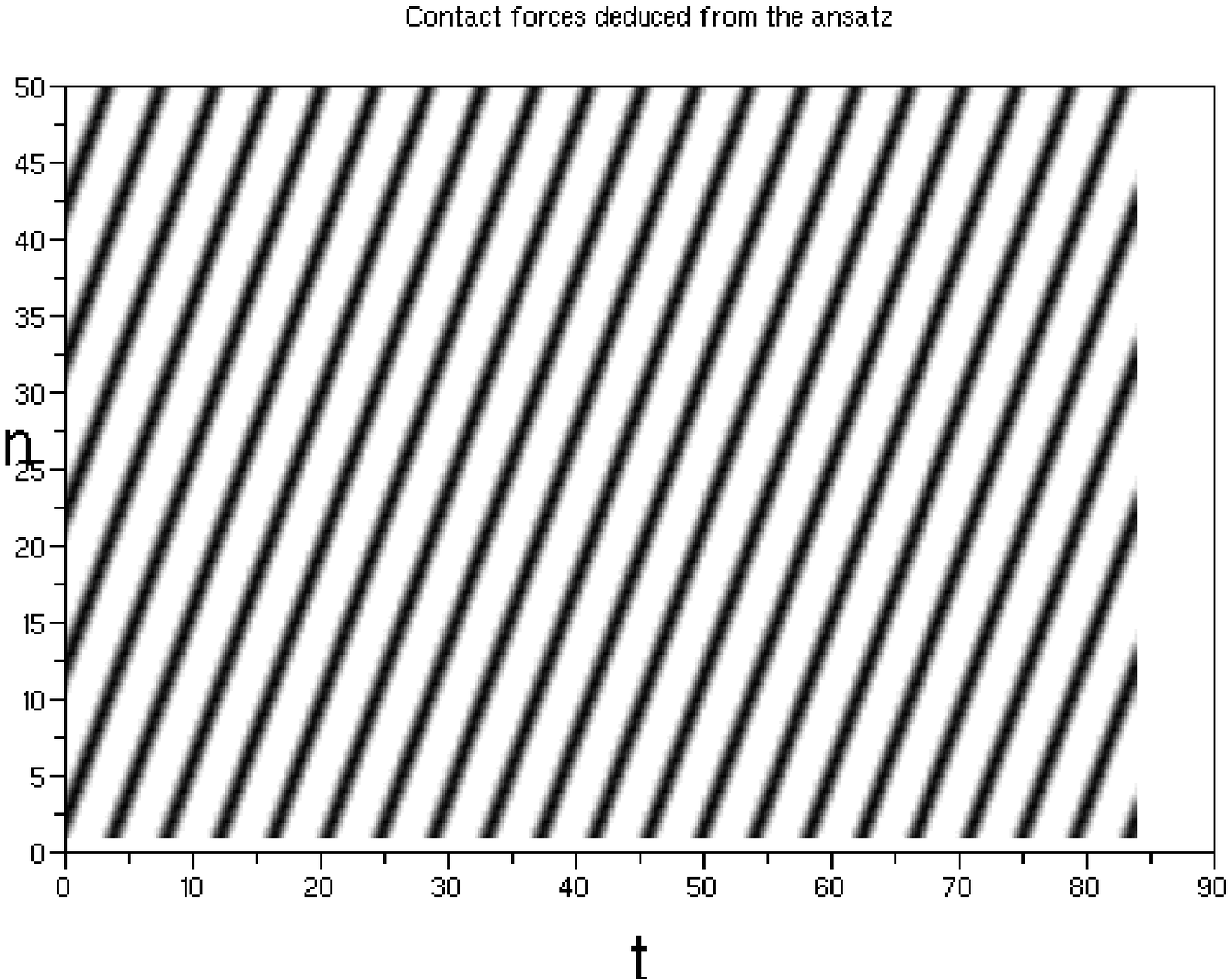}
\end{center} 
\caption{\label{plot1v5st} 
Same computation as in figure \ref{plot6-7}, but this time
in a highly nonlinear regime where the
frequency parameter of the initial ansatz is increased to 
$\omega_{\rm{tw}} = 1.5$. In that case the travelling wave pattern
of the exact solution (left plot) develops a complex 
small-scale structure that is absent in the approximate solution (right plot).}
\end{figure}

\begin{figure}[!h]
\begin{center}
\psfrag{n}[0.9]{\small $n$}
\psfrag{x}[1][Bl]{\small $x_n(t)$}
\includegraphics[scale=0.30]{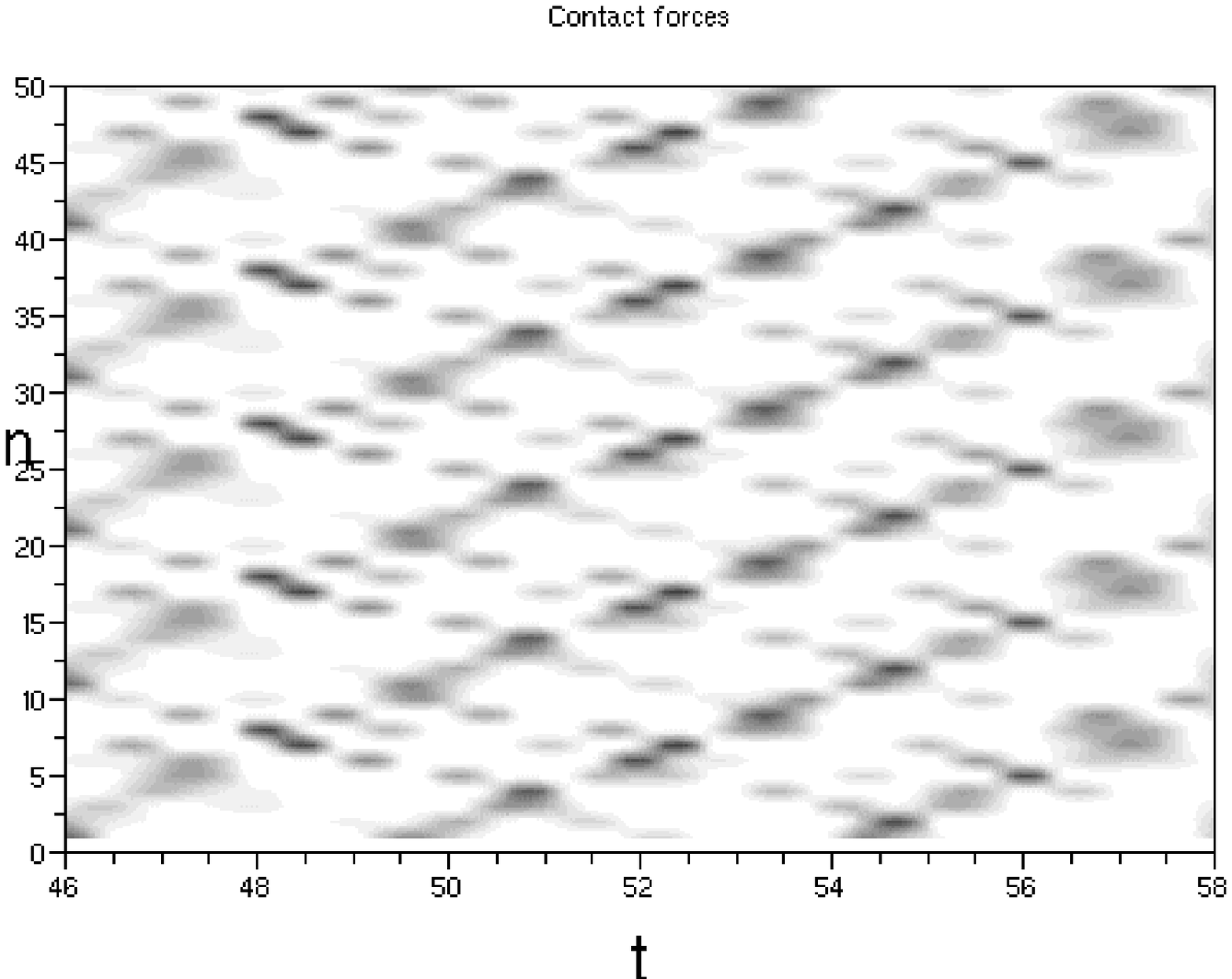}
\includegraphics[scale=0.30]{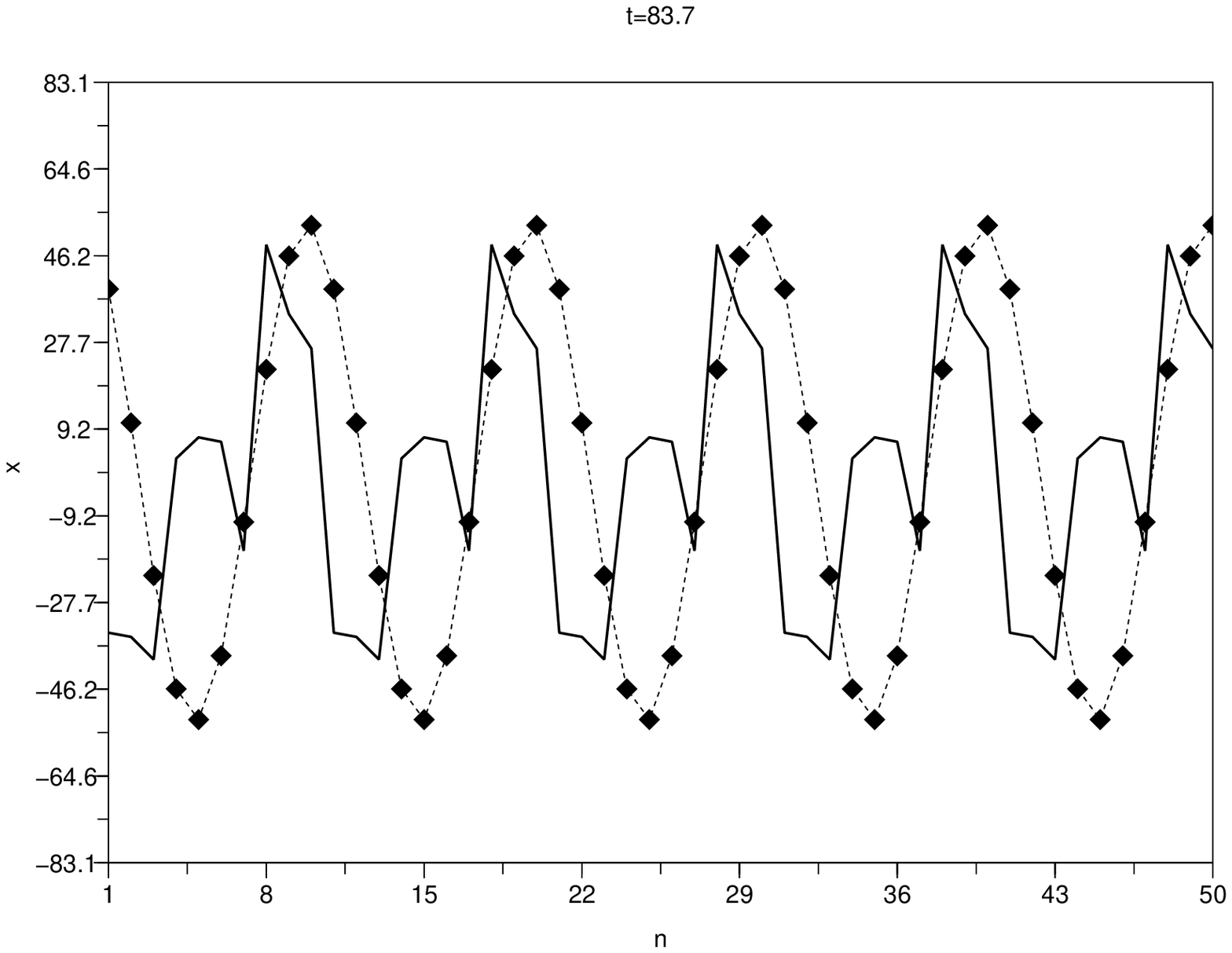}
\end{center} 
\caption{\label{plot1v5}
Left plot~: zoom on the left-side space-time diagram of 
figure \ref{plot1v5st}, showing the small-scale structure
superposed on the travelling wave pattern
of the exact solution. Right plot~: comparison between bead displacements
for the exact (continuous line) and approximate solution (dots)
at $t=83.7$.}
\end{figure}

As a conclusion, we have seen that the 
initial condition obtained with ansatz (\ref{ansatzapprox}) and $q=\pi /5$
generates a stable travelling wave that is qualitatively
well described by the DpS equation if one avoids the
highly nonlinear regime. Moreover, in the weakly nonlinear regime
the DpS equation describes the travelling wave with
excellent precision. 

\subsection{\label{unstable}Modulational instability and breathers}

In what follows we perform the same kind of simulations
but increase the wavenumber $q$ above $\pi /2$, which yields a completely
different dynamical behaviour corresponding to modulational
instability. We first determine an unperturbed initial condition using ansatz
(\ref{ansatzapprox})-(\ref{freq}) with $q=4\pi /5$ and $\omega_{\rm{tw}}=1.1$,
which corresponds to a small amplitude $\epsilon \approx 3,\! 8.10^{-3}$
in (\ref{ansatzapprox1}) (as above the travelling wave solution (\ref{ansatz3})
is normalized by fixing $R=1$). Then we perturb this
initial condition as indicated in (\ref{perturb}),  
where $\rho_n^{(1)}, \rho_n^{(2)}  \in [-0.1 , 0.1]$ denote
uniformly distributed random variables. 
The phenomenon of modulational instability is illustrated by figures 
\ref{plot8-11} and \ref{plot9-10}. The envelope of the initial condition
(figure \ref{plot8-11}, left plot) localizes after some transcient time, 
yielding larger oscillations of some beads 
corresponding to discrete breathers (figure \ref{plot8-11}, right plot).
The modulational instability yields 
a disordered train of travelling breathers that
propagate along the lattice as shown by figure \ref{plot9-10}.  
The DpS equation describes this modulational instability
quite well (compare the two plots of figure \ref{plot9-10}),
and even reproduces fine details like the crossing of two
breathers at $t\approx 330$.

\begin{figure}[!h]
\psfrag{n}[0.9]{\huge $n$}
\psfrag{x}[1][Bl]{\huge $x_n(t)$}
\begin{center}
\includegraphics[angle=-90,scale=0.22]{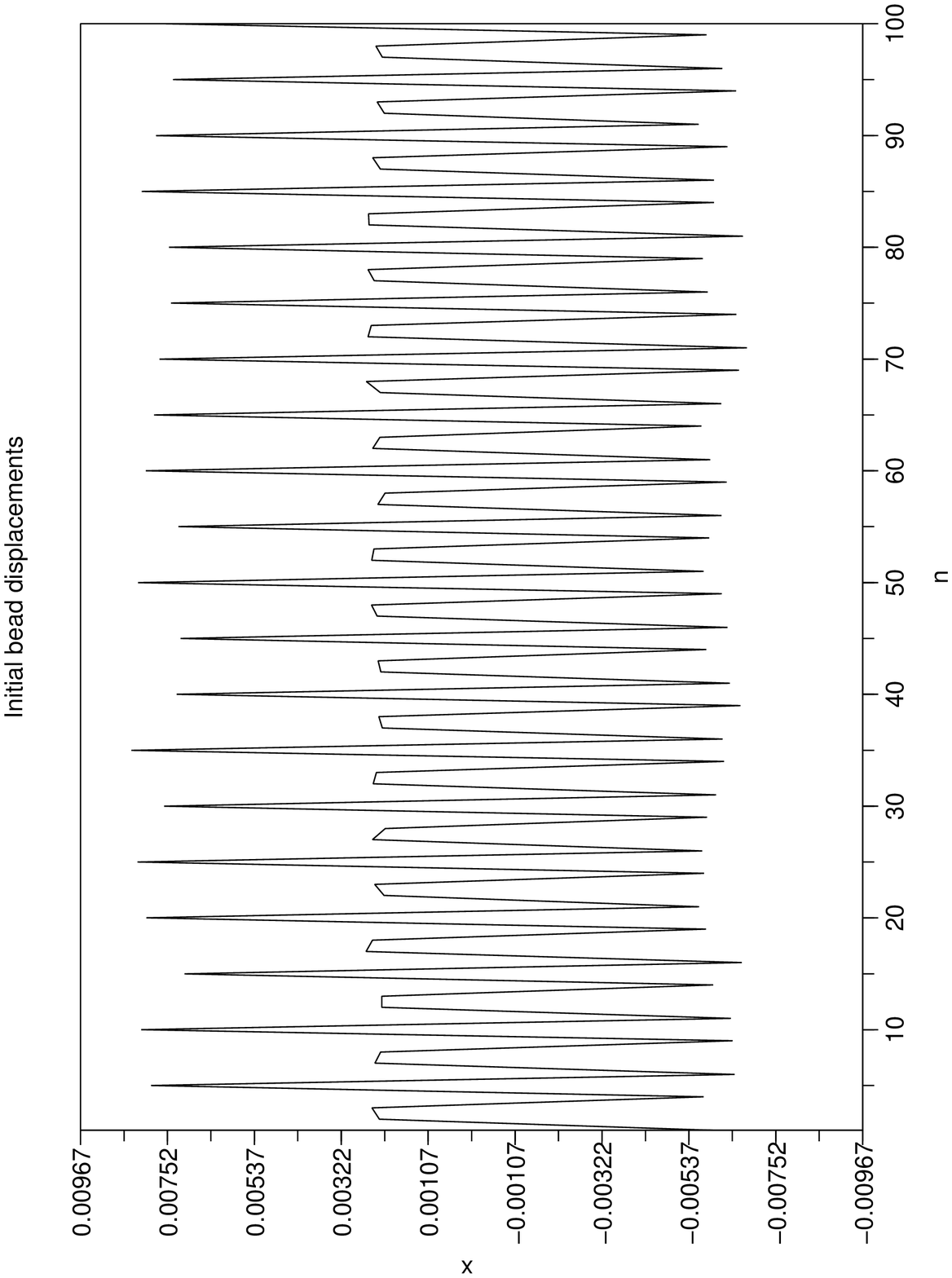}
\includegraphics[angle=-90,scale=0.22]{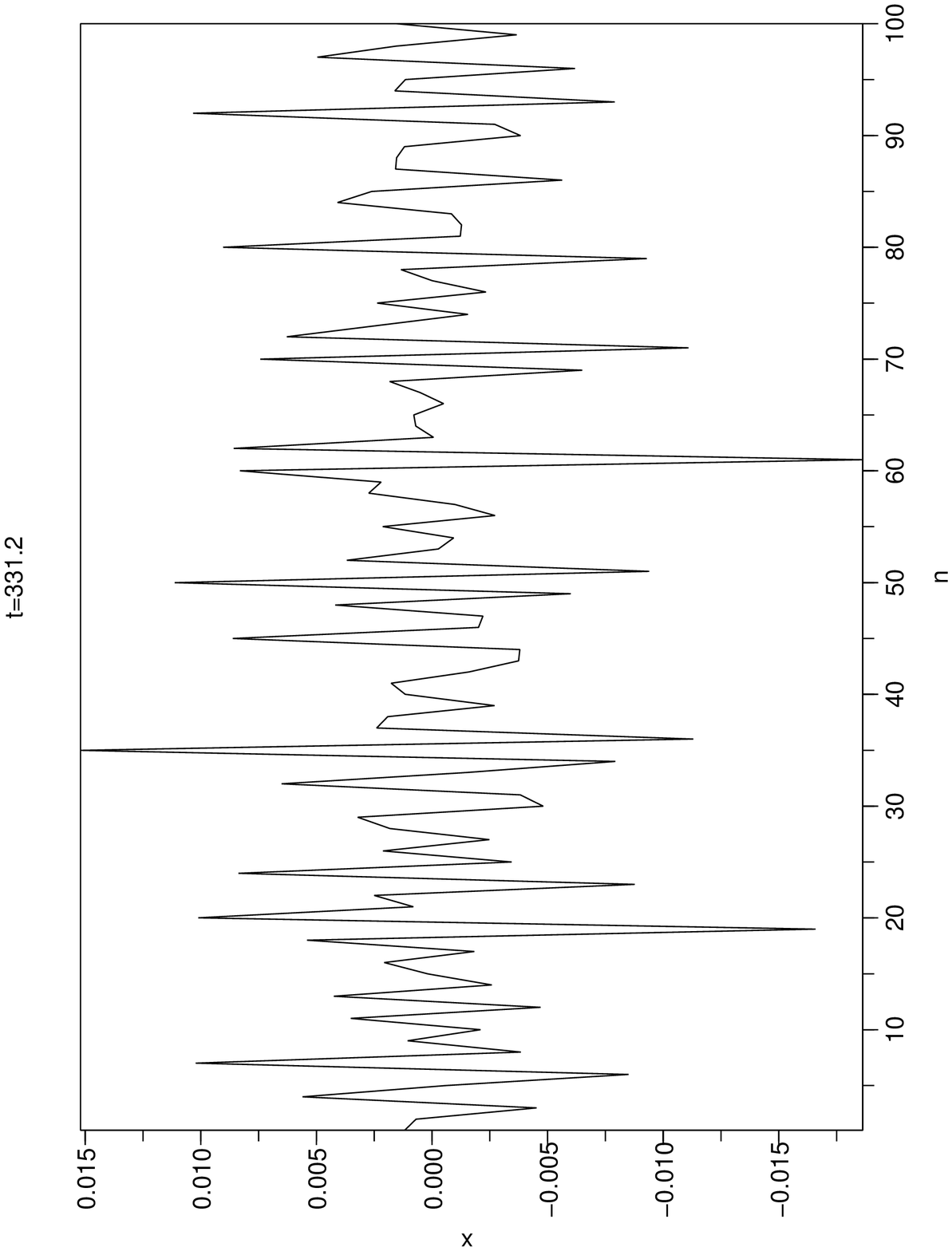}
\end{center} 
\caption{\label{plot8-11} 
Left plot~: initial bead displacements, fixed by the periodic travelling wave 
ansatz (\ref{ansatzapprox})-(\ref{freq})
with $q=4\pi /5$ and $\omega_{\rm{tw}}=1.1$,
perturbed by a small random noise. Right plot~:
bead displacements at $t=331.2$. The dynamics generates alternate regions of
large and small amplitude motion, characteristic of modulational instability.}
\end{figure}

\begin{figure}[!h]
\begin{center}
\includegraphics[scale=0.30]{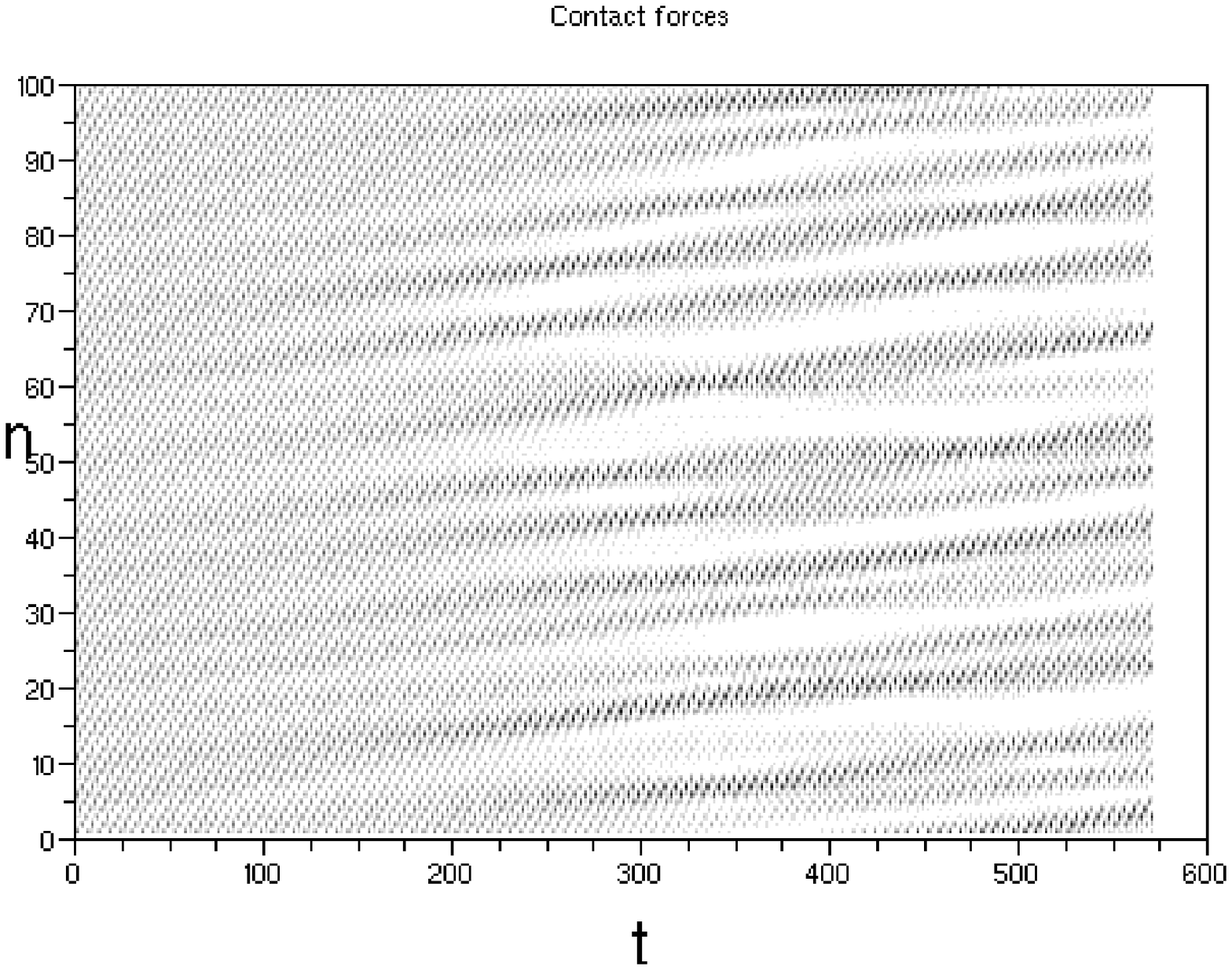}
\includegraphics[scale=0.30]{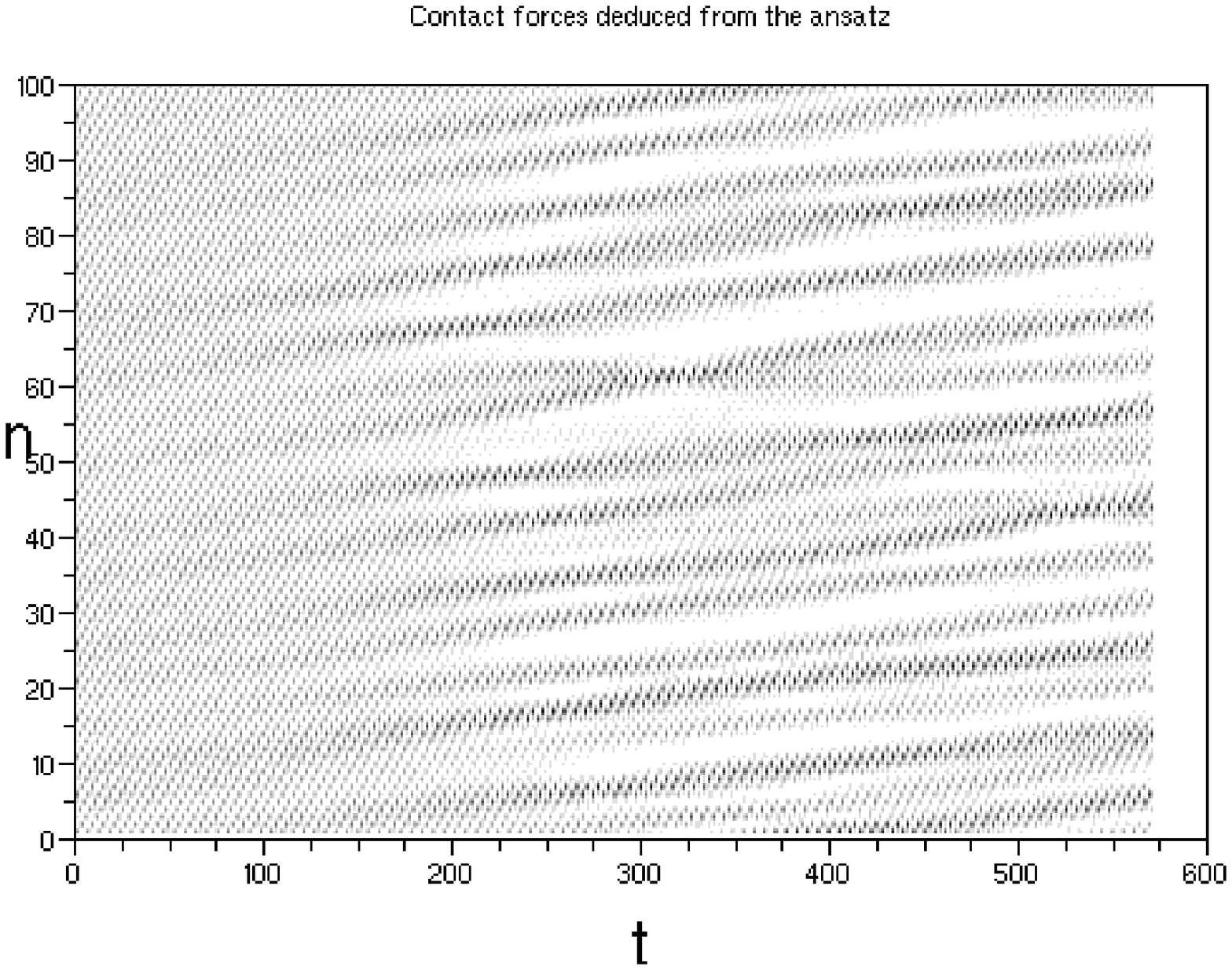}
\end{center} 
\caption{\label{plot9-10}
Left plot~: spatiotemporal evolution of the interaction forces for the
computation of figure \ref{plot8-11}. Forces are represented in grey levels 
as in figure \ref{plot6-7}. The grey region at early times of the simulation
corresponds to a travelling wave profile, and
the dark lines to a train of travelling breathers generated by the modulational 
instability. Right plot~: approximate solution deduced from the DpS equation.}
\end{figure}

The average velocity of the travelling breathers generated near the onset of instability
depends on the wavenumber of the unstable travelling wave. When $q$ is equal or
close to $\pi$  ($q=\pi$ corresponding to binary oscillations), one typically observes 
static breathers that remain pinned to some lattice sites over long times. 
This phenomenon is illustrated by figures \ref{plot29-30} and \ref{plot26-27},
which correspond to the case $q=\pi$, $\omega_{\rm{tw}} =1.1$ and $\rho_n^{(1)}, \rho_n^{(2)}  \in [-0.5 , 0.5]$. 
The DpS equation provides a good
qualitative picture of the instability (compare the two plots of figure \ref{plot26-27}).
It is able to reproduce some complex phenomena like the interaction and subsequent
merging of two close static breathers, at $t\approx 223$. However the local
dynamics after the two breathers have merged is not well described by
the DpS equation, since more energy is released at the collision site
for the exact solution than with the approximate solution. This may be
the sign that higher harmonics neglected by ansatz (\ref{ansatz}) come into play
during the merging phase, or may simply originate from sensitiveness
of the collision result to relative phases and breather positions.

In the above space-time diagrams, travelling and static breathers appear 
as grey lines due to their internal breathing motion. These oscillations are more visible on figure \ref{plot28},
where static breathers with different amplitudes and spatial extentions are shown.
  
\begin{figure}[!h]
\psfrag{n}[0.9]{\huge $n$}
\psfrag{x}[1][Bl]{\huge $x_n(t)$}
\begin{center}
\includegraphics[angle=-90,scale=0.22]{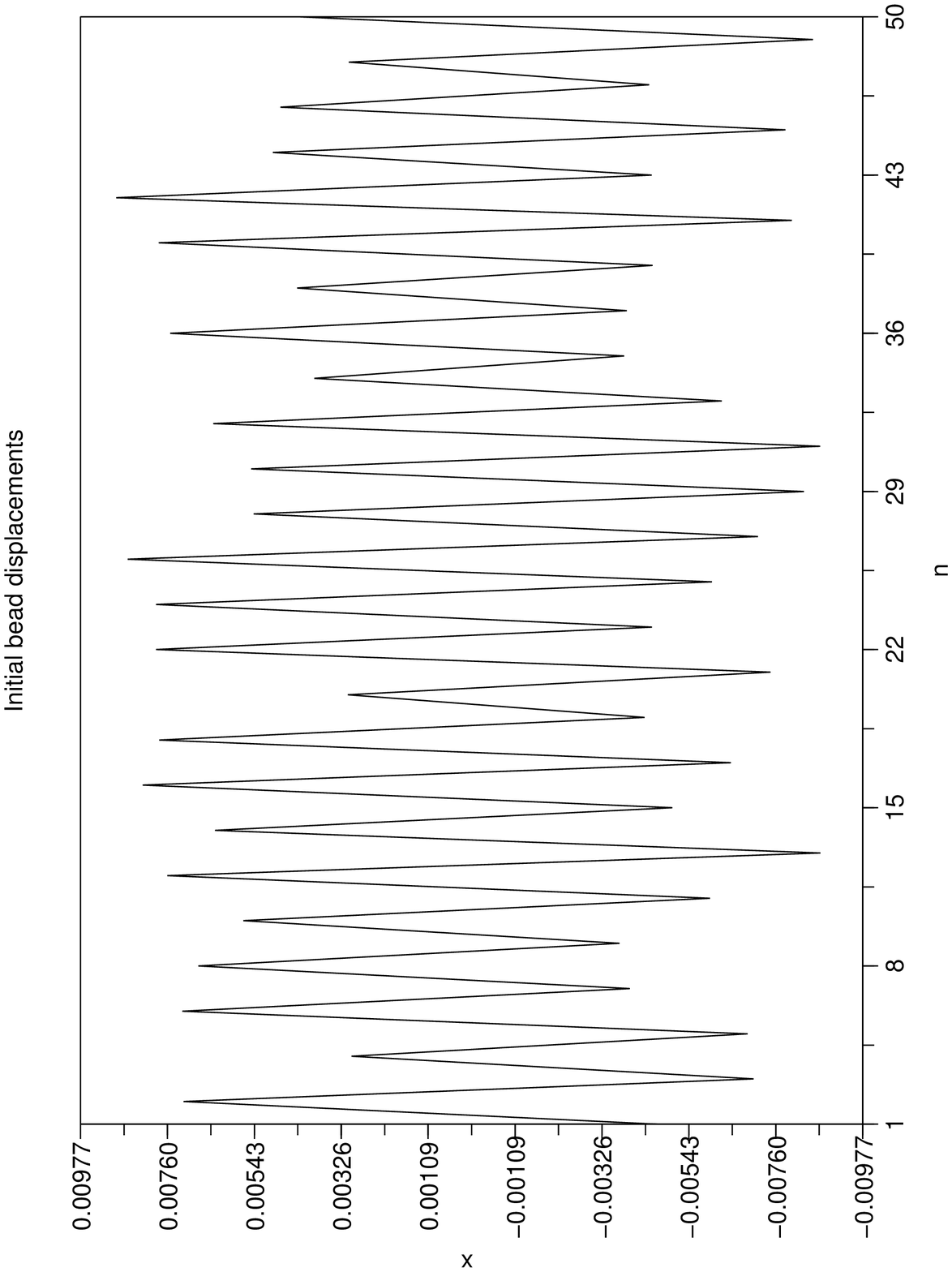}
\includegraphics[angle=-90,scale=0.22]{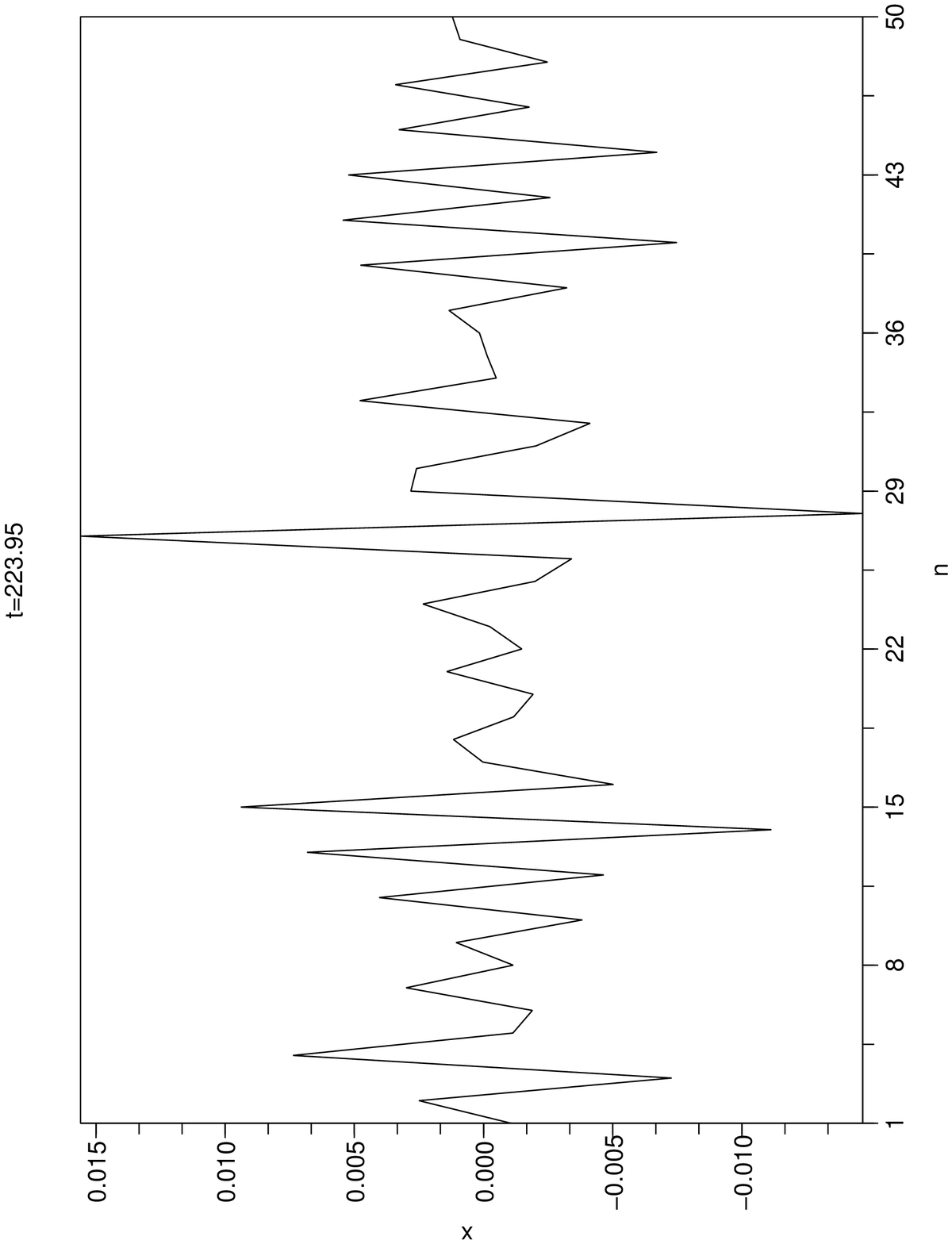}
\end{center} 
\caption{\label{plot29-30} 
Left plot~: initial bead displacements, corresponding to a binary
oscillation ($q=\pi $ and $\omega_{\rm{tw}}=1.1$ in ansatz (\ref{ansatzapprox})-(\ref{freq})),
perturbed by a random noise. Right plot~: snapshot of a static breather formed at $t\approx 223$ after
the merging of two smaller breathers.}
\end{figure}

\begin{figure}[!h]
\begin{center}
\includegraphics[scale=0.30]{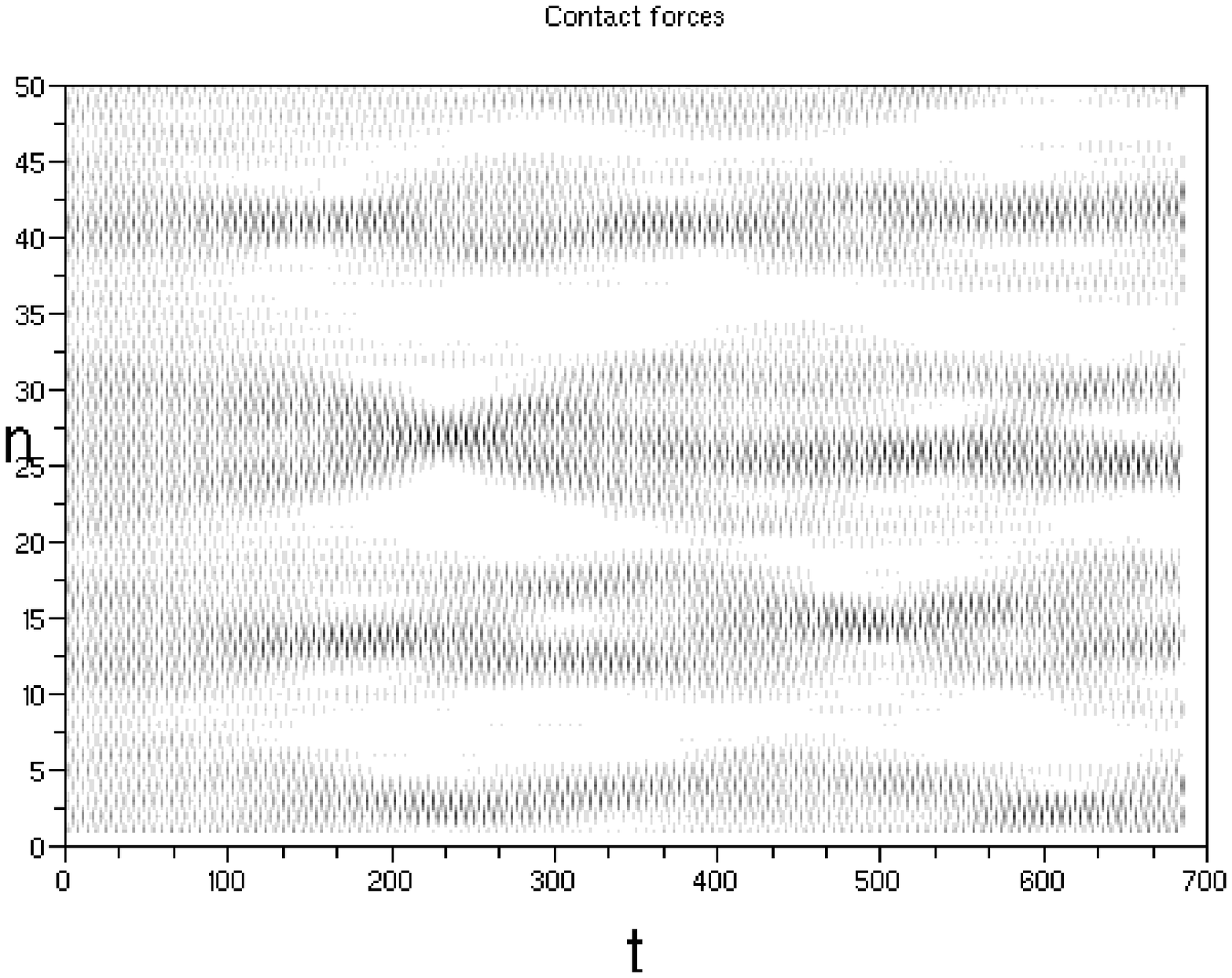}
\includegraphics[scale=0.30]{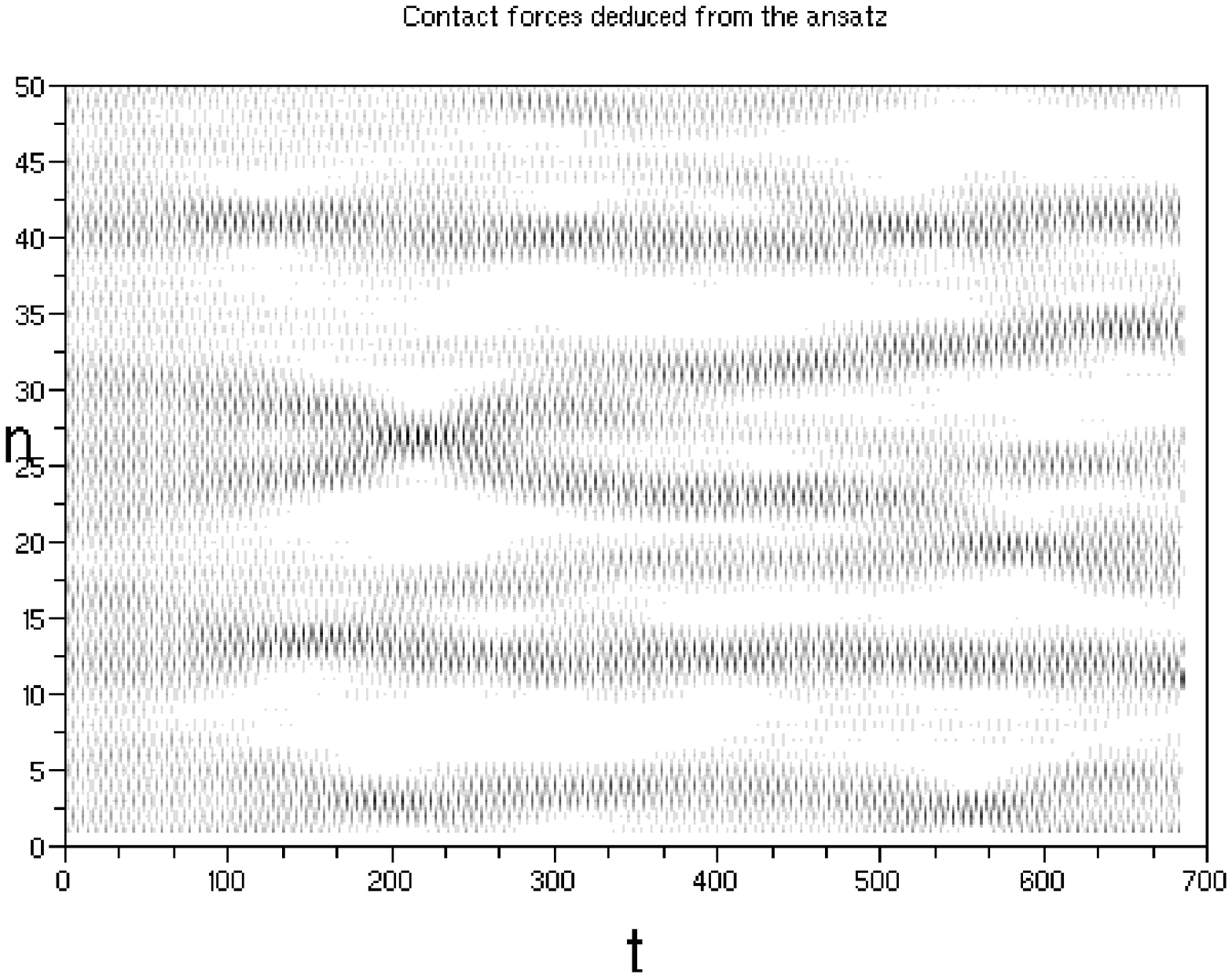}
\end{center} 
\caption{\label{plot26-27} 
Spatiotemporal evolution of the interaction forces for the
computation of figure \ref{plot29-30} (left plot), and its
approximation deduced from the DpS equation (right plot). 
The grey region at early times of the simulation
corresponds to the perturbed binary oscillation, and  
the dark regions to nearly static breathers generated by the modulational 
instability.}
\end{figure}

\begin{figure}[!h]
\begin{center}
\includegraphics[scale=0.35]{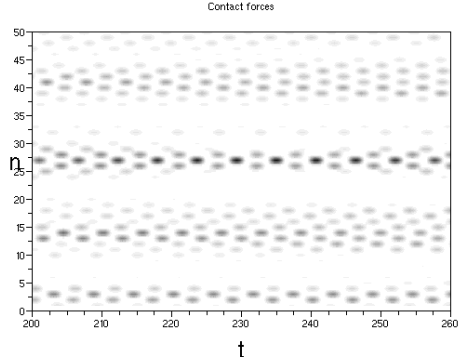}
\end{center} 
\caption{\label{plot28} 
Detail of figure \ref{plot26-27}, showing a distribution
of static breathers with different sizes and amplitudes. 
Black spots alternating with white regions correspond
to sequences of compressions and separations
of neighbouring beads.}
\end{figure}

\ve

During the modulational instability, only a part 
of the modes are initially amplified. In order to
analyze how the DpS equation is able
to reproduce this phenomenon,
we now study the time evolution of the spatial
Fourier transform of $\{x_n (t) \}$.
As previously we consider periodic boundary conditions
$x_{n+N}(t)=x_n(t)$, but increase the number of particles
to $N=200$ in order to achieve a better spectral resolution. 
Considering the discretization of the spectral band $[-\pi , \pi ]$ given by
$
\Gamma_\ast = \frac{2\pi}{N}\cdot \{ -\frac{N}{2}+1, \ldots , \frac{N}{2}  \}
$
and defining the discrete Fourier transform
$$
\hat{x}_k (t)=\sum_{n=0}^{N-1}{x_n (t)\, e^{-i\, n\, k}}, \ \ \
k\in \Gamma_\ast , 
$$
we have 
$$
x_n (t)= \frac{1}{N}\, \sum_{k\in \Gamma_\ast}{\hat{x}_k (t) \, e^{i\, n\, k}}.
$$
We compare the time evolution of the magnitude 
of the Fourier transform $|\hat{x}_k (t) |$ for the
exact and approximate solutions, for initial travelling wave profiles perturbed by random noise.
The unperturbed initial condition is determined using ansatz (\ref{ansatzapprox})-(\ref{freq}), for
$\omega_{\rm{tw}} =1.1$ and different wavenumbers $q$ that determine
the amplitude $ \epsilon =a/2$ in (\ref{ansatzapprox1}).
We perturb this initial condition as indicated in 
(\ref{perturb}), where $\rho_n^{(1)}, \rho_n^{(2)} $ are random variables generated from
the Gaussian distribution with mean $0$ and standard deviation $0,\! 5.10^{-2}$.

Figure \ref{fft106} compares the spectra of 
the exact and approximate solutions for $q=4\pi /5$.
At $t=106.7$ (first row), the two peaks at $k=\pm q$
correspond to the periodic travelling wave generated
by the initial condition. Two second harmonics at $k=\pm 2(q-\pi)$ 
are visible
for the exact solution and are absent for the approximate one
(these harmonics are generated at early times of the simulation
even in the absence of noise). 
As time further increases, a band of unstable modes grows in amplitude 
at both sides of $k=\pm q$.  
The unstable bandwidths and most unstable wavenumbers 
of the exact and approximate solutions are quite close
(compare the plots at $t=399.95$, third row), 
but the approximate solution overestimates the
initial growth rate (compare the plots at $t=253.45$, second row). 
Both for the exact and approximate solutions, a breathing
of the spectrum near $k=\pm \pi$ settles progressively. 
The plots at the fourth row show quite similar spectral distributions 
of the exact solution at $t=541.95$ and of the approximate one
at $t=543.7$ (we have slightly shifted time to compensate a 
phase-shift in the spectral breathing).
As a conclusion, the DpS equation describes quite
accurately the modulational instability of the travelling wave with
$q=4\pi /5$ in the weakly nonlinear regime.

\begin{figure}[!h]
\psfrag{k}[0.9]{\huge $k$}
\psfrag{m}[1][Bl]{\huge $|\hat{x}_k(t) |$}
\begin{center}
\includegraphics[angle=-90,scale=0.22]{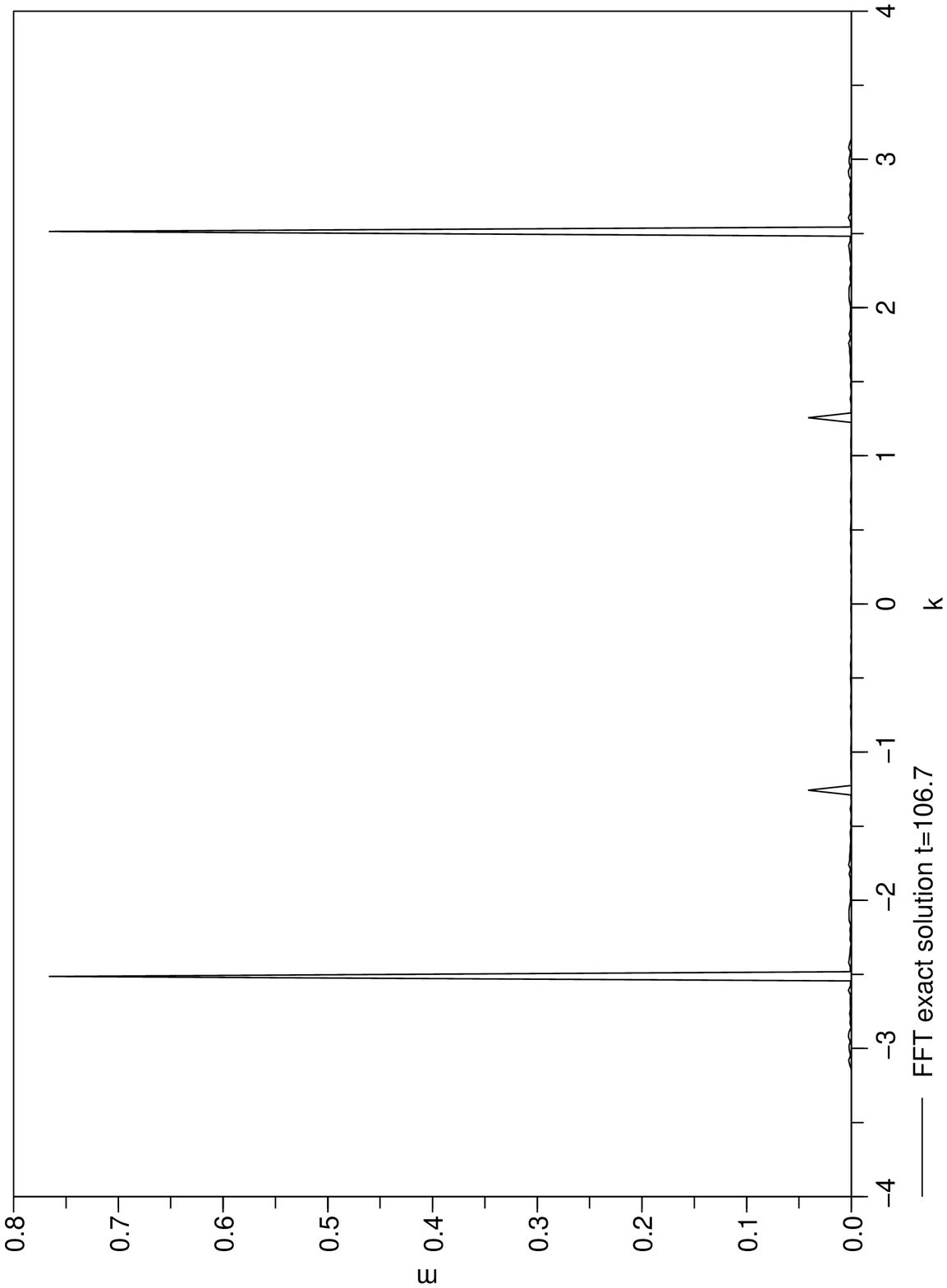}
\includegraphics[angle=-90,scale=0.22]{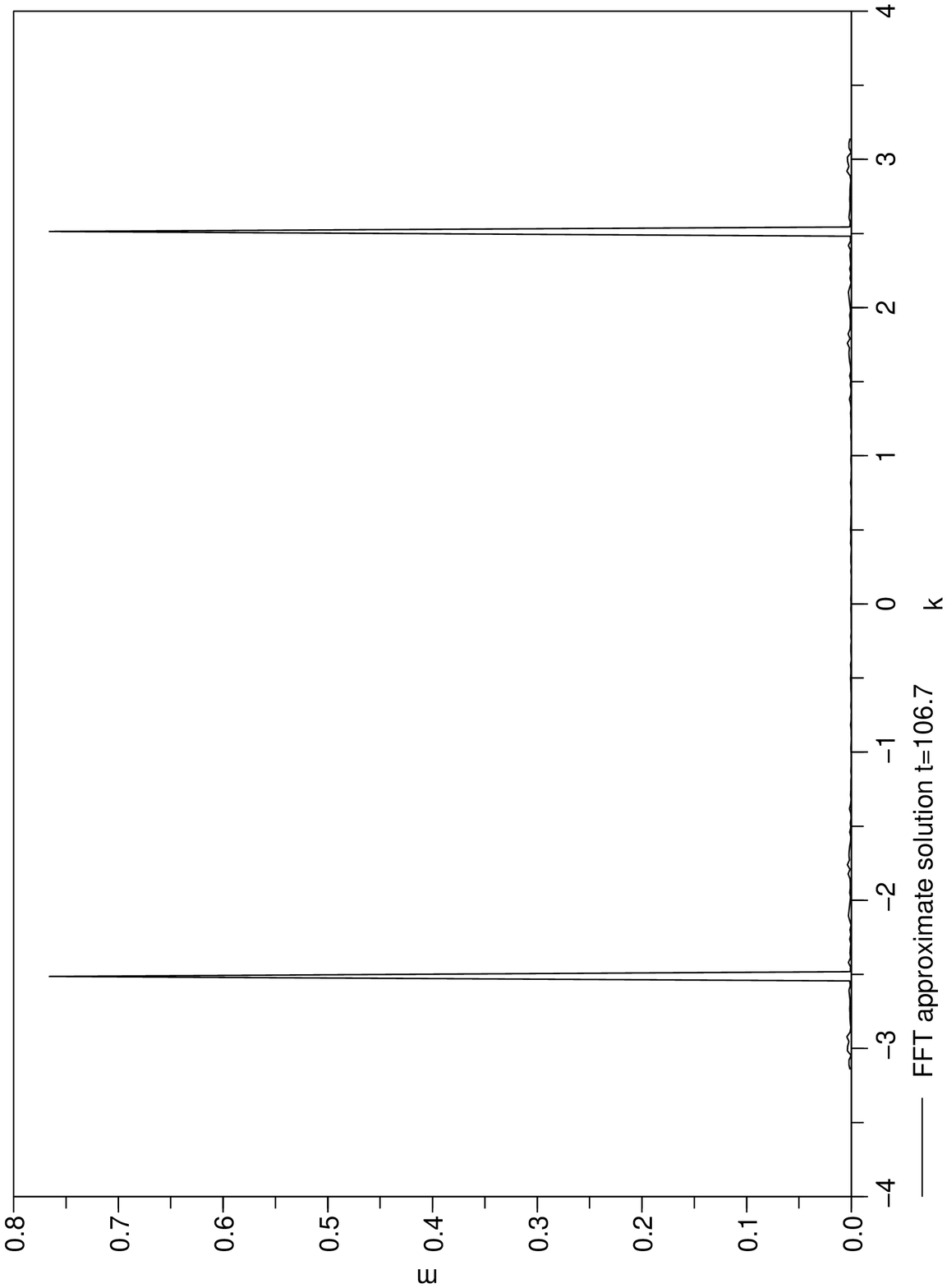}
\includegraphics[angle=-90,scale=0.22]{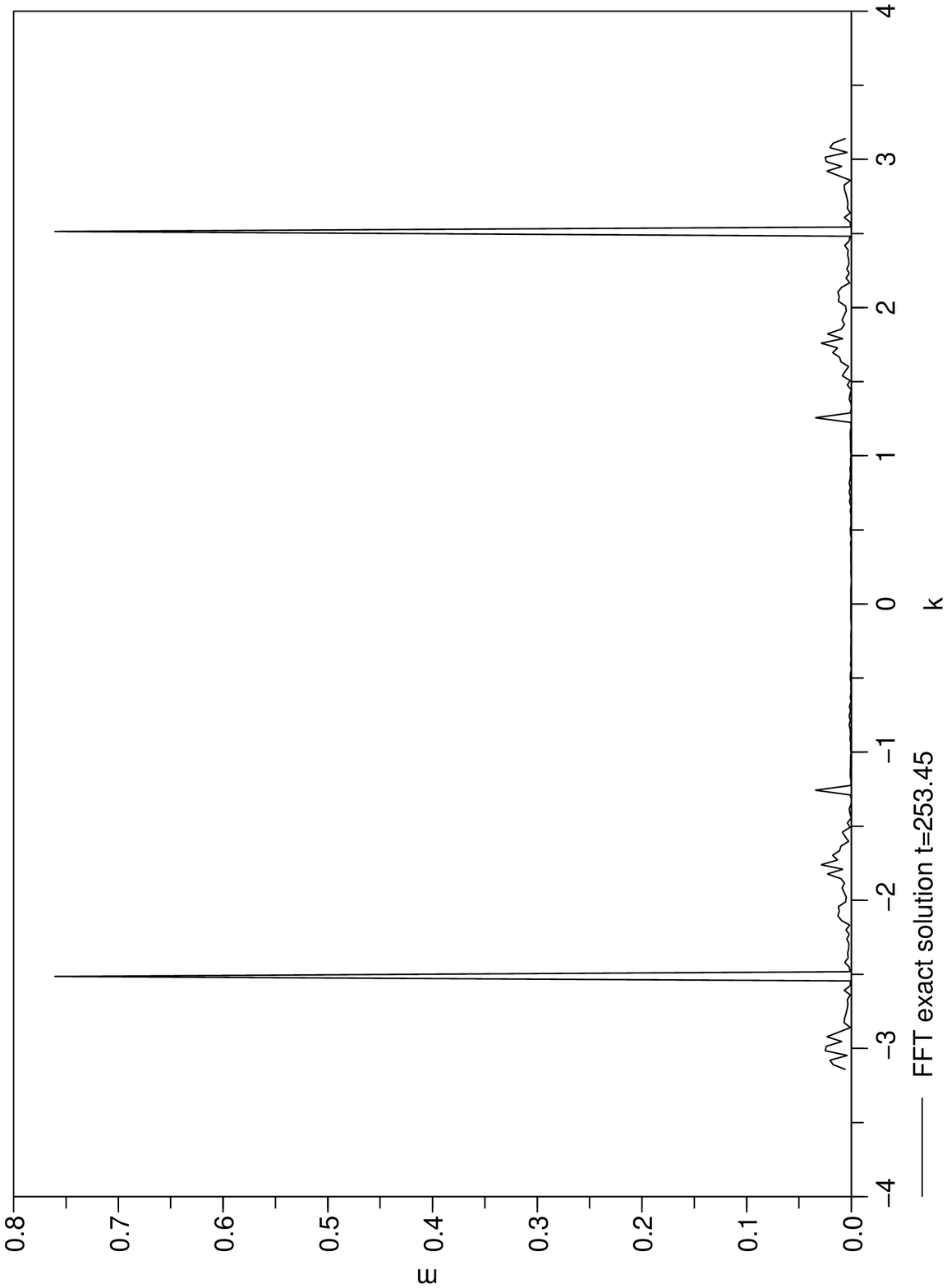}
\includegraphics[angle=-90,scale=0.22]{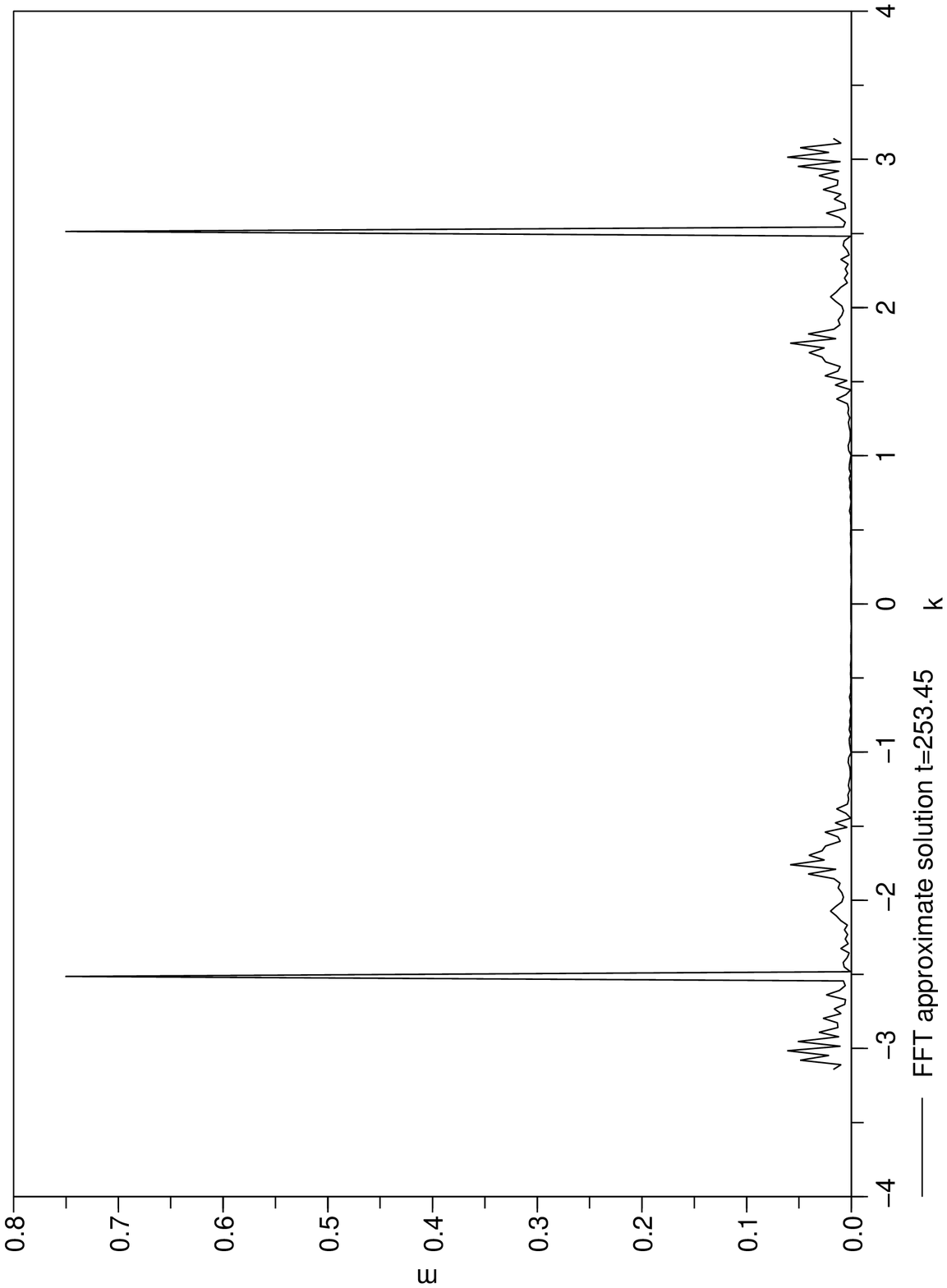}
\includegraphics[angle=-90,scale=0.22]{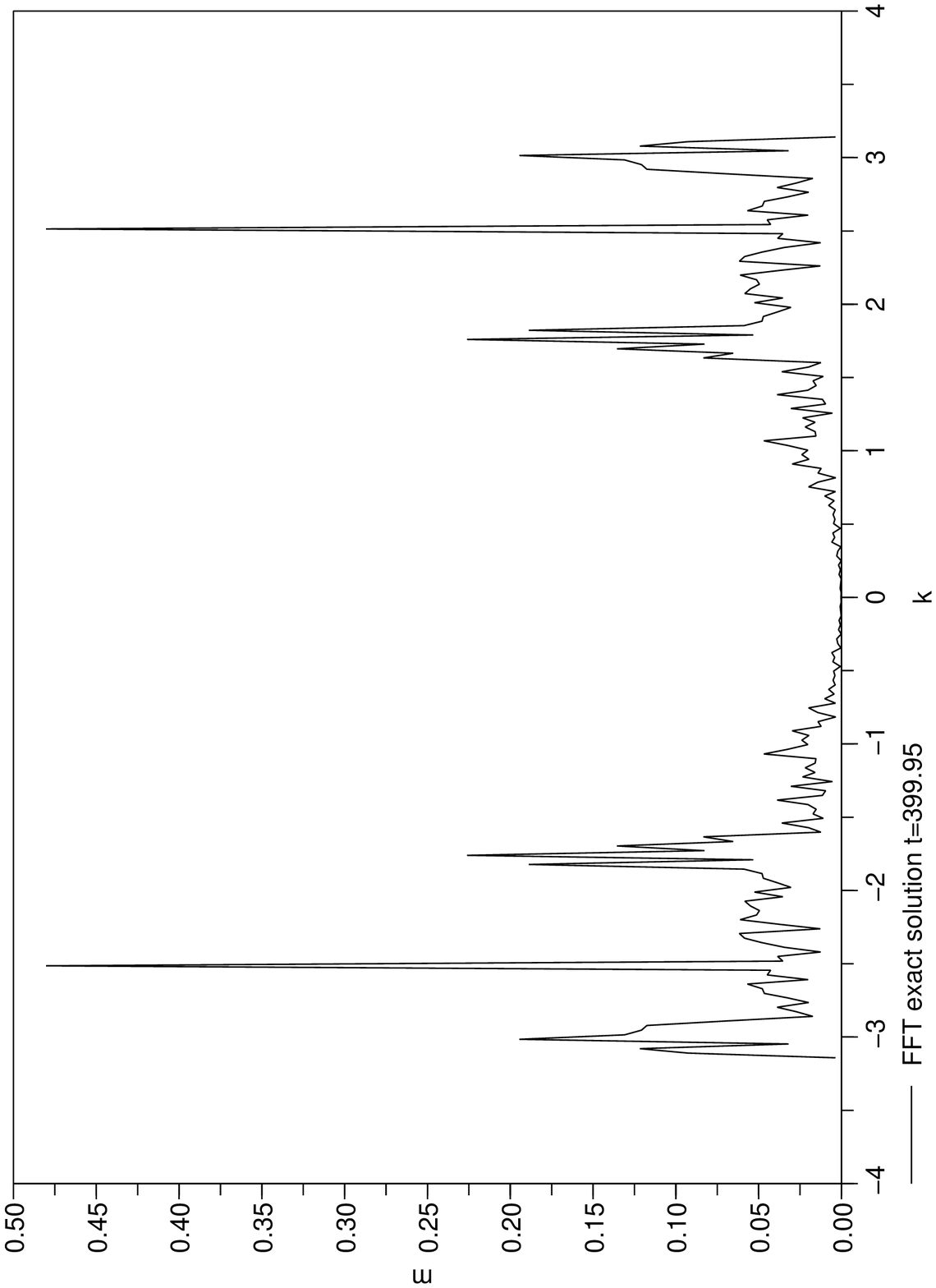}
\includegraphics[angle=-90,scale=0.22]{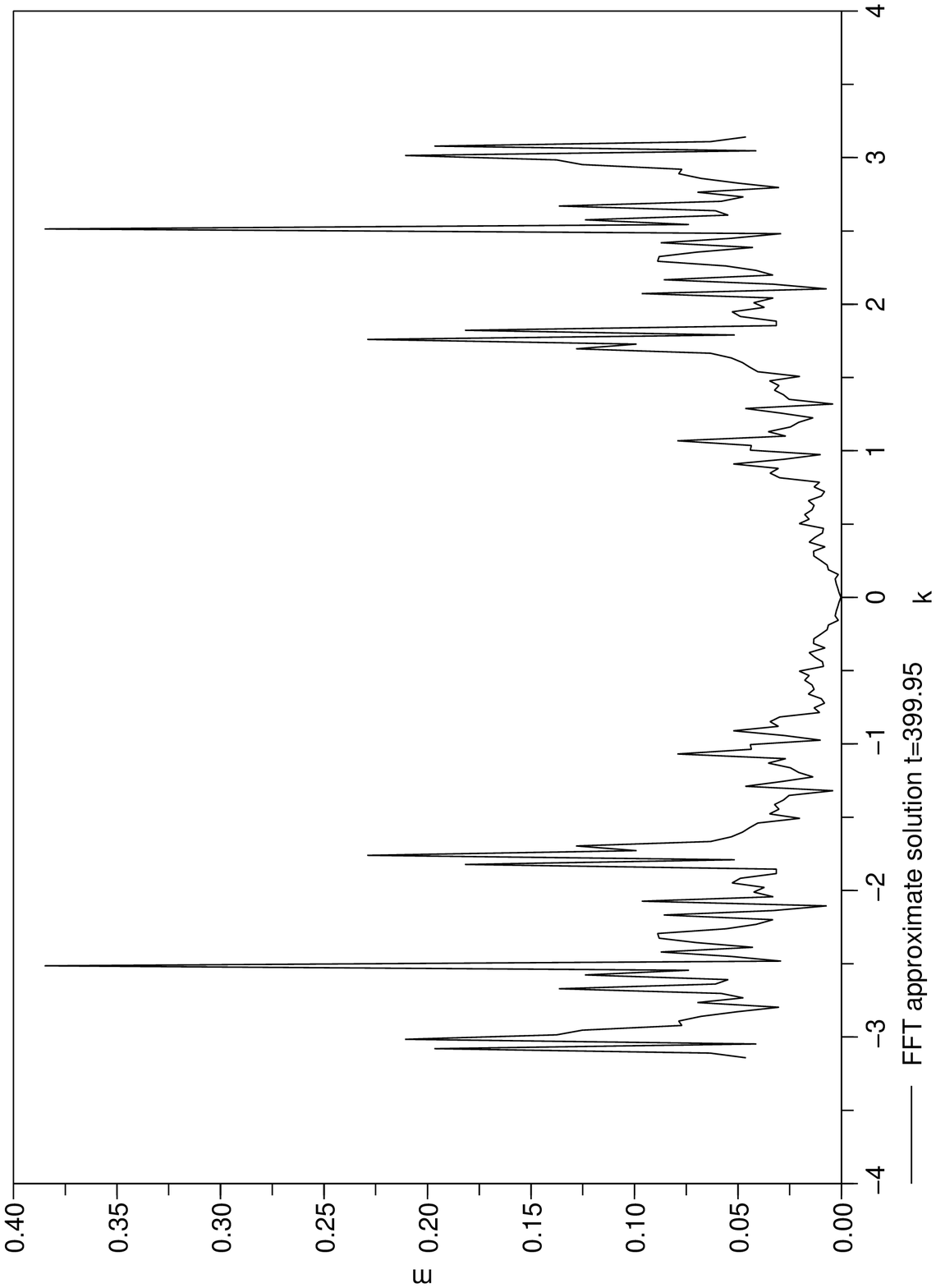}
\includegraphics[angle=-90,scale=0.22]{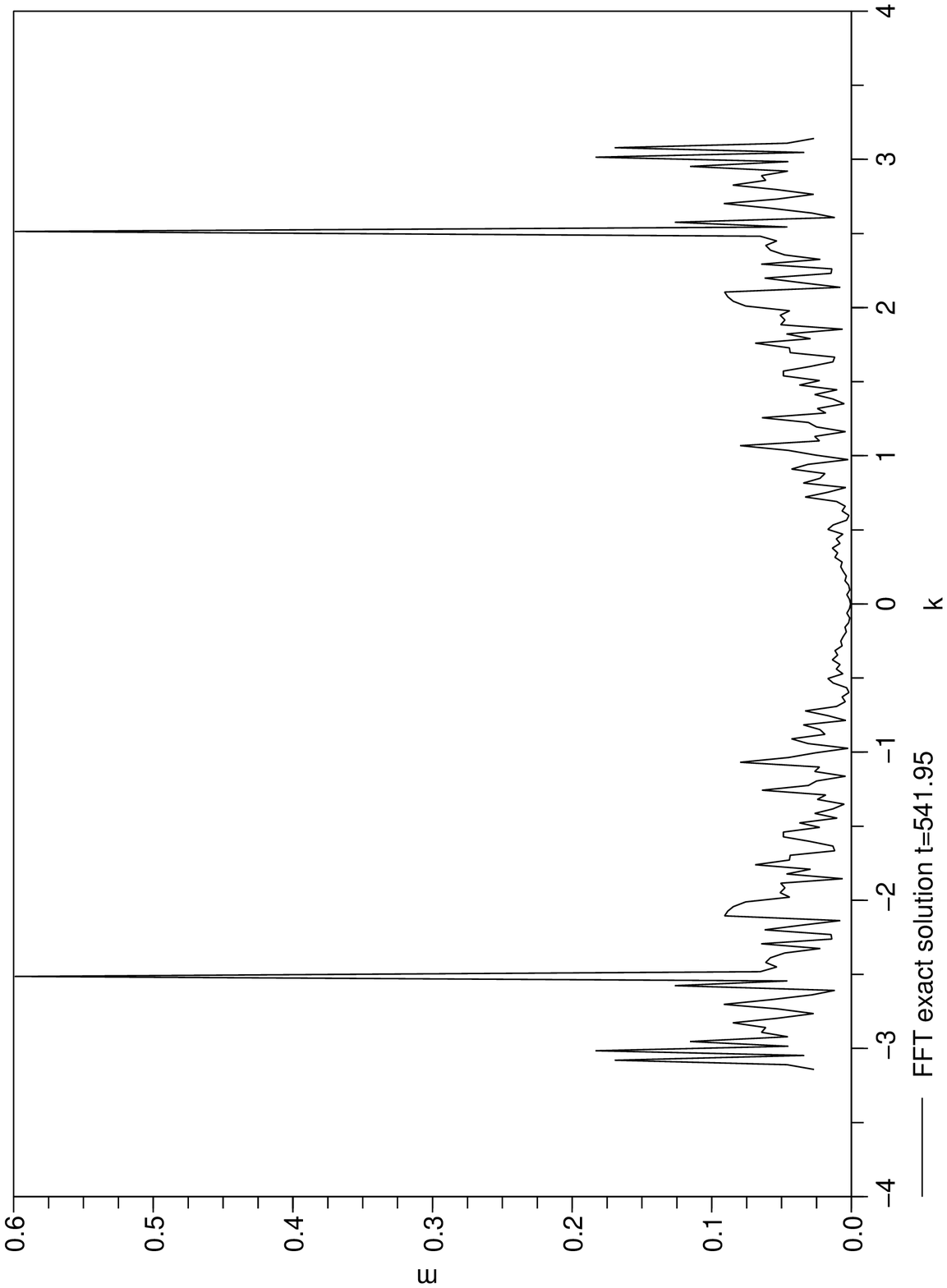}
\includegraphics[angle=-90,scale=0.22]{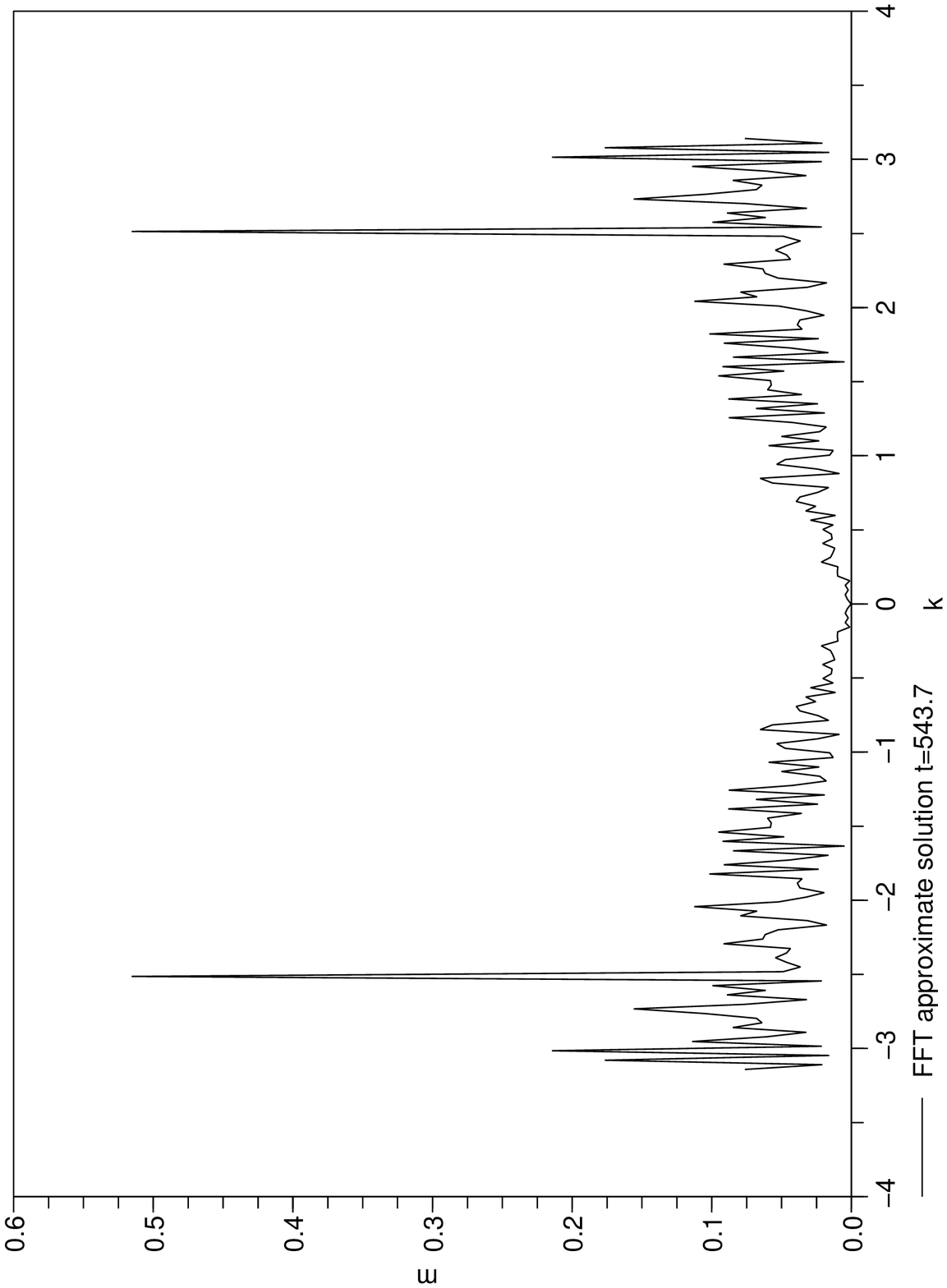}
\end{center} 
\caption{\label{fft106} 
Magnitude of the discrete spatial Fourier transforms of the
exact solution (left column) and approximate solution
(right column), for a random perturbation of a small amplitude
travelling wave with $q=4\pi /5$.  Plots correspond to
different times $t=106.7$ (first row), $t=253.45$ (second row),
$t=399.95$ (third row), $t=541.95$ (fourth row, left),
$t=543.7$ (fourth row, right).}
\end{figure}

\ve

In contrast with the above results, we describe below two situations in which the DpS equation gives
incorrect results on the stability or dynamical properties of travelling waves.

\ve

The first one corresponds to a simple instability threshold effect.
For $q=\pi /2$ and below, we haven't observed modulational
instabilities both for the DpS equation and
system (\ref{nc}) in the weakly nonlinear regime. 
However, in both models 
we have found thresholds for the modulational instability
at slightly different critical values of $q$ near $\pi /2$.
This is illustrated by figure \ref{fftseuil}, where 
the magnitudes of the spatial Fourier transforms of the
exact and approximate solutions are plotted at $t=913.7$ 
($160$ time periods of the initial periodic travelling wave)
for different values of $q$ near $\pi /2$.
As shown by the two plots at the first row, small amplitude travelling waves
with $q=52\pi /100$ are unstable both for system (\ref{nc}) and the DpS
approximation. When $q$ decreases to $q=51 \pi /100$, the modulational
instability persists for the DpS approximation, but it does not occur
(or may occur extremely slowly) for system (\ref{nc})
(see plots at the second row).
The third row of figure \ref{fftseuil} provides the
spectra for $q=\pi /2$, showing no trace of modulational instability.
In addition to the main peaks, one can see
second harmonics
at $k=\pm\pi$ in the exact solution spectrum,
and a small noisy background 
(more important for the exact solution) that 
follows from an energy cascade between modes.  
As a conclusion, these computations reveal that 
system (\ref{nc}) and the DpS approximation yield slightly
different instability thresholds for the wavenumber $q$,
and consequently the DpS approximation does not correctly
describe the dynamics when $q$ is chosen between the two 
critical values.

\begin{figure}[!h]
\psfrag{k}[0.9]{\huge $k$}
\psfrag{m}[1][Bl]{\huge $|\hat{x}_k(t) |$}
\begin{center}
\includegraphics[angle=-90,scale=0.22]{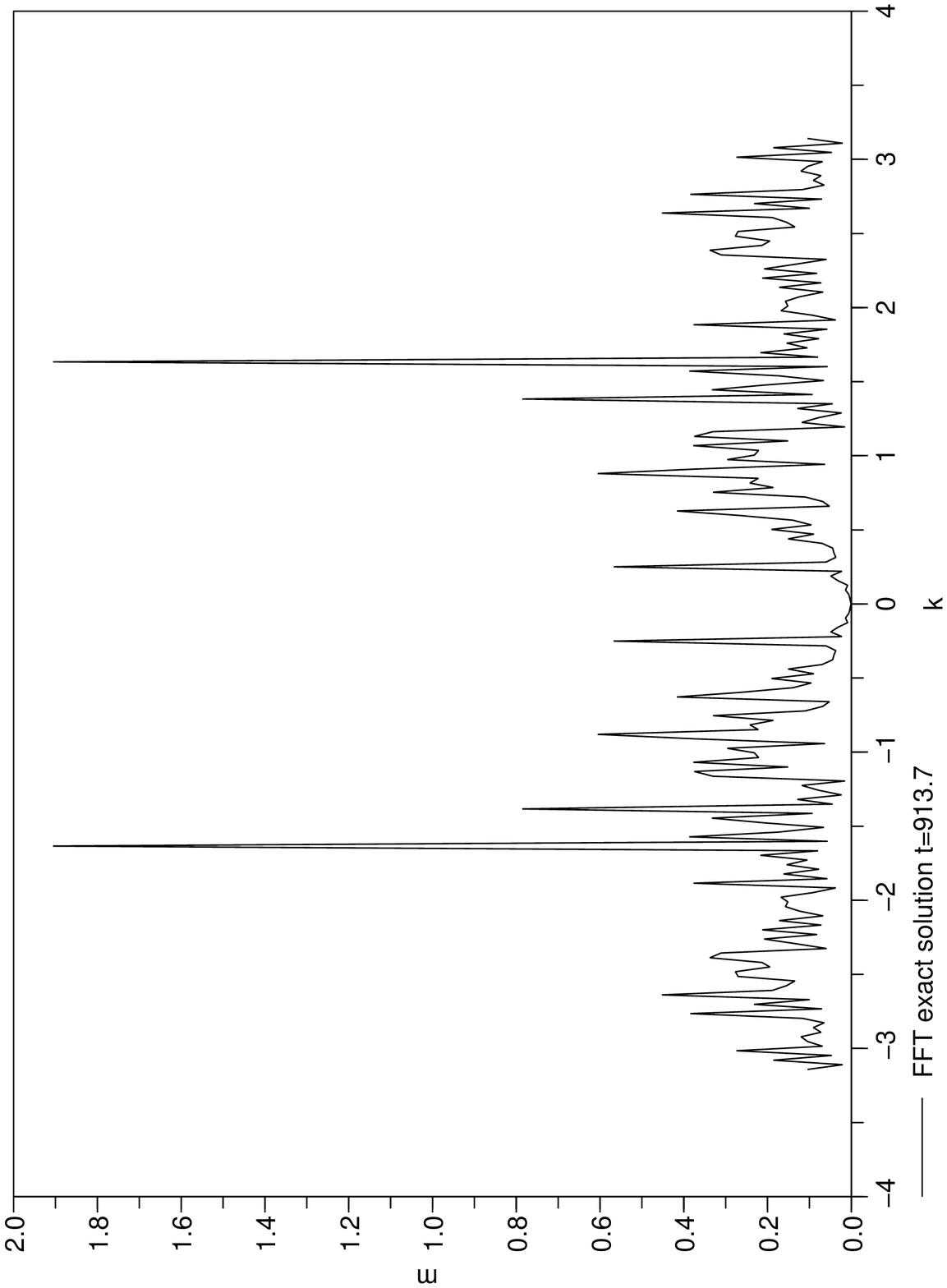}
\includegraphics[angle=-90,scale=0.22]{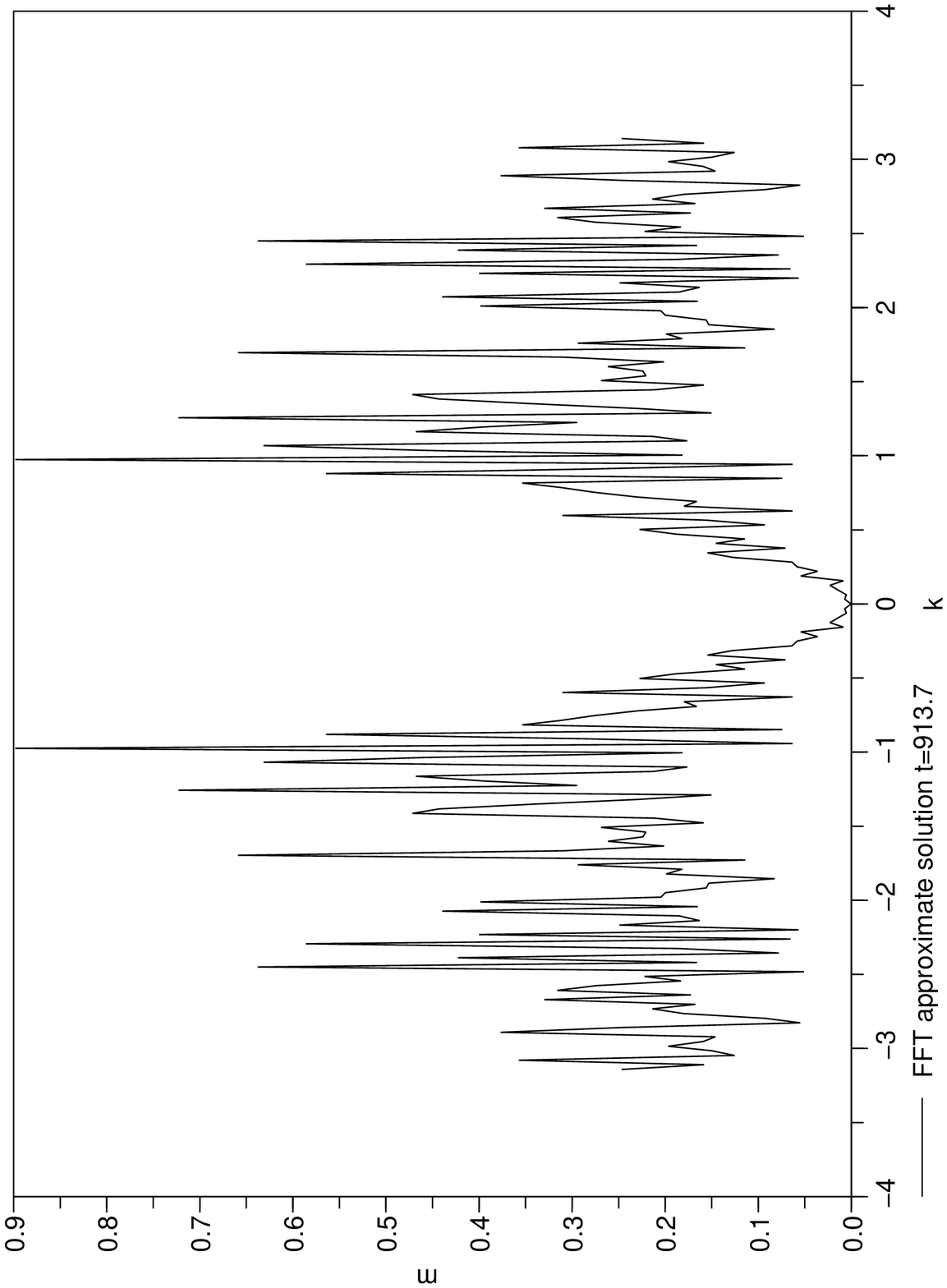}
\includegraphics[angle=-90,scale=0.22]{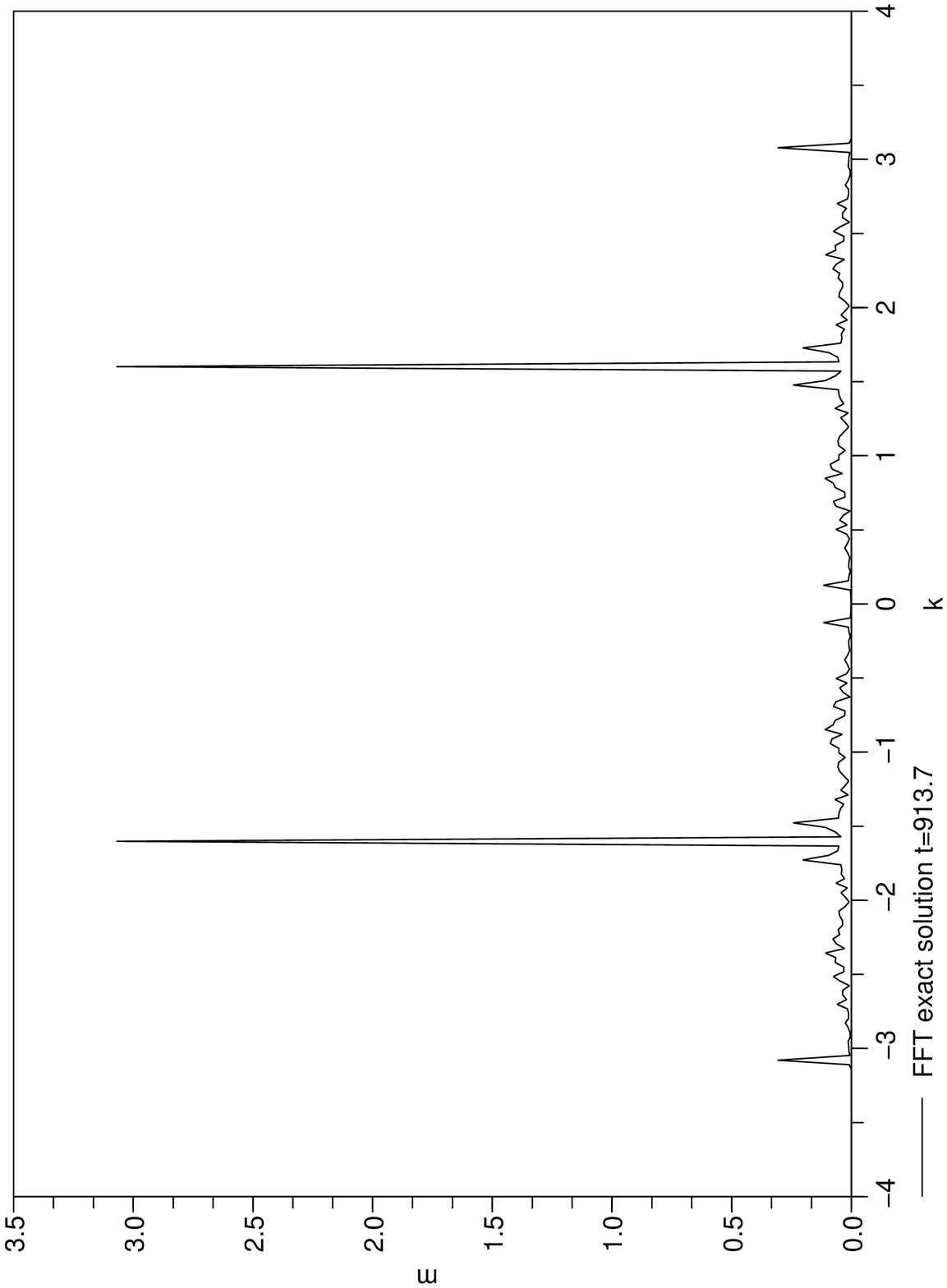}
\includegraphics[angle=-90,scale=0.22]{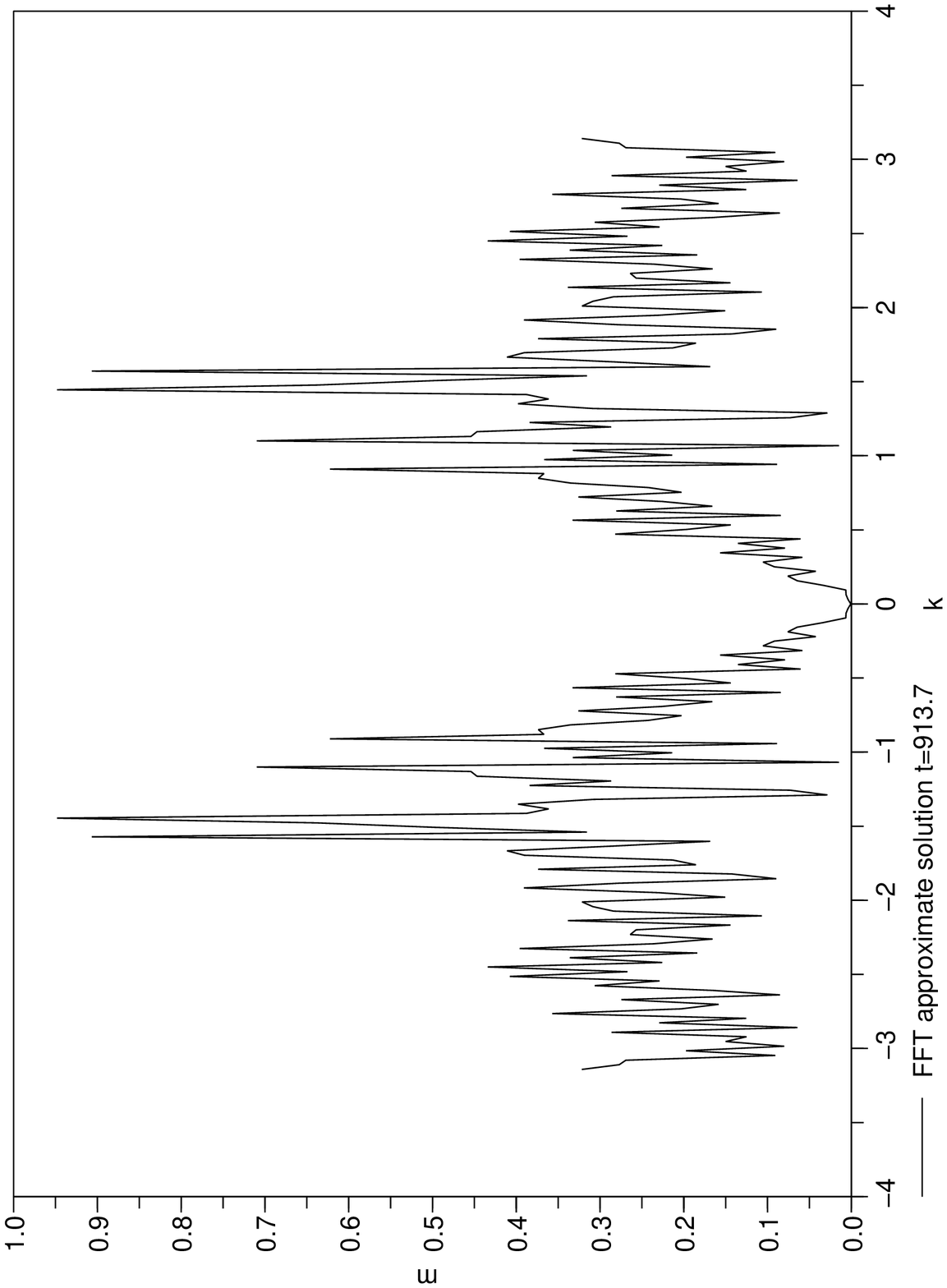}
\includegraphics[angle=-90,scale=0.22]{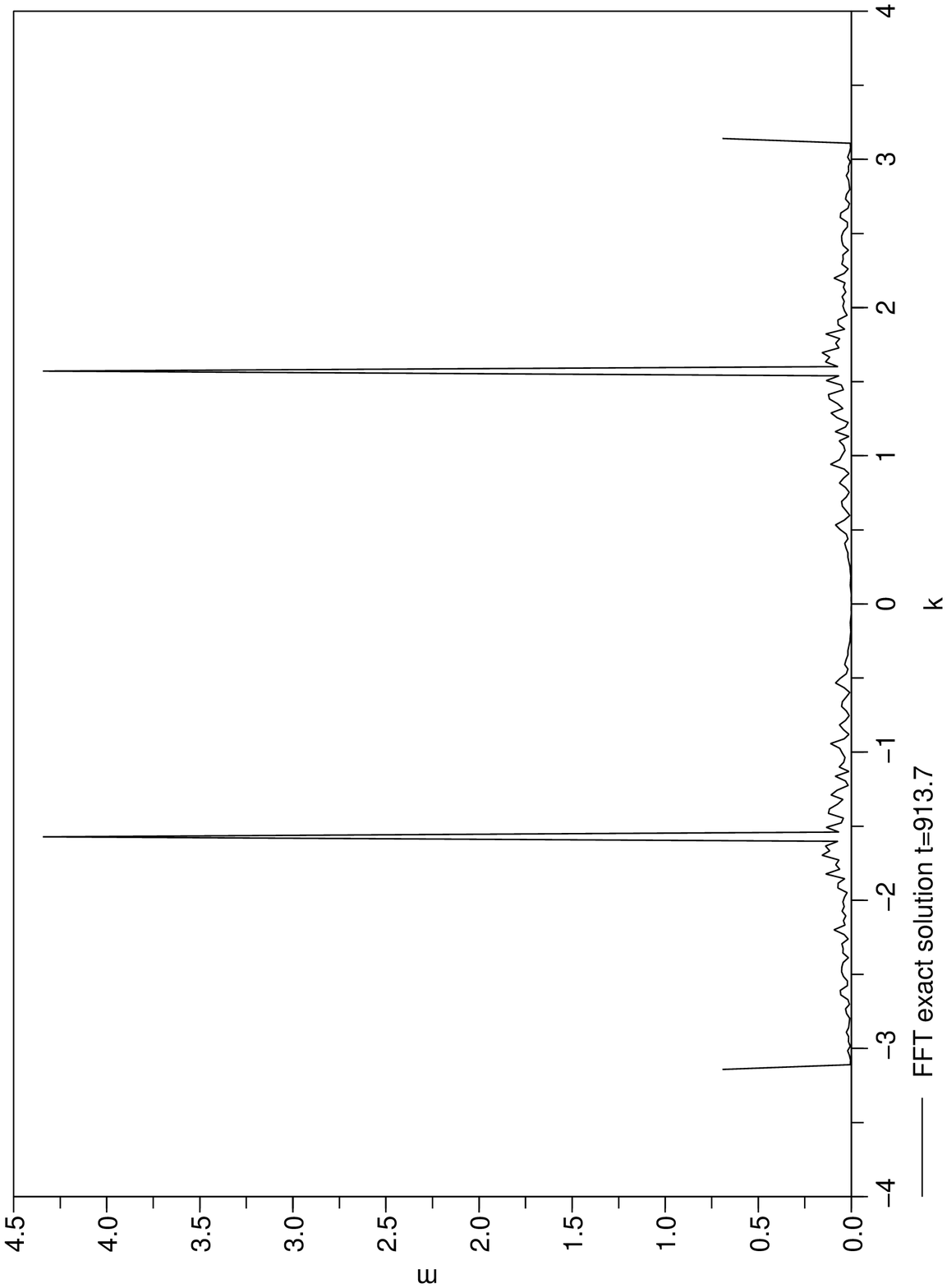}
\includegraphics[angle=-90,scale=0.22]{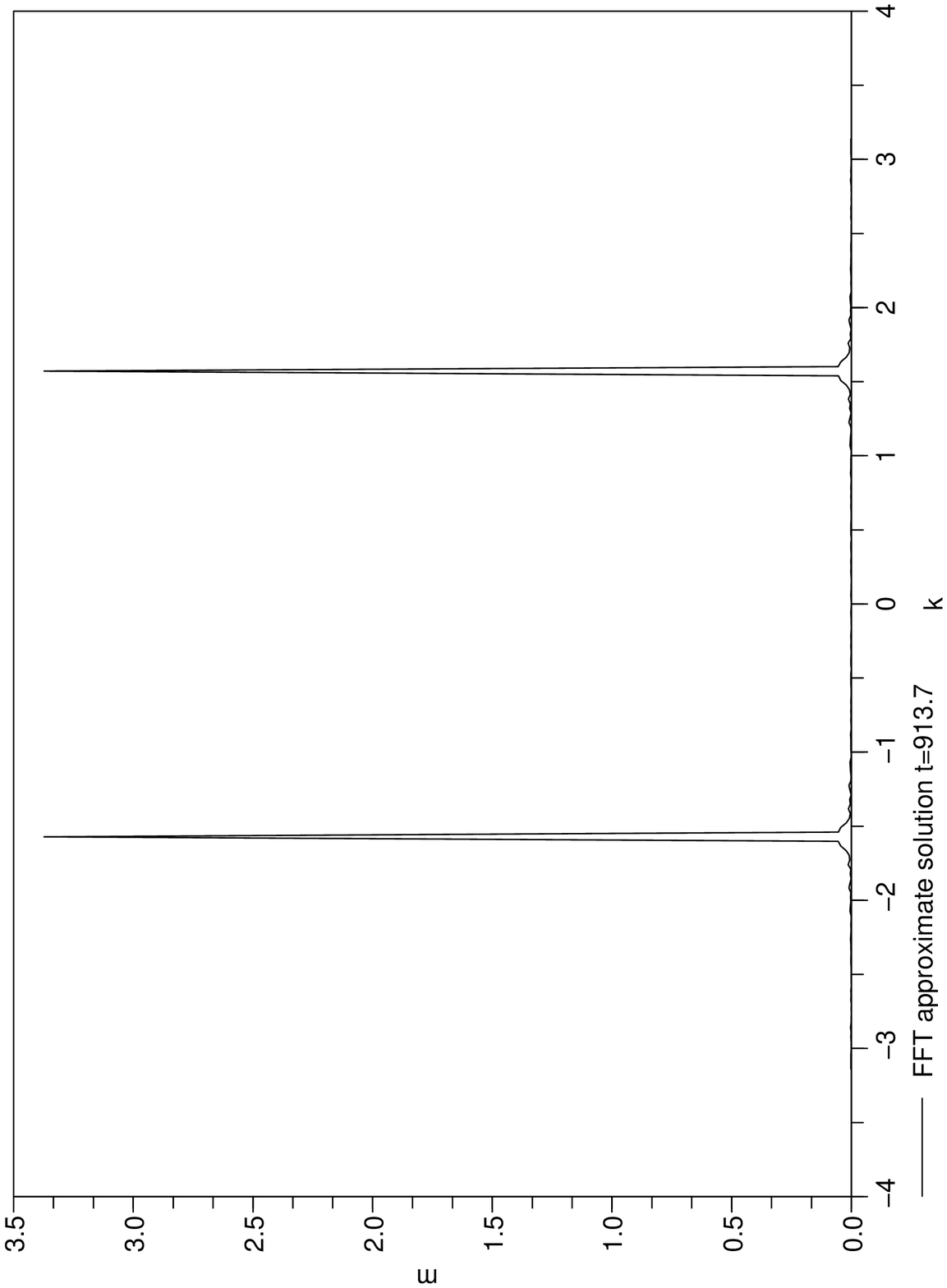}
\end{center} 
\caption{\label{fftseuil} 
Magnitude of the discrete spatial Fourier transforms of the
exact solution (left column) and approximate solution
(right column), for a random perturbation of small amplitude
travelling waves with different wavenumbers $q$ close to $\pi /2$
(top plots~: $q=52 \pi /100$, middle~: $q=51 \pi /100$, bottom $q=\pi /2$).
Plots correspond to time $t=913.7$.}
\end{figure}

\ve

As second situation in which the DpS approximation breaks down
is described in figure \ref{fft344}. For $q=57\pi /100$, the travelling wave
develops a modulational instability both in system (\ref{nc}) and the DpS
equation, but the two models display different transitory
dynamical behaviours. 
This originates from two small 
harmonics appearing in the spectrum of the exact solution
at early times of the simulation, more precisely
a second harmonic at $k=\pm 2(q-\pi )$ and a third harmonic
at $k=\pm (3q-2\pi )$. As previously
these peaks are absent in the spectrum of the approximate solution.
Here the difference with respect to figures \ref{fft106} and \ref{fftseuil}
is that these small peaks fall inside the bands of modes amplified by
modulational instability, and their amplitude grows with time
(the second harmonic dominates the third one at the beginning of the
simulation, but their amplitudes become comparable arount $t=150$).
This results in the spectra shown in figure
\ref{fft344}, where two additional large amplitude oscillating peaks are   
present in the spectrum of the exact solution at $t=344.95$ (upper left plot) and absent for the approximate solution (upper right plot). The two middle plots show
the bead displacements for the exact solution
at two different times, revealing a large scale
standing wave that strongly modulates the basic pattern.
The lower plot corresponds to the approximate 
solution given by the DpS equation, which is 
very weakly modulated and fails to reproduce the
correct transitory dynamics of (\ref{nc}) in the present situation.
However, in both models broad bands of unstable modes 
become slowly amplified at larger times by modulational instability
(this appears clearly after $t=600$ for system (\ref{nc}) and $t=700$
for the DpS equation), and around $t=900$ the
bead displacements show rather similar features in both cases. 

\begin{figure}[!h]
\psfrag{k}[0.9]{\huge $k$}
\psfrag{m}[1][Bl]{\huge $|\hat{x}_k(t) |$}
\psfrag{n}[0.9]{\huge $n$}
\psfrag{x}[1][Bl]{\huge $x_n(t)$}
\begin{center}
\includegraphics[angle=-90,scale=0.22]{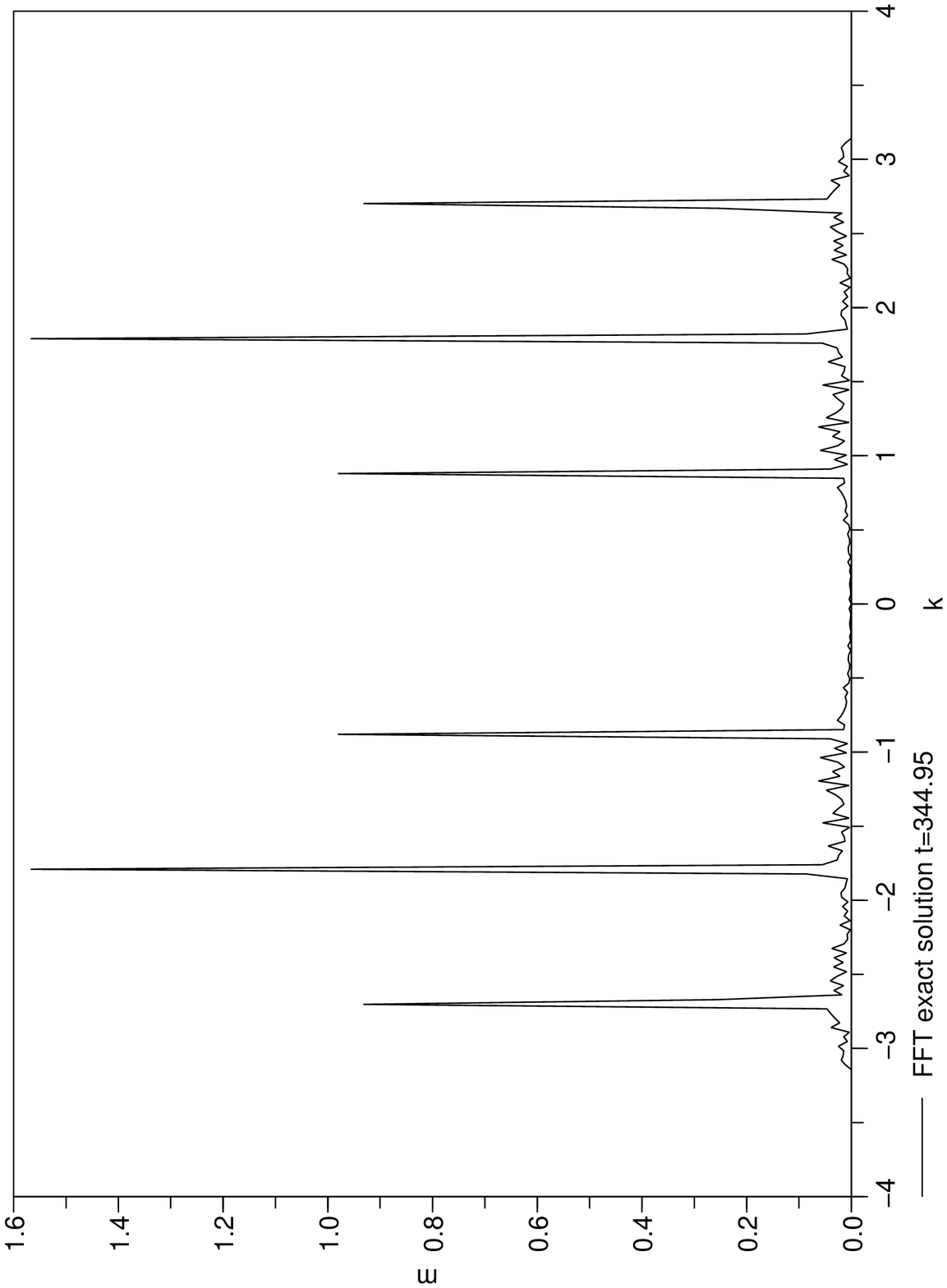}
\includegraphics[angle=-90,scale=0.22]{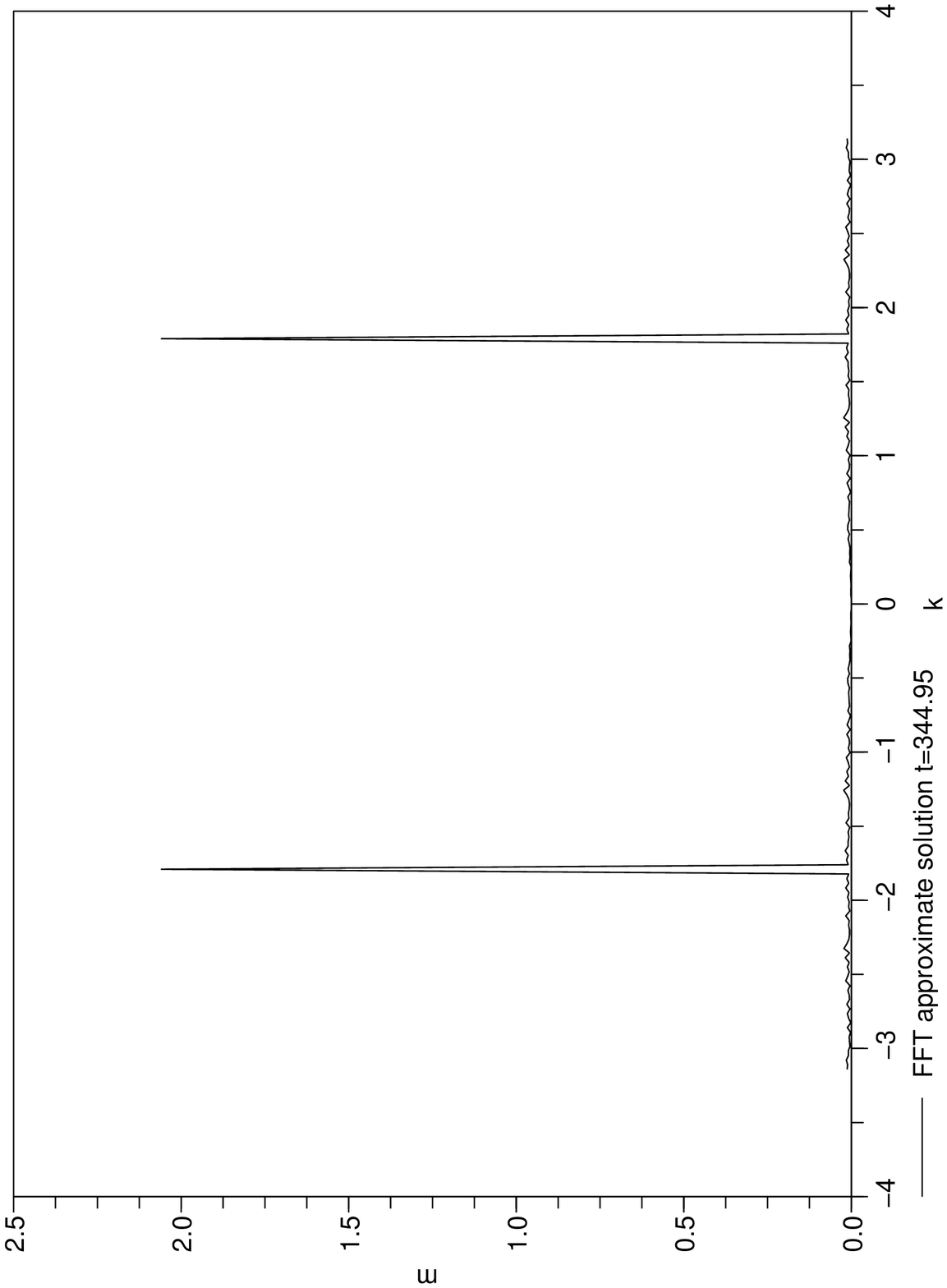}
\includegraphics[angle=-90,scale=0.22]{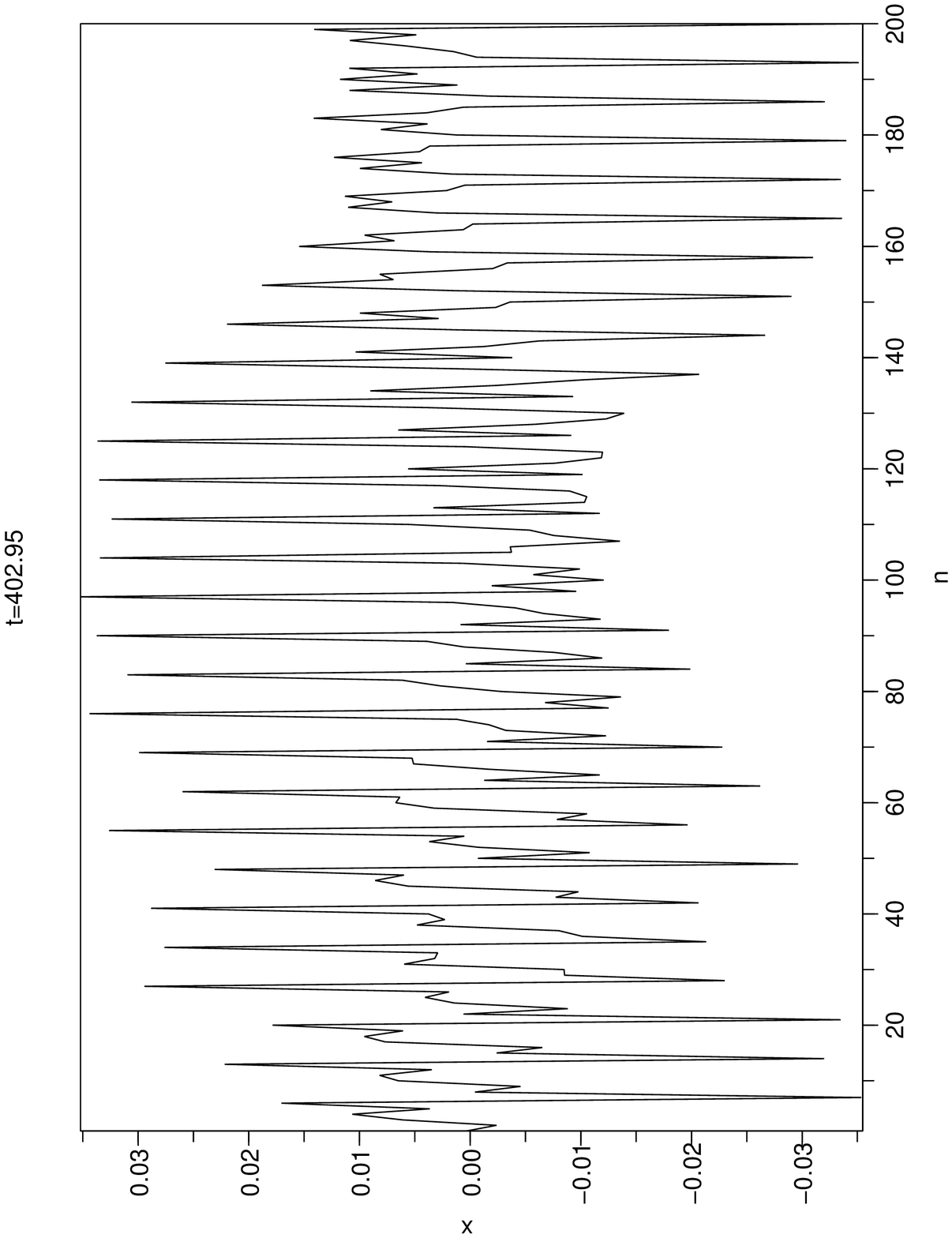}
\includegraphics[angle=-90,scale=0.22]{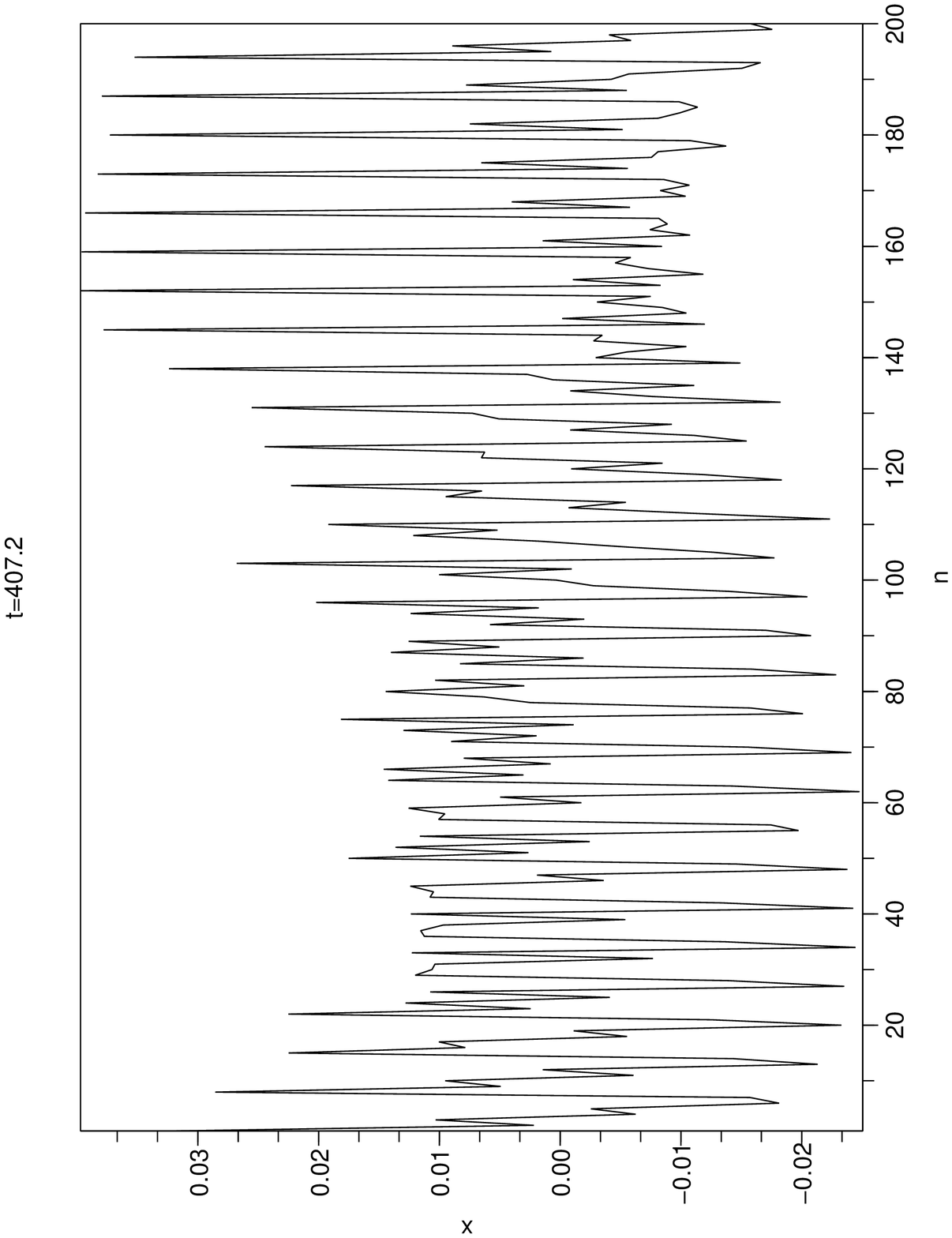}
\includegraphics[angle=-90,scale=0.22]{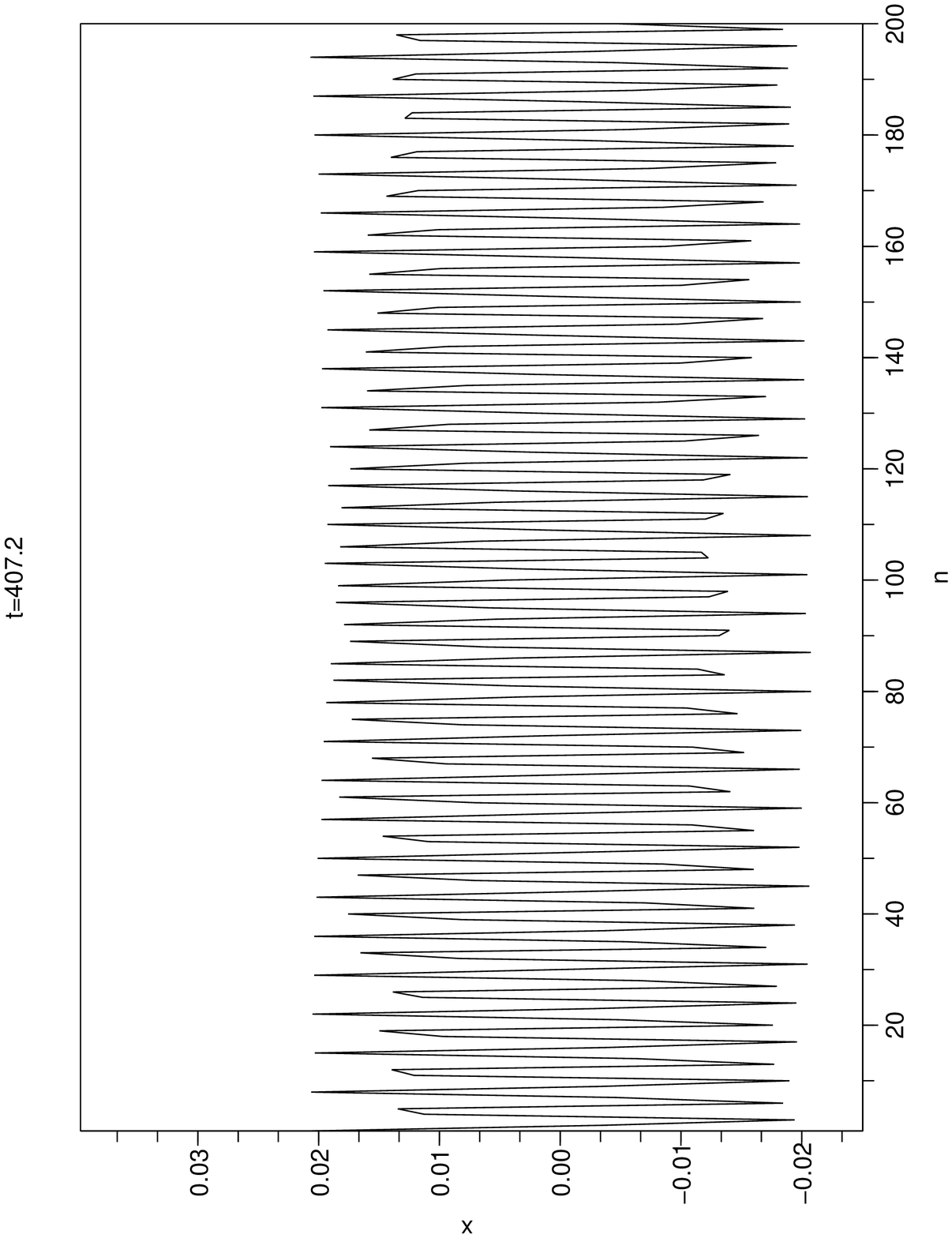}
\end{center} 
\caption{\label{fft344} 
Upper plots~: 
magnitude of the discrete spatial Fourier transforms of the
exact solution (left column) and approximate solution
(right column), for a random perturbation of a small amplitude
travelling wave with wavenumber $q=57\pi /100$,
at time $t=344.95$. Middle plots~: bead displacements for
the exact solution at $t=402.95$ and $t=407.2$.
Lower plot~: bead displacements for
the approximate solution at $t=407.2$.}
\end{figure}

\subsection{\label{loc}Localized initial conditions}

In section \ref{unstable} we have seen that the modulational instability of
certain travelling waves generates travelling or static breathers,
a phenomenon which can be captured by the DpS equation.
In what follows we illustrate with a few examples 
how the DpS equation approximates the
evolution and interaction of these localized structures.

\ve 

We start by generating
single breathers similar to the ones that result from the modulational instability
of the $q=\pi$ mode. For this purpose, we consider the spatially antisymmetric
homoclinic solution $\{ a_n \}$ of (\ref{dpsstn}) represented in figure
\ref{bcsc} (bottom left plot), the associated breather solution of (\ref{dps}) 
\begin{equation}
\label{breatherdps}
A_n (\tau ) = \frac{1}{2}\, {\mu}^{\frac{1}{\alpha -1}}\, {a}_n\, e^{i\, (\omega_{\rm{sw}} -1)\, \tau },
\ \ \
\mu = (\omega_{\rm{sw}} -1)\, 2^\alpha \tau_0 >0,
\end{equation}
and the corresponding breather ansatz (\ref{ansatzapproxst2}).
Using the initial condition determined by (\ref{ansatzapproxst2}), we integrate (\ref{nc}) numerically.
Computations are performed for a chain of $N=50$ particles with periodic boundary conditions.

The results are shown in figure \ref{plot18} for different amplitudes of the
initial breather profile obtained by varying $\mu$. 
The first case corresponds to
$|x_{N/2 +1}(0)|\approx 9,\! 7.10^{-3}$, i.e. $\omega_{\rm{sw}} \approx 1.1$
(initial condition shown in the top left plot).
By construction,
the profile of the approximate solution given by the DpS equation 
remains similar to the top left plot for all times,
just oscillating periodically in time. One observes the same qualitative behaviour
for the exact solution. As shown by the top right plot, the profiles of
the exact solution at $t=569.4$ (continuous line) and
the approximate solution at $t=565.95$ (dots) are very close
(we compare both profiles with a small time-shift, because 
the exact and approximate solutions develop a half-period
phase-shift after approximately $83$ breather periods).
These results
show the efficiency of
approximation (\ref{ansatzapproxst2}) on large time intervals of the order
of $100$ multiples of the breather period.  As shown by the bottom left plot, this localized
$3$-sites breather solution qualitatively corresponds to the central breather
forming in figure \ref{plot28}.
In the second case, the initial profile has a larger amplitude
$|x_{N/2 +1}(0)|\approx 0.34$, which corresponds to $\omega_{\rm{sw}} \approx 1.6$.
The bottom right plot (bead displacements at $t=293.2$) show that small
dispersive waves are emitted by the exact solution. 
On the contrary, the DpS equation
does not yield any dispersion in that case, only
time-periodic oscillations of the profile given by the initial condition.

\begin{remark}
We obtain qualitatively similar results with the spatially symmetric homoclinic
solution $\{ a_n \}$ of (\ref{dpsstn}) (figure \ref{bcsc}, top left plot), except the exact
solution rapidly develops a slight asymmetry which is absent in the approximate solution.
This is due to the fact that (\ref{nc}) does not possess the invariance $x_n \rightarrow x_{-n}$,
whereas (\ref{dfnls}) has the invariance $A_n \rightarrow A_{-n}$. The asymmetry
of the approximate solution could be recovered by taking into account
the higher order term $R_n$ present in (\ref{calculapp}).
\end{remark}

\begin{figure}[!h]
\psfrag{n}[0.9]{\huge $n$}
\psfrag{x}[1][Bl]{\huge $x_n(t)$}
\psfrag{m}[0.9]{\tiny $n$}
\psfrag{y}[1][Bl]{\tiny $x_n(t)$}
\begin{center}
\includegraphics[angle=-90,scale=0.22]{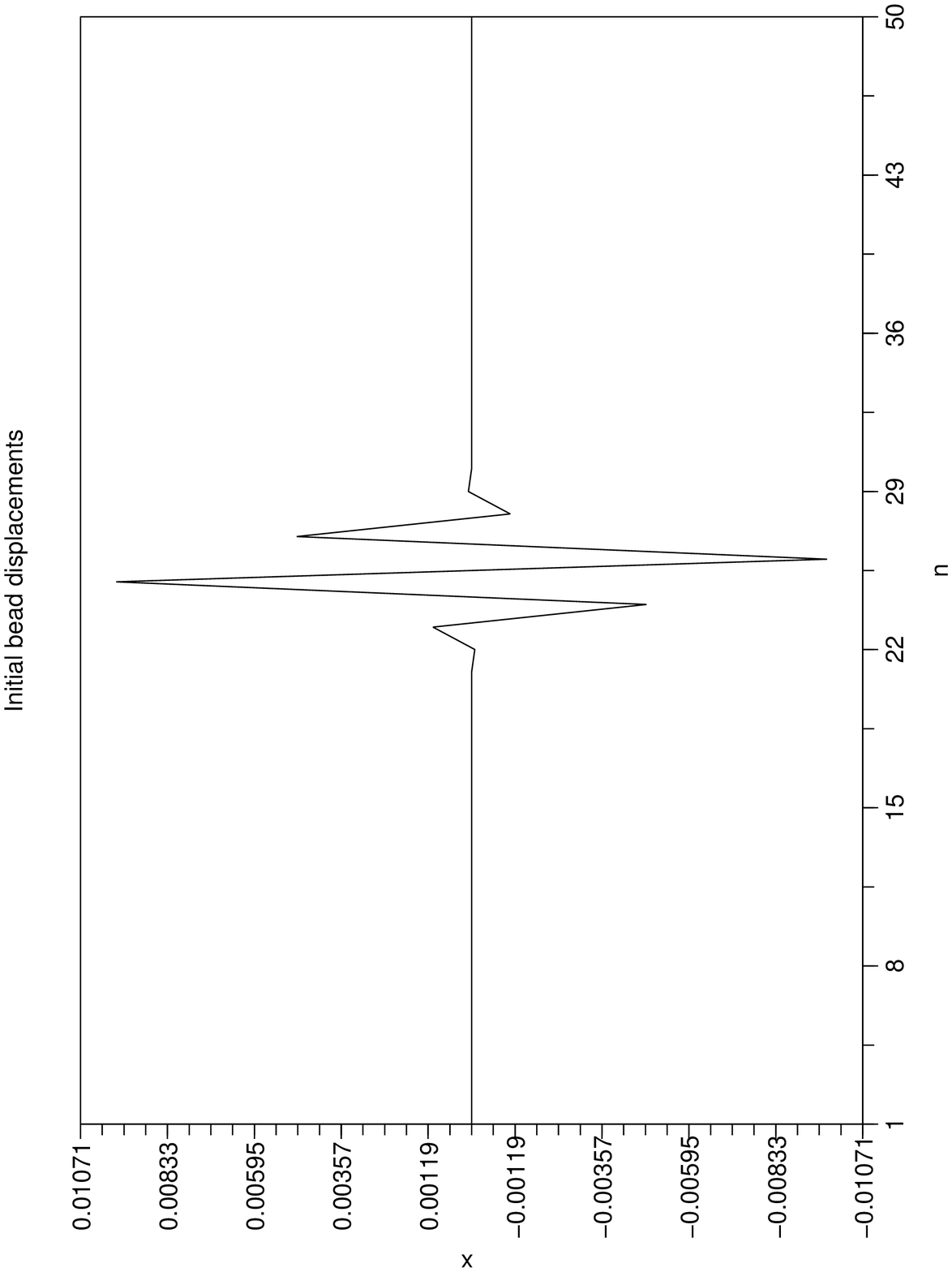}
\includegraphics[angle=-90,scale=0.22]{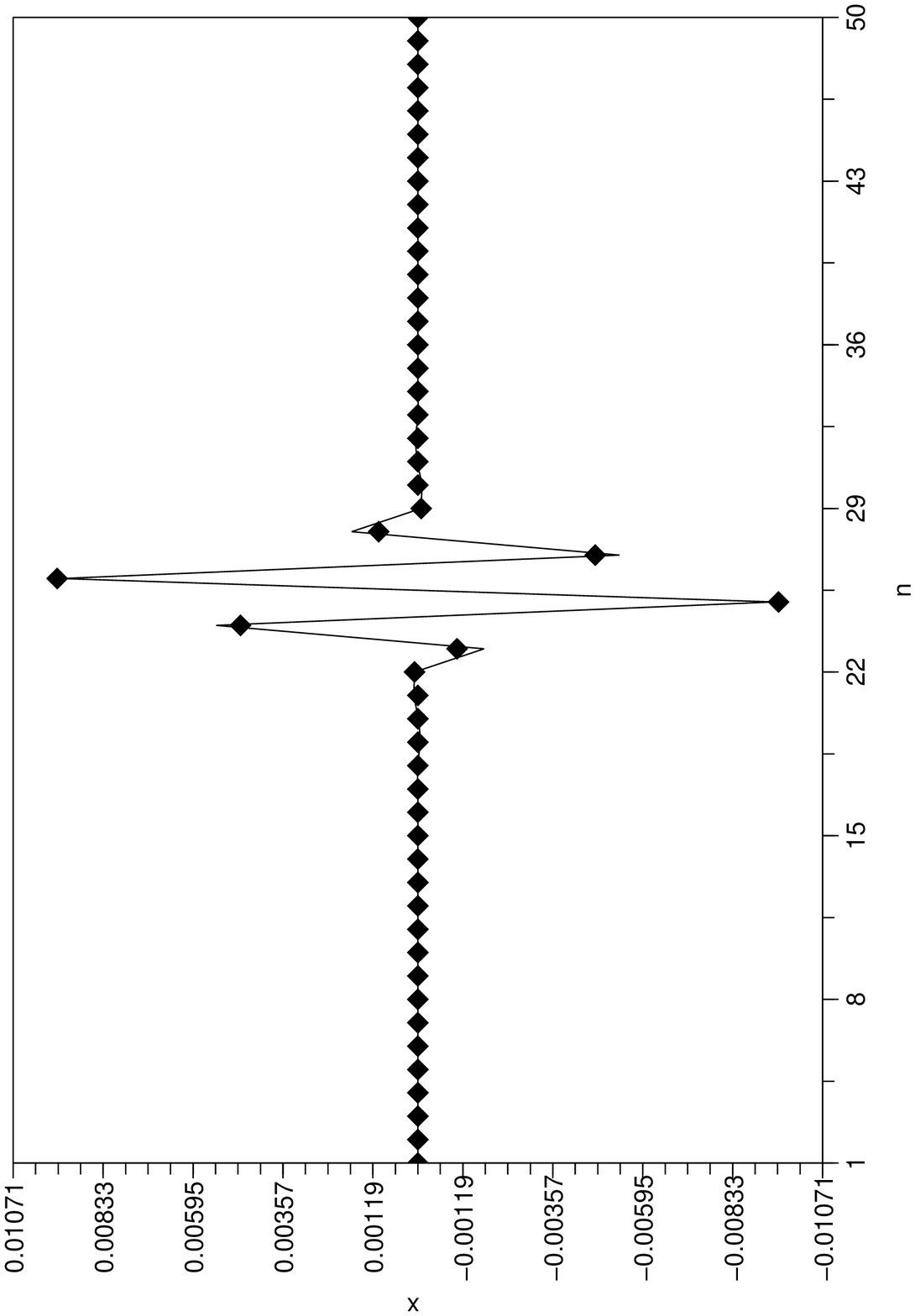}
\includegraphics[scale=0.31]{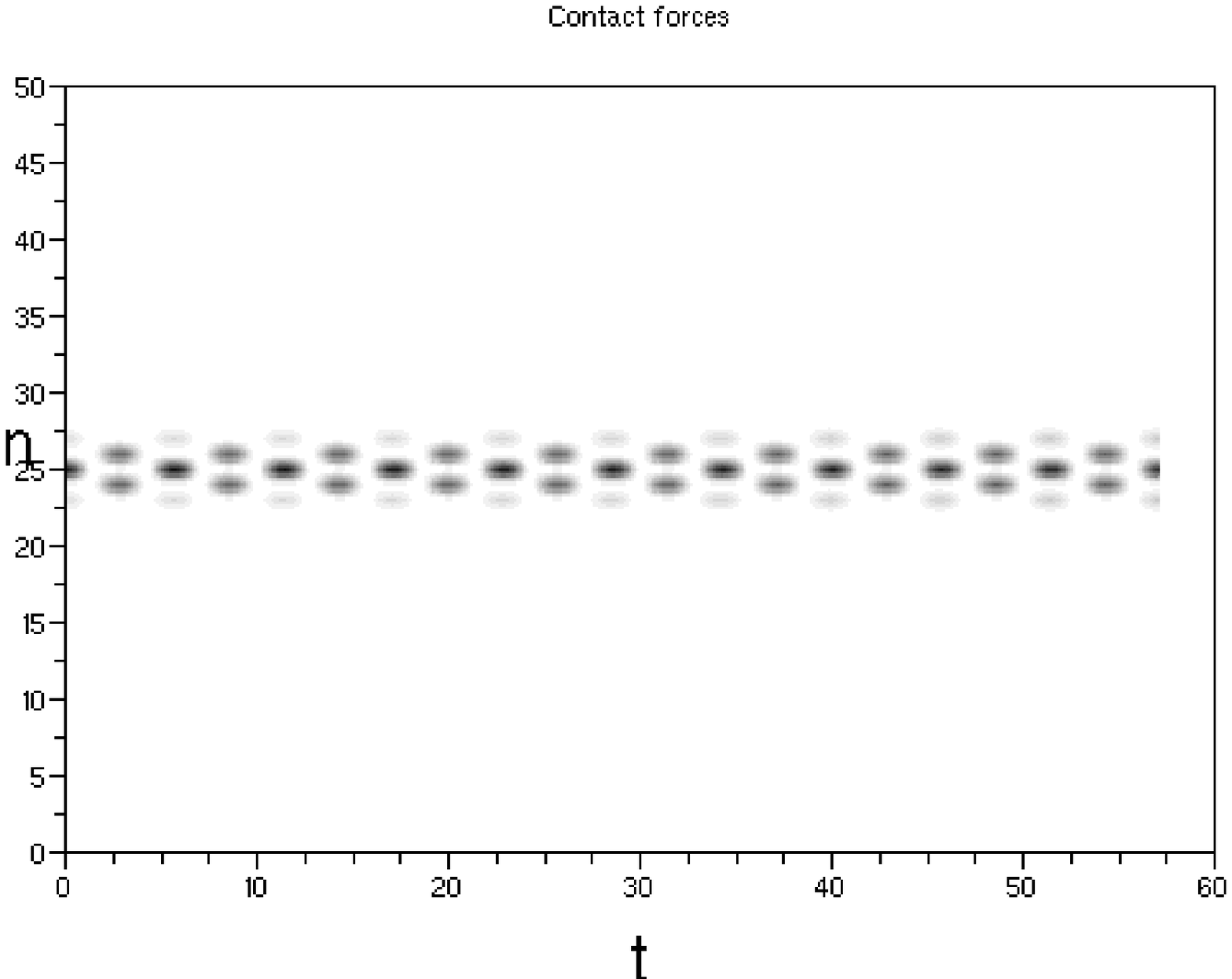}
\includegraphics[scale=0.31]{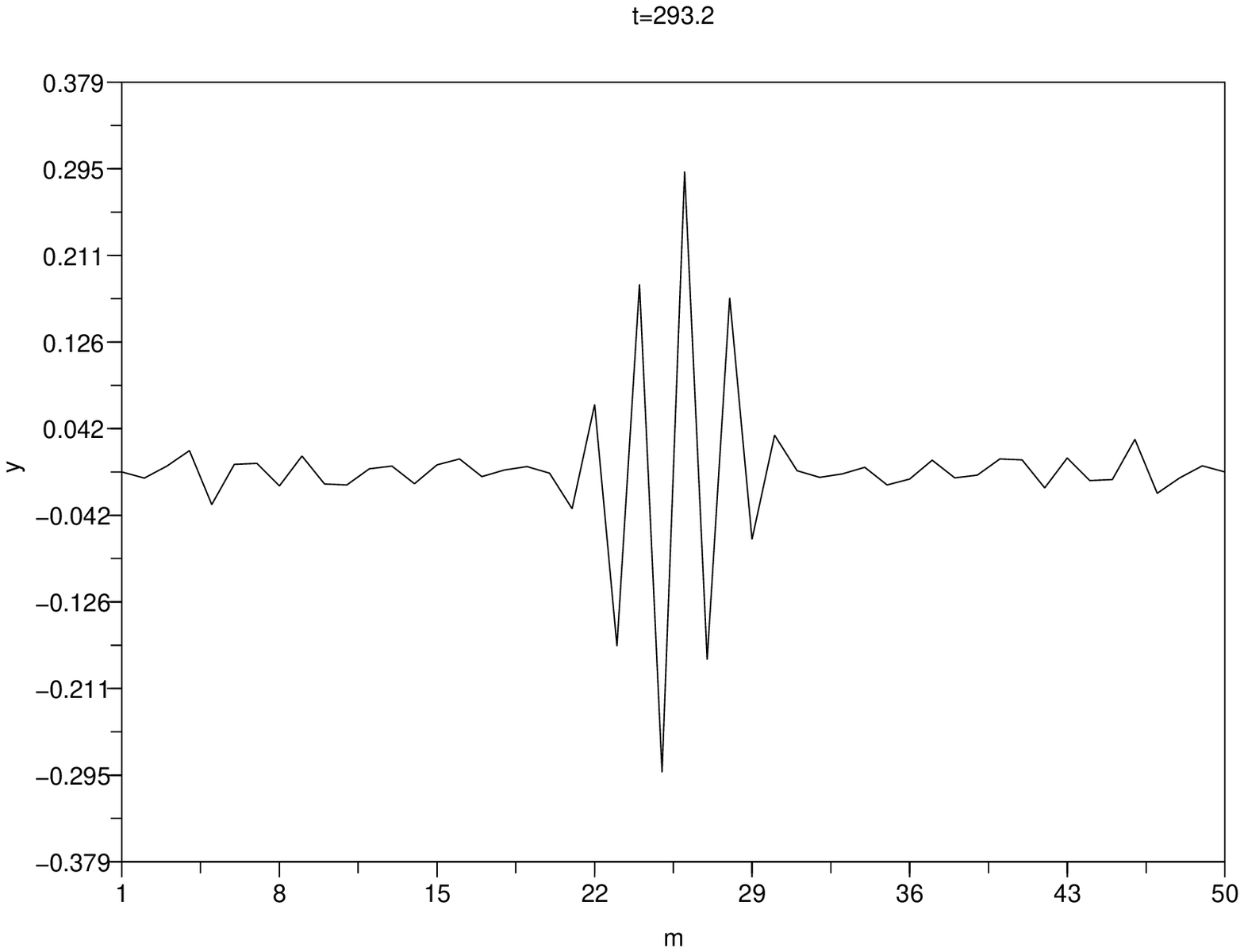}
\end{center} 
\caption{\label{plot18} 
Top left plot~: initial condition $x_n (0)$ determined by (\ref{ansatzapproxst2}),
where the amplitudes $a_n$ correspond to a spatially localized
solution of (\ref{dpsstn}) and initial velocities are set to $0$. This initial condition defines
an approximate breather solution with frequency $\omega_{\rm{sw}}\approx 1.1$. 
Top right plot~: comparison of
the exact solution at $t=569.4$ (continuous line) with
the approximate solution at $t=565.95$ (dots).
Bottom left plot~: spatiotemporal evolution of the interaction forces for the exact solution (grey levels), where
a succession of black spots reveals the breathing dynamics.
Bottom right~: exact solution at $t=293.2$, for the same type of initial condition 
with amplitude approximately $30$ times larger than in the previous case
($\omega_{\rm{sw}}\approx 1.6$).}
\end{figure}

\ve

The above computation shows that the DpS equation can approximate single breather
solutions of (\ref{nc}) provided their amplitude is sufficiently small, i.e. their
frequency sufficiently close to $1$. Now let us illustrate how the DpS equation
accounts for the interaction of small amplitude breathers, on long times scales up to
$600$ breather periods. 
Initial conditions for (\ref{nc}) are generated
using the ansatz (\ref{ansatzapproxst2}) with $\omega_{\rm{sw}} \approx 1.1$,
for initial envelopes $\{ a_n \}$ that do not correspond to exact solutions of
equation (\ref{dpsstn}). The DpS equation is initialized by setting $\tau =0$
in (\ref{breatherdps}). 

The left side plots of figure \ref{loc1} correspond to
initial displacements almost compactly supported on $12$ lattice sites, modulating
binary oscillation with amplitude close to $0.016$ (top plot). Initial velocities are set to $0$.
The spatiotemporal evolution of the interaction forces is shown for equation (\ref{nc})
(middle plot) and the DpS approximation deduced from ansatz (\ref{ansatz})  (bottom plot). 
In both cases, the initial packet splits into a static breather and
two pairs of breathers travelling slowly in opposite directions. Further crossings of the
pulses occur, with some interactions resulting in 
small phase shifts or velocity changes. The global sequence of collisions is different for
the exact and approximate solutions 
(because collisions result in differences that accumulate with time),
but these computations show nevertheless a good qualitative agreement between
the exact and approximate dynamics.

\ve

The right side plots of figure \ref{loc1} are obtained for
initial displacements corresponding to two small amplitude static breathers
initially separated by one lattice site (top plot), with amplitude close to $0.01$ 
and zero initial velocities.
Merging of the two breathers occurs after a long transcient,
both for the exact (middle plot) and approximate (bottom plot)
solutions. However this occurs 
at largely different times, around $175$ multiples of the breather period
for the exact solution, and twice more for the approximate solution.
In that case, the DpS approximation gives consequently a good
qualitative picture of the slow interaction of the two breathers,
but is quantitatively not satisfactory to predict the finite 
lifetime of the two-breather bound state.

\ve

\begin{figure}[!h]
\psfrag{n}[0.9]{\small $n$}
\psfrag{x}[1][Bl]{\small $x_n(t)$}
\begin{center}
\includegraphics[scale=0.3]{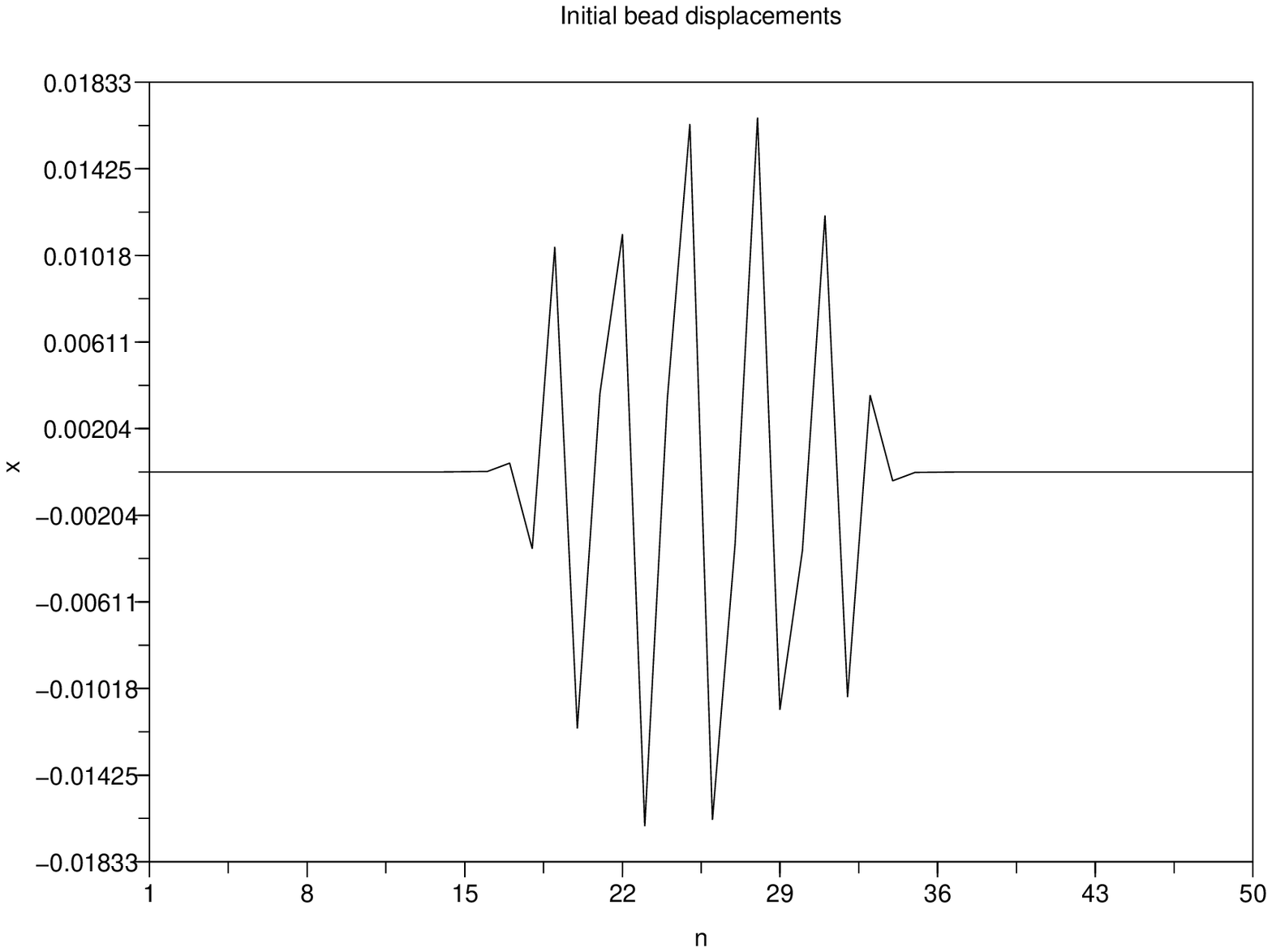}
\includegraphics[scale=0.3]{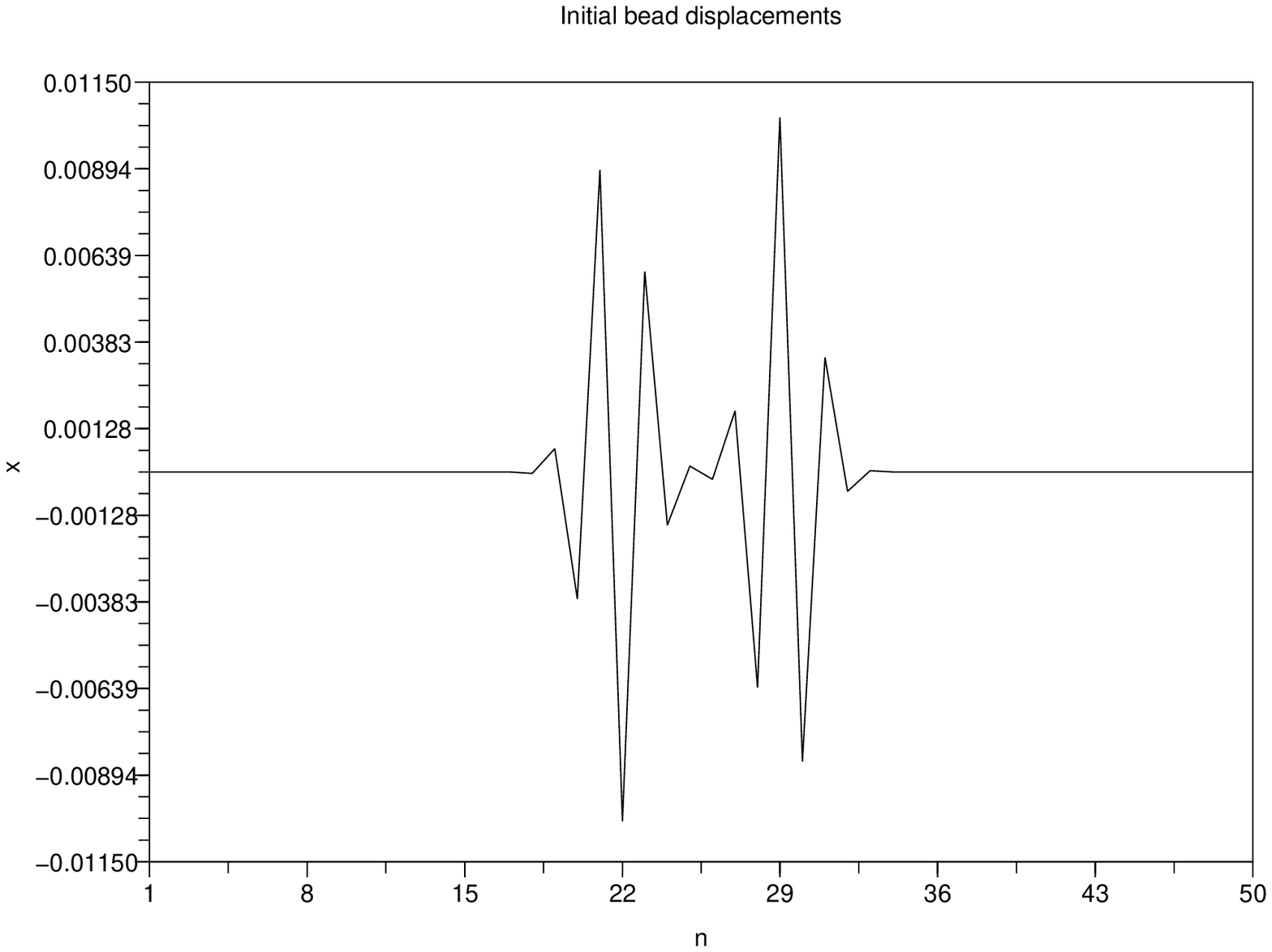}
\includegraphics[scale=0.3]{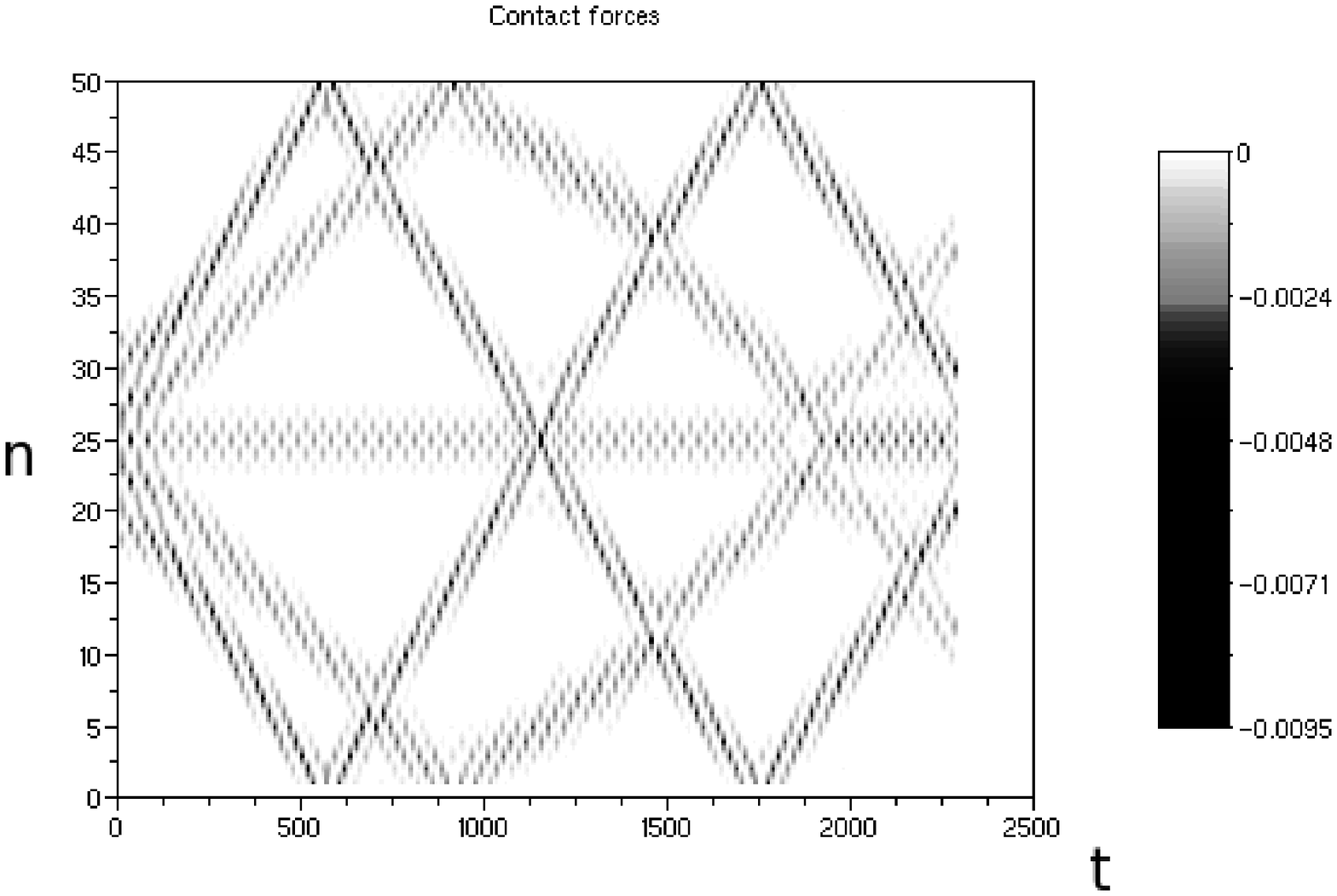}
\includegraphics[scale=0.3]{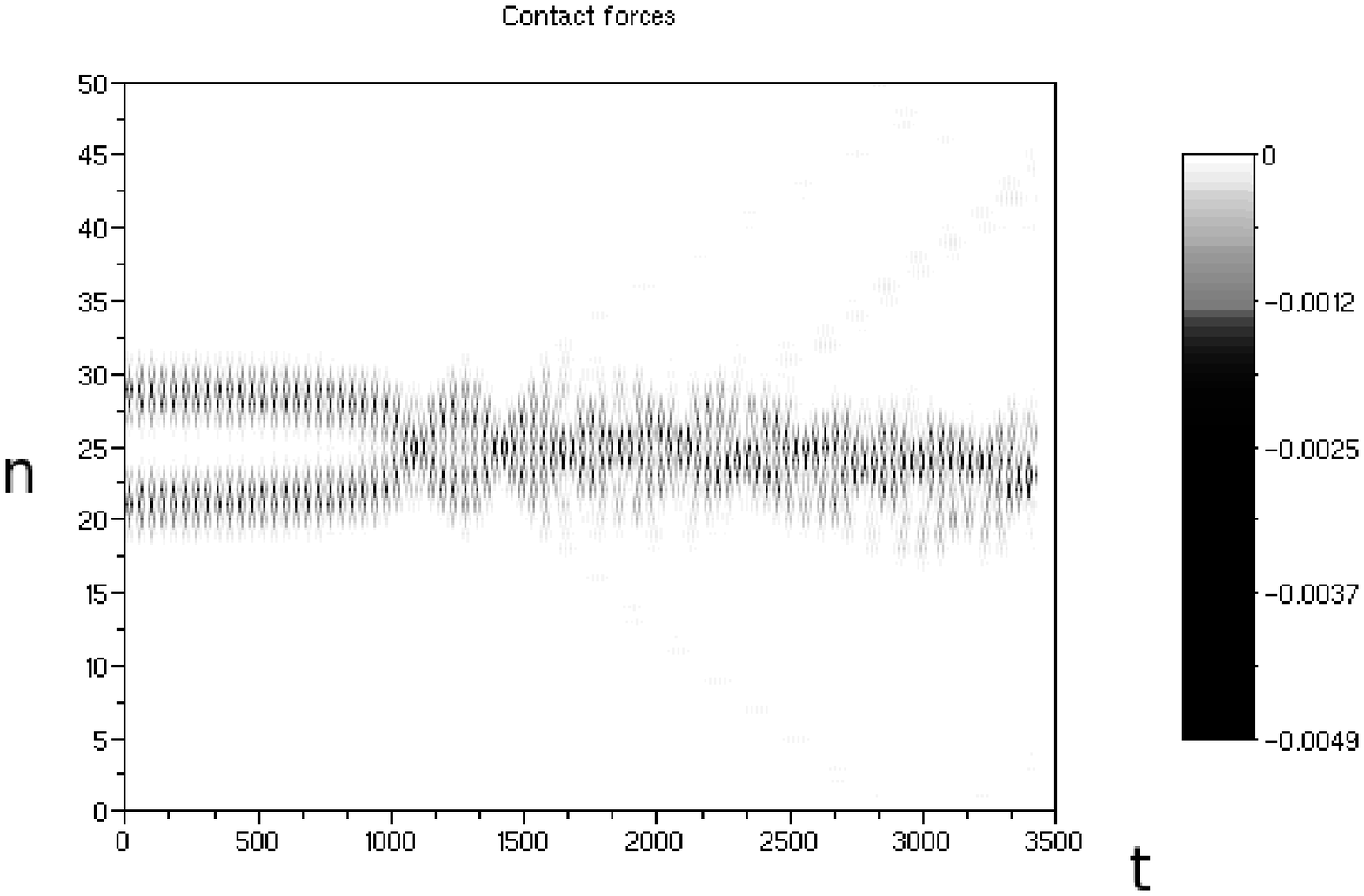}
\includegraphics[scale=0.3]{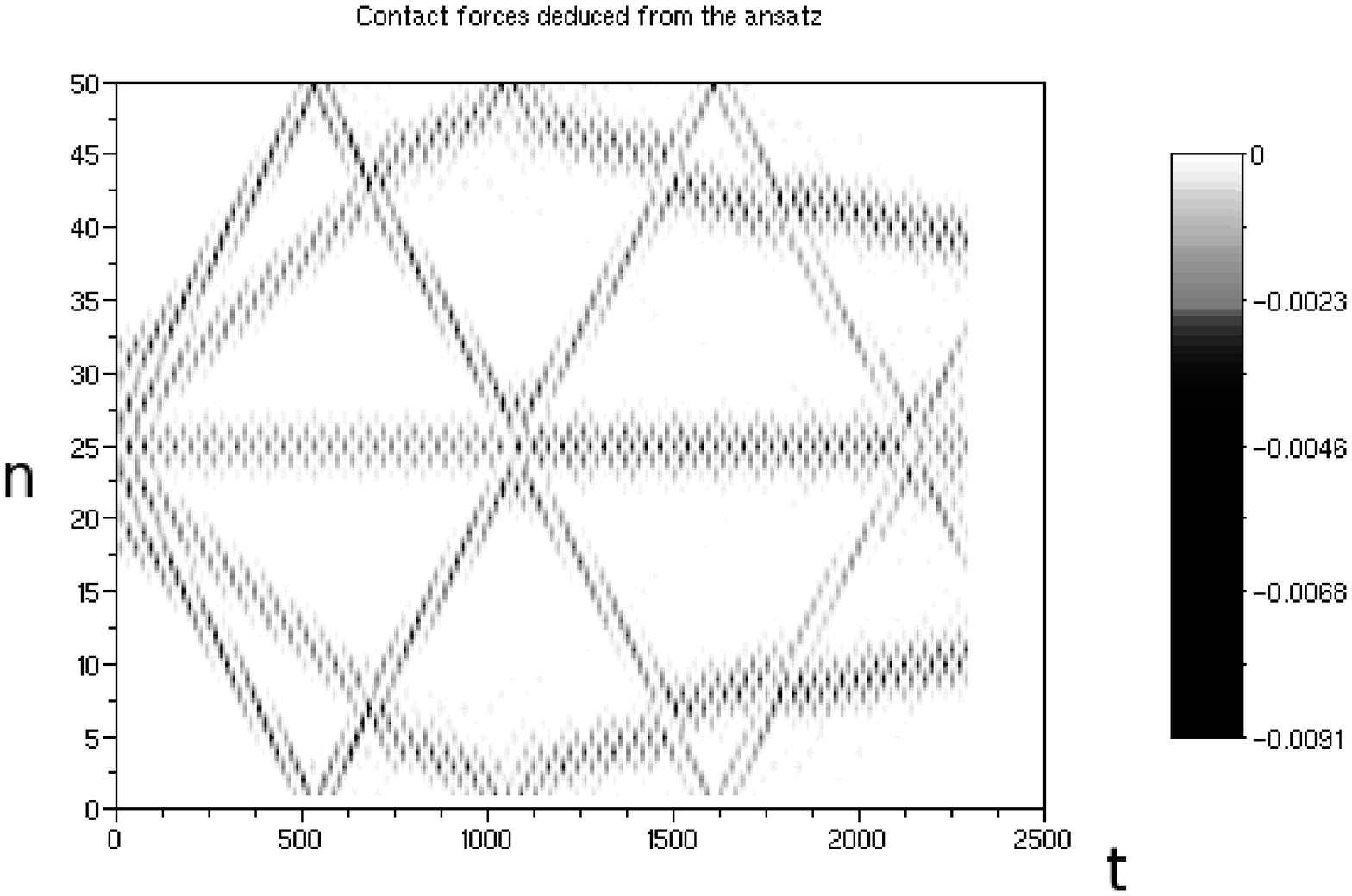}
\includegraphics[scale=0.3]{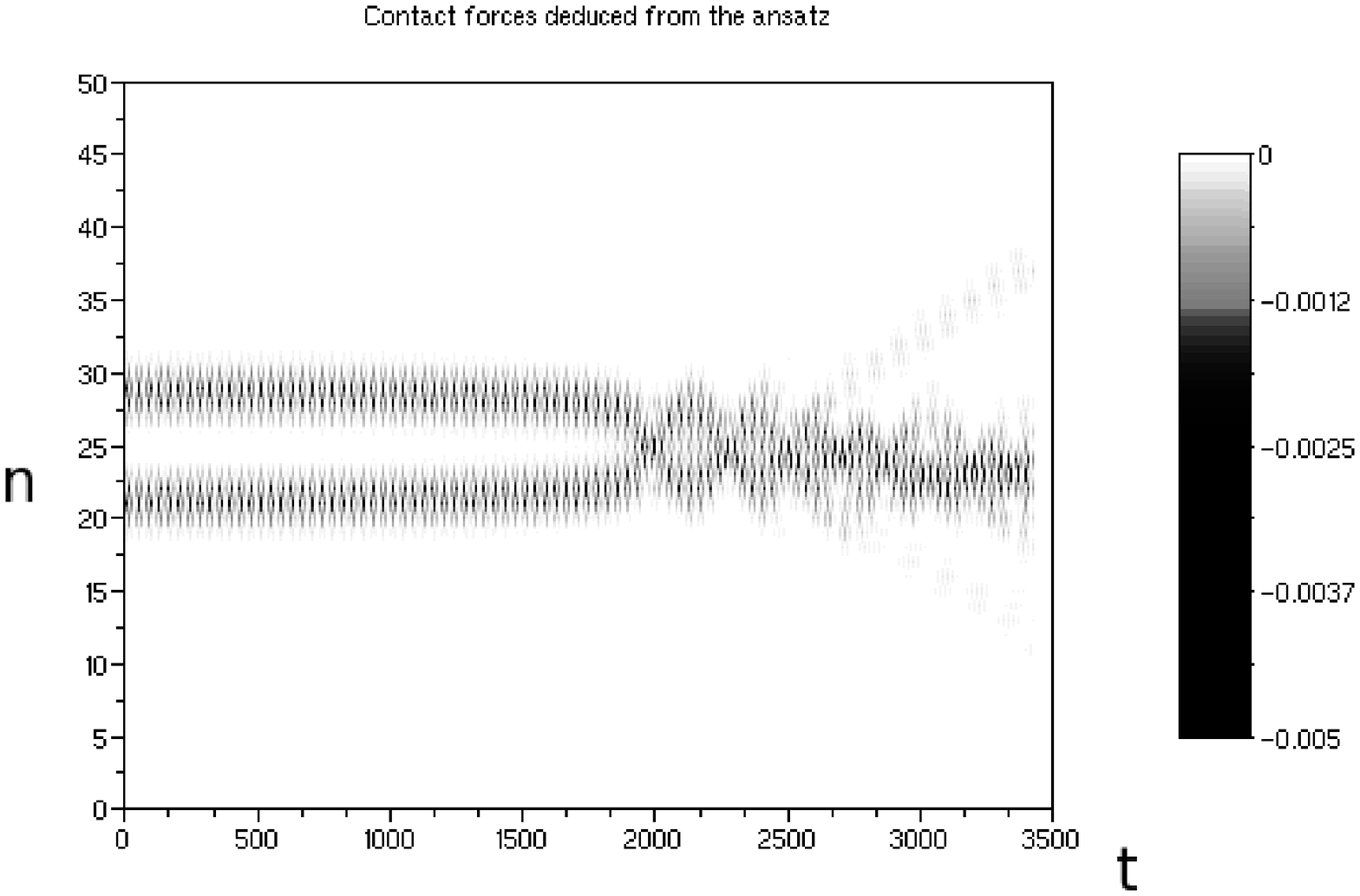}
\end{center} 
\caption{\label{loc1} 
First row~: bead displacements corresponding to two different initial conditions (initial velocities are set to $0$).
Second row~: spatiotemporal evolution of the interaction forces of system (\ref{nc}) for the above initial conditions.
Third row~: interaction forces corresponding to the approximate solutions deduced from the DpS equation
for each initial condition.}
\end{figure}

\ve

Let us end this section with some remarks concerning boundary conditions.
So far we have computed discrete breathers in chains of $N$ beads
with $N\geq 50$ and periodic boundary conditions. From an experimental point of view, system
(\ref{nc}) corresponds in that case to a closed ring of $N$ beads,
where $N$ must be large enough in order to avoid curvature effects. 
However most experiments
with Newton's cradles have been performed with a small number of beads arranged linearly,
which corresponds to system (\ref{nc}) with free end boundary conditions.  
It is important to stress that discrete breathers can be also generated in this context,
as shown by figure \ref{fini}. The left plots show displacements in a chain
of $10$ beads at different times, obtained by numerical integration of system (\ref{nc}) 
with free end boundary conditions. The results show the existence of a spatially
antisymmetric breather involving mainly $4$ beads. The initial condition is generated
using the same method and parameter values as for the small amplitude breather
of figure \ref{plot18} (top left plot), except we use free end boundary conditions
in the stationary DpS equation (\ref{dpsstn}) to compute the breather ansatz (\ref{ansatzapproxst2}).
The right plots show the approximate solution (\ref{ansatzapproxst2})
deduced from the DpS equation. The latter agrees very well with the exact solution,
except both differ by a phase-shift that is increasing with time.  

\begin{figure}[!h]
\psfrag{t}[0.9]{\small $t$}
\psfrag{s           }[1][Bl]{\tiny \raisebox{0.1cm}{\hspace*{-2ex}$\delta x=0.02$ $\updownarrow$}}
\begin{center}
\includegraphics[scale=0.30]{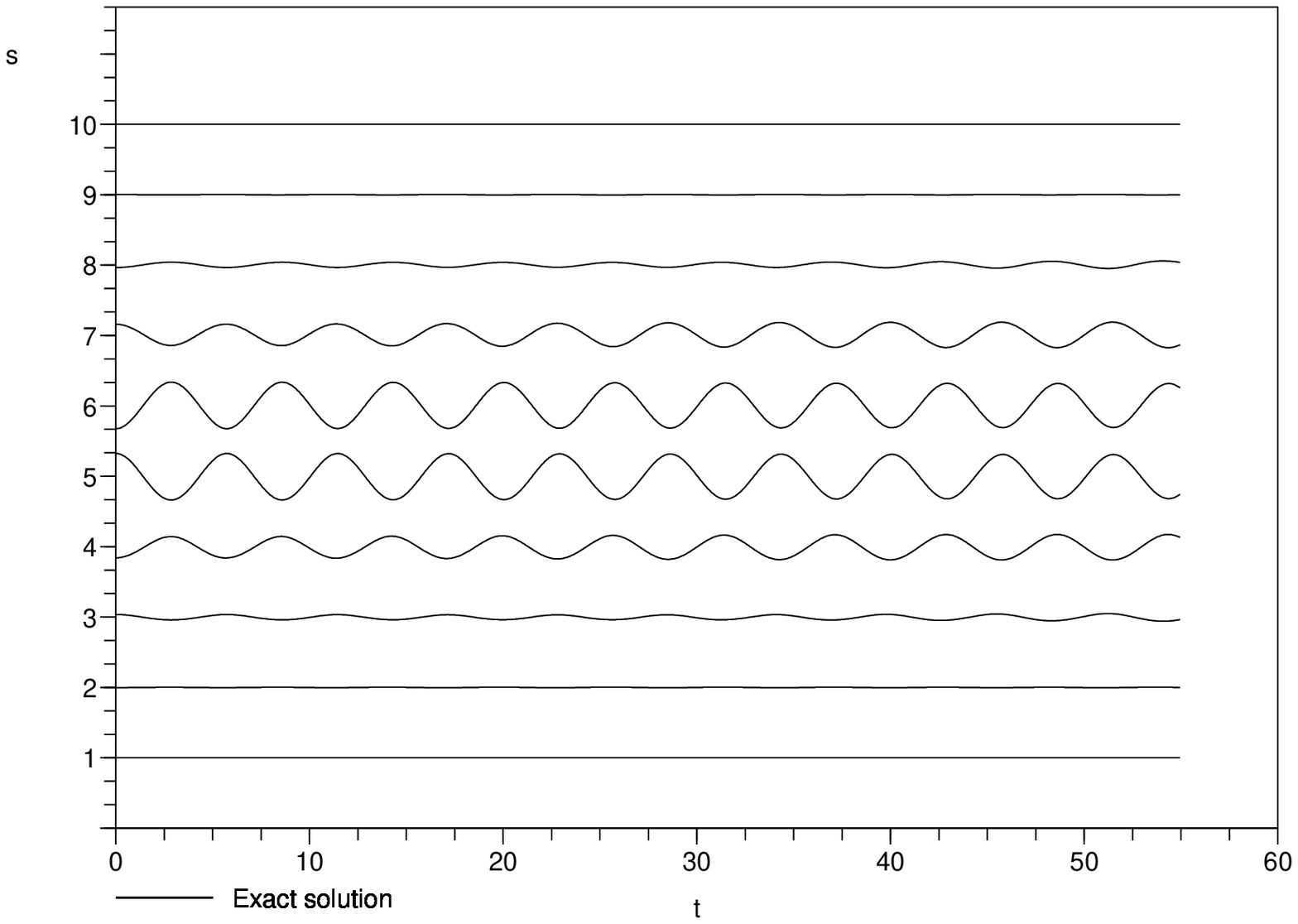}
\includegraphics[scale=0.30]{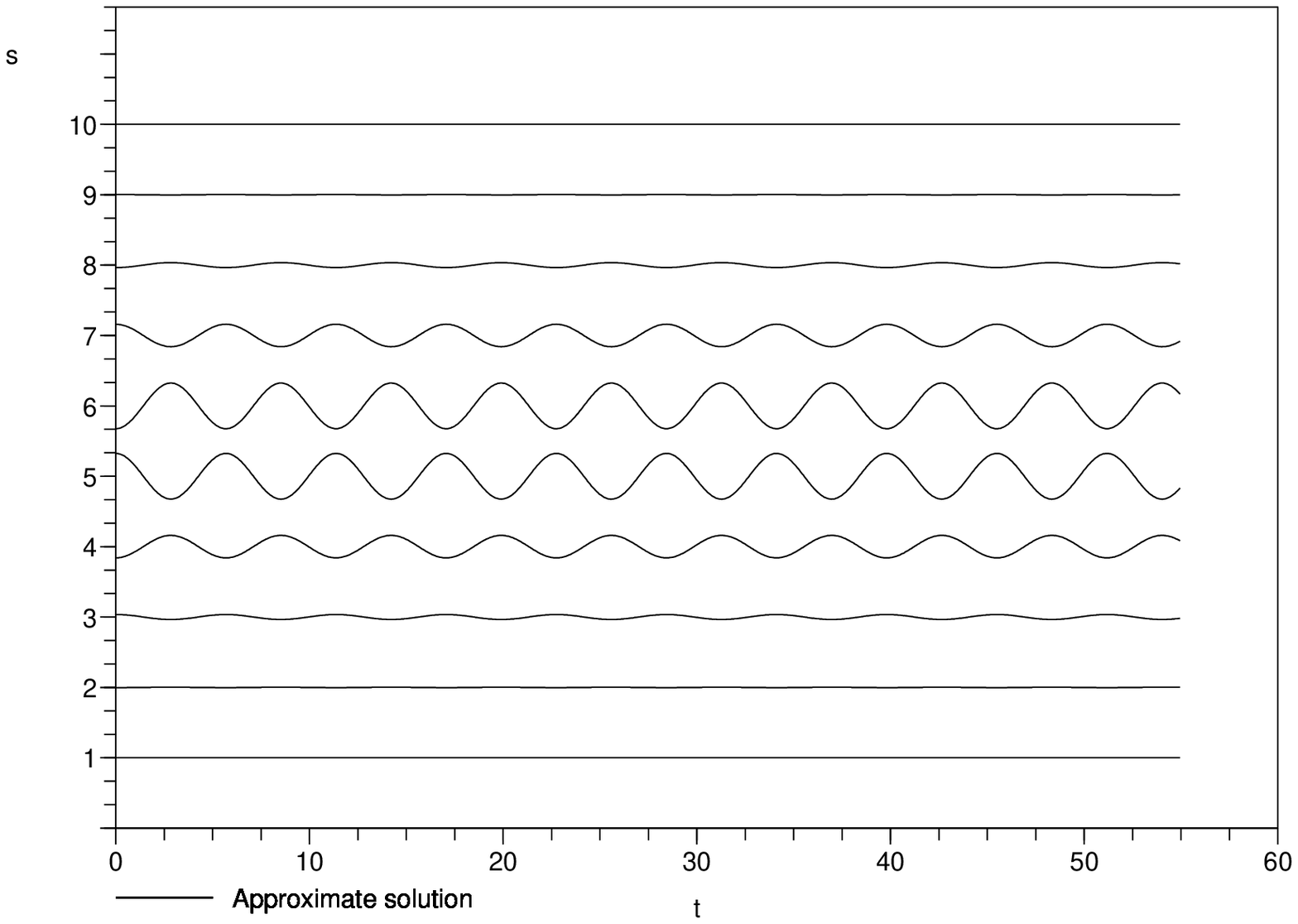}
\includegraphics[scale=0.30]{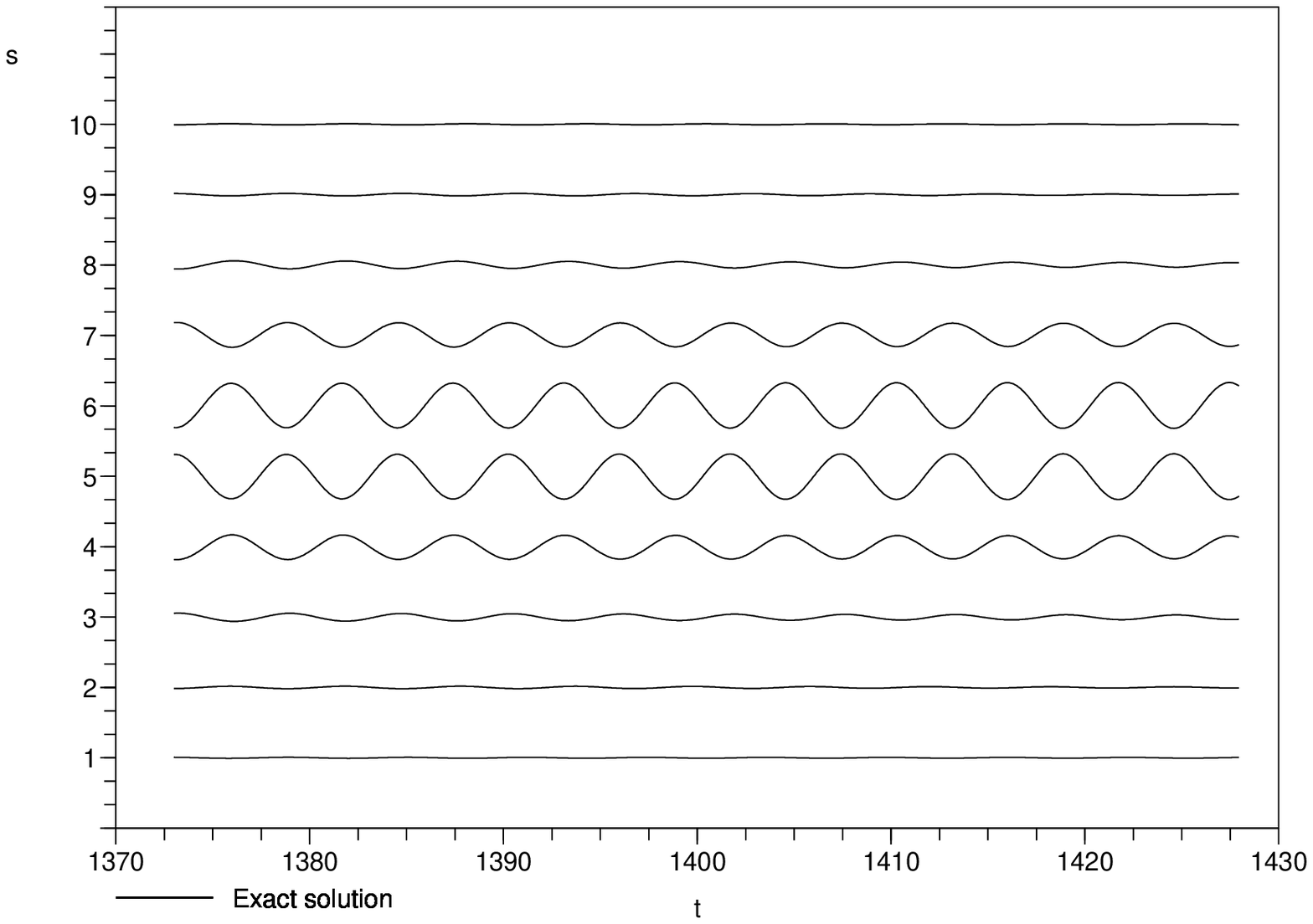}
\includegraphics[scale=0.30]{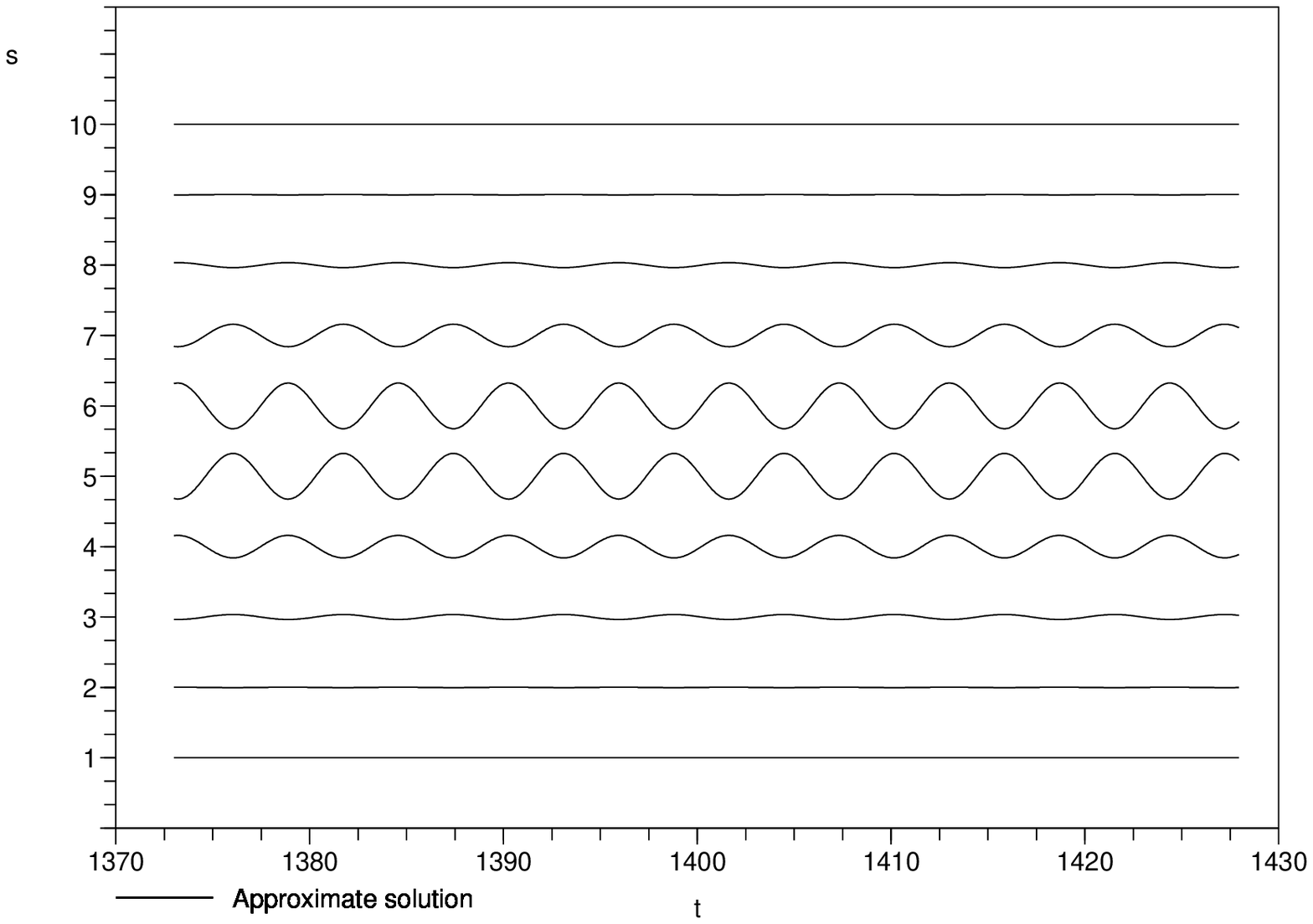}
\end{center} 
\caption{\label{fini} 
Left plots~: bead displacements for the exact solution, obtained 
by integrating (\ref{nc}) numerically with $10$ beads and free end boundary conditions.
The initial condition is determined by (\ref{ansatzapproxst2}),
where the amplitudes $a_n$ correspond to a spatially localized
solution of equation (\ref{dpsstn}) with free end boundary conditions.
Right plots~: approximate breather solution (\ref{ansatzapproxst2}) with frequency $\omega_{\rm{sw}}\approx 1.1$
deduced from the DpS equation. Beads are labelled with an index $n$ ranging from $1$ to $10$,
indicated on the vertical axis. Their
displacements are plotted in different time intervals, at early times
of the simulation (top row) and after approximately $250$ breather periods
(bottom row). The scale of bead displacements is indicated at the top of the vertical axis.
}
\end{figure}

\section{\label{conclu}Conclusion}

Although Newton's cradle has a long history which dates back to the $17$th century
\cite{hutzler}, it is still considered as a benchmark in
recent research at the cutting edge of impact mechanics \cite{liu1,liu2}.
Moreover, this system turns out to be very interesting from the modern
perspective of nonlinear science. It allows a simple experimental
realization of three nontrivial 
nonlinear effects, namely fully nonlinear dispersion \cite{ros},
a limited smoothness of nonlinear interactions and
their unilaterality. The last two aspects are still rather unexplored
in the context of nonlinear waves. 
As we have seen in this paper, these different features require to revisit several classical topics, 
i.e. local bifurcations of periodic waves,
modulation equations with fully nonlinear dispersion described by discrete $p$-Laplacians, 
modulational instability and the bifurcation and interaction of
discrete breathers. These topics are obviously very broad, and our
work is only a first step towards a better understanding of these questions
in a new context.  

From an analytical point of view, we know by theorem \ref{existthm}
the existence of exact periodic travelling waves of (\ref{nc})
close to small amplitude approximate solutions given by the DpS equation.
The above numerical simulations have revealed that the
DpS equation has in fact a much wider applicability. Indeed, it
captures other important features of the dynamics of (\ref{nc})
in the weakly nonlinear regime, namely
modulational instabilities, the existence of static and travelling breathers,
and several types of interactions of these localized structures.
An interesting open question is to justify mathematically this
close relation between the dynamics of (\ref{nc}) and
the DpS equation, for well-prepared initial conditions close to (\ref{ansatz}) at $t=0$
and large finite time intervals. Interestingly, 
such results have been proved in a different context
for the DNLS equation (\ref{dnls}), where this system is used to approximate solutions of the  
Gross-Pitaevskii equation with a periodic potential
(see \cite{peli} and references therein).

In the same spirit, it has been shown recently 
\cite{gian,gian3,bambusi2,schneider2}
that the continuum cubic nonlinear Schr\"odinger (NLS) equation
(which can be derived in continuum limits of the DNLS equation)
approximates the evolution of well-prepared
initial data in many nonlinear lattices.
It would be interesting to know if the above results extend to equation (\ref{cgnls}) 
or its generalizations to finite-wavelength continuum limits of
the DpS equation.
Our situation is more complex than in the NLS case, due to the
limited smoothness of (\ref{nc}) at the origin, and because
the Cauchy problem for (\ref{cgnls}) seems delicate to analyze. 
In addition, the analysis of continuum limits of the DpS equation
may provide interesting informations on the nonlinear waves
of (\ref{nc}), e.g. additional explicit approximate solutions. In particular, explicit compactly supported
travelling or standing waves have been found recently
in some variants of equation (\ref{cgnls}) that were formally
introduced in \cite{yan1,yan2} and in a continuum limit
of the D-QLS equation \cite{ros}
(see remark \ref{remros} p.\pageref{remros}). 
Such solutions may also
exist for continuum limits of the DpS equation, and may accurately
approximate the almost compactly supported breathers
computed in section \ref{twsol} and \ref{loc}.

Interesting open problems related to discrete breathers in Newton's cradle (\ref{nc})
and the DpS equation concern the analytical proof of their existence, their numerical 
continuation, as well as their stability, movability and interaction properties.
It will be interesting to compare their
collision properties in both models,
a problem which requires a statistical treatment due to
its sensitiveness to relative phases and breather positions \cite{bang}.
In the context of impact mechanics, discrete breathers may be used for
stringent numerical tests of multiple impact laws in model (\ref{nc}).
They may be also useful for practical purposes, since 
granular chains are interesting devices for 
the design of shock absorbers (see \cite{sen,frater} and references therein),
and the possibility might exist to trap a part of the energy of an incident wave
into static breathers in an efficient way. However, before comparing our theoretical results to
experiments it will be necessary to evaluate the impact of dissipation \cite{ricardo}
and spatial inhomogeneities on the nonlinear waves we have considered.

Another interesting question is the study of the
modulational instability of travelling waves in the
DpS equation and its more
exhaustive comparison with modulational instabilities in 
system (\ref{nc}).
One of the great interests of the DpS equation
stems from the fact that the first part of the problem
can be worked out analytically, as done in reference \cite{daumont} for the DNLS equation.
In addition a more precise numerical study of modulational
instabilities in Newton's cradle and the DpS equation 
(comparing e.g. the most unstable modes in both models)
will require statistics on power spectra of perturbations.

More generally, one can wonder if the DpS equation or higher-dimensional extensions
could capture some general features of nonlinear waves in more general granular media. 
Interesting questions related to nonlinear waves arise in this context, e.g. the possibility
of earthquake triggering by travelling or standing waves
in granular fault gouges \cite{johnson}.

\vsp{1}

\noindent
{\it Acknowledgements:} 
The author acknowledges stimulating discussions with
V. Acary, B. Brogliato, P. Kevrekidis and M. Peyrard.



\begin{thebibliography}{00}
\bibitem{abra}
M. Abramowitz and I.A. Stegun, eds. 
{\em Handbook of Mathematical Functions},
National Bureau of Standards, 1964
(10th corrected printing, 1970), www.nr.com.
\bibitem{acary}
V. Acary and B. Brogliato. Concurrent multiple impacts modelling: Case
study of a 3-ball chain, {\em Proc. of the MIT Conference on Computational
Fluid and Solid Mechanics}, 2003 (K.J. Bathe, Ed.), Elsevier Science,
1836-1841.
\bibitem{ap}
K. Ahnert and A. Pikovsky. Compactons and chaos in strongly nonlinear lattices,
{\em Phys. Rev. E 79} (2009), 026209.
\bibitem{alfimov}
G.L. Alfimov, V.A. Brazhnyi and V.V. Konotop. On classification of intrinsic localized modes  
for the discrete nonlinear Schr\"odinger equation, {\em Physica D 194} (2004), 127-150.  
\bibitem{aubry}
S. Aubry. Anti-integrability in dynamical and variational problems, 
{\em Physica D 86} (1995), 284-296.
\bibitem{aubkadel} 
S. Aubry, G. Kopidakis and V. Kadelburg. Variational proof for hard discrete breathers in some classes of Hamiltonian dynamical systems, {\em Discrete Contin. Dyn. Syst. Ser. B 1} (2001), 271-298.
\bibitem{bambusi}
D. Bambusi and A. Ponno. {On metastability in FPU}, {\em Comm. Math. Phys. 264} (2006), 539-561. 
\bibitem{bambusi2}
D. Bambusi, A. Carati and T. Penati. Boundary effects on the dynamics of chains of
coupled oscillators, {\em Nonlinearity 22} (2009), 923-945.
\bibitem{bang}
O. Bang and M. Peyrard. 
Generation of high-energy localized vibrational modes in nonlinear Klein-Gordon lattices, {\em Phys. Rev. E 53} (1996), 4143-4152.
\bibitem{boe}
N. Boechler, G. Theocharis, S. Job, P.G. Kevrekidis, M.A. Porter and C. Daraio.
Discrete breathers in one-dimensional diatomic granular crystals,
{\em Phys. Rev. Lett. 104} (2010), 244302.
\bibitem{cam}
D.K. Campbell et al, editors.
The Fermi-Pasta-Ulam problem : the first $50$ years,
{\em Chaos 15} (2005). 
\bibitem{ricardo}
R. Carretero-Gonz\'alez, D. Khatri, M.A. Porter, P.G. Kevrekidis and C. Daraio.
Dissipative solitary waves in granular crystals, {\em Phys. Rev. Lett. 102} (2009), 024102.
\bibitem{ceanga}
V. Ceanga and Y. Hurmuzlu. A new look at an old problem : Newton's cradle,
{\em J. Appl. Mech. 68} (2001) 575-584.
\bibitem{chat}
A. Chatterjee. Asymptotic solutions for solitary waves in a chain of elastic spheres, {\em Phys. Rev. E 59} (1999), 5912-5918.
\bibitem{daumont}
I. Daumont, T. Dauxois and M. Peyrard. Modulational instability : first step towards energy localization
in nonlinear lattices, {\em Nonlinearity 10} (1997), 617-630.
\bibitem{dreyer}
W. Dreyer, M. Herrmann and A. Mielke. 
Micro-macro transition in the atomic chain via Whitham's modulation equation, {\em Nonlinearity 19} (2006), 471-500.
\bibitem{dreyer2}
W. Dreyer and M. Herrmann. 
Numerical experiments on the modulation theory for the nonlinear atomic chain, {\em Physica D 237} (2008), 255-282.
\bibitem{eilbeckj}
J.C. Eilbeck and M. Johansson.  
The discrete nonlinear Schr\"odinger equation - $20$ years on. In~: L. Vazquez, R.S. MacKay, M-P. Zorzano, editors,  \textsl{Conference on Localization and Energy Transfer in Nonlinear Systems}, p.44, World Scientific (2003).
\bibitem{english}
J.M. English and R.L. Pego. On the solitary wave pulse in a chain of beads,
{\em Proc. Amer. Math. Soc. 133}, n. 6 (2005), 1763-1768.
\bibitem{filip}
A.M. Filip and S. Venakides. Existence and modulation of traveling waves in particle chains, 
{\em Comm. Pure Appl. Math. 52} (1999), 693-735.
\bibitem{flachb}
{S. Flach}. Conditions of the existence of localized excitations in nonlinear 
discrete systems, {\em Phys. Rev. E 50} ({1994}), {3134-3142}.
\bibitem{flach2}
{S. Flach}. Existence of localized excitations in nonlinear Hamiltonian lattices, {\em Phys. Rev. E 51} ({1995}), {1503-1507}.
\bibitem{fdmf}
{S. Flach, J. Dorignac, A.E. Miroshnichenko and V. Fleurov}. 
Discrete breathers close to the anticontinuum limit~: existence and wave scattering, 
{\em Int. J. Mod. Phys. B 17} ({2003}), {3996-4002}.
\bibitem{flg}
S. Flach and A. Gorbach. Discrete breathers : advances in theory and applications,
{\em Physics Reports 467} (2008), 1-116.  
\bibitem{frater} 
F. Fraternali, M. A. Porter, and C. Daraio. 
Optimal design of composite granular protectors,
{\em Mech. Adv. Mat. Struct. 17} (2010), 1-19.
\bibitem{pego} 
{G. Friesecke and R.L. Pego}. {Solitary waves on FPU  
lattices : I. Qualitative properties, renormalization and continuum limit},  
{\em Nonlinearity 12} (1999), 1601-1627.
\bibitem{pego2} 
{G. Friesecke and R.L. Pego}. 
{Solitary waves on FPU lattices~: II. Linear implies nonlinear stability}, 
{\em Nonlinearity 15} (2002), 1343-1359.
\bibitem{pego3} 
{G. Friesecke and R.L. Pego.}
{Solitary waves on FPU  lattices~: III. Howland-type Floquet theory}, 
{\em Nonlinearity 17} (2004), 207-227.
\bibitem{pego4} 
{G. Friesecke and R.L. Pego}. {Solitary waves on FPU  
lattices~: IV. Proof of stability at low energy}, 
{\em Nonlinearity 17} (2004), 229-251.
\bibitem{friesecke} 
G. Friesecke and J.A Wattis. Existence theorem for  
solitary waves on lattices, {\em Commun. Math. Phys. 161} (1994), 391-418.
\bibitem{fu}
G. Fu. An extension of Hertz's theory in contact mechanics,
{\em J. Appl. Mech. 74} (2007), 373-375.
\bibitem{gal}
G. Gallavotti, editor.
{\it The Fermi-Pasta-Ulam Problem. A Status Report},
{Lecture Notes in Physics 728} (2008), Springer.
\bibitem{gian} 
J. Giannoulis and A. Mielke. The nonlinear Schr\"odinger equation as a macroscopic limit for an oscillator chain with cubic nonlinearities, {\em Nonlinearity 17} (2004), 551-565.
\bibitem{gian3} 
J. Giannoulis and A. Mielke. Dispersive evolution of pulses in oscillator chains with general interaction potentials, 
{\em Discrete Contin. Dyn. Syst. Ser. B 6} (2006), 493-523.
\bibitem{herrmann}
M. Herrmann. Unimodal wave trains and solitons in convex FPU chains,
arXiv:0901.3736v1 (2009).
To appear in {\em Proc. Roy. Soc. Edinburgh A}.
\bibitem{hinch}
E. J. Hinch and S. Saint-Jean. The fragmentation of a line of ball by an impact, {\em Proc. R. Soc. London, Ser. A 455} (1999), 3201-3220.
\bibitem{hoffman}
A. Hoffman and C.E. Wayne. Counter-propagating two-soliton solutions
in the Fermi-Pasta-Ulam lattice, {\em Nonlinearity 21} (2008), 2911-2947.
\bibitem{hoffman3}
A. Hoffman and C.E. Wayne. {\em A simple proof of the stability of solitary waves
in the Fermi-Pasta-Ulam model near the KdV limit} (2008), arXiv:0811.2406v1.
\bibitem{hoffman2}
A. Hoffman and C.E. Wayne. Asymptotic two-soliton solutions
in the Fermi-Pasta-Ulam model, {\em J. Dyn. Diff. Equat. 21} (2009), 343-351.
\bibitem{hutzler}
S. Hutzler, G. Delaney, D. Weaire and F. MacLeod.
Rocking Newton's cradle, {\em Am. J. Phys. 72}, n.12 (2004), 1508-1516.
\bibitem{iooss} 
G. Iooss. Travelling waves in the Fermi-Pasta-Ulam lattice,
{\em Nonlinearity 13} (2000), 849-866.
\bibitem{ioossj}
G. Iooss and G. James. {Localized waves in nonlinear oscillator chains}, 
{\em Chaos 15} (2005), 015113.
\bibitem{james} 
G. James. Existence of breathers on FPU lattices, {\em C. R. Acad. Sci. Paris, Ser. I 332} (2001), 581-586.
\bibitem{guill} 
G. James. Centre manifold reduction for quasilinear discrete systems, {\em J. Nonlinear Sci. 13} (2003), 27-63.
\bibitem{ji}
J.-Y. Ji and J. Hong. Existence criterion of solitary waves in a chain of grains,
{\em Phys. Lett. A 260} (1999), 60-61.
\bibitem{johnsonbook}
K.L. Johnson. {\em Contact mechanics}, Cambridge Univ. Press, 1985. 
\bibitem{johnson}
P.A. Johnson and X. Jia. Nonlinear dynamics, granular media and dynamic earthquake triggering, {\em Nature 437} (2005), 871-874.
\bibitem{kal} 
L.A. Kalyakin. Long wave asymptotics. Integrable equations as asymptotic limits of non-linear systems., 
{\em Russian Math. Surveys 44} (1989), 3-42.
\bibitem{kivshar}
Yu. S. Kivshar and M. Peyrard. Modulational instabilities in discrete lattices,
{\em Phys. Rev. A 46} (1992), 3198-3205.
\bibitem{kivcomp}
Yu. S. Kivshar. Intrinsic localized modes as solitons with a compact support,
{\em Phys. Rev. E 48} (1993), R43-R45.
\bibitem{ll}
L.D. Landau, E.M. Lifshitz, A.M. Kosevich and L.P. Pitaevskii.
{\em Theory of Elasticity}, Third Edition (Theoretical Physics, Vol 7),
Butterworth-Heinemann (1986).
\bibitem{liu1}
C. Liu, Z. Zhao and B. Brogliato. Frictionless multiple impacts in multibody
systems. I. Theoretical framework. {\em Proc. R. Soc. A-Math. Phys. Eng. Sci., 464} (2008), 3193-3211.
\bibitem{liu2}
C. Liu, Z. Zhao and B. Brogliato. Frictionless multiple impacts in multibody
systems. II. Numerical algorithm and simulation results, {\em Proc. R. Soc. A-Math. Phys. Eng. Sci., 465} (2009), 1-23.
\bibitem{ma}
W. Ma, C. Liu, B. Chen and L. Huang. 
Theoretical model for the pulse dynamics in a long granular chain,
{\em Phys. Rev. E 74} (2006), 046602.
\bibitem{maca} 
R.S. MacKay and S. Aubry. Proof of existence of breathers for time-reversible or Hamiltonian networks of weakly coupled oscillators,
{\em Nonlinearity 7} (1994), 1623-1643.
\bibitem{mackay}
R.S. MacKay. Solitary waves in a chain of beads under Hertz contact,
{\em Phys. Lett. A 251} (1999), 191-192.
\bibitem{morgante}
A.M. Morgante, M. Johansson, G. Kopidakis and S. Aubry.
Standing wave instabilities in a chain of nonlinear coupled oscillators,
{\em Physica D 162} (2002), 53-94.
\bibitem{neste1}
V.F. Nesterenko. Propagation of nonlinear compression pulses in granular media,
{\em J. Appl. Mech. Tech. Phys. 24} (1983), 733-743.
\bibitem{neste2}
V.F. Nesterenko, {\em Dynamics of heterogeneous materials}, Springer Verlag, 2001. 
\bibitem{pankov1}
A. Pankov and K. Pfl\"uger. Travelling waves in lattice dynamical systems, {\em Math. Meth. Appl. Sci. 23} (2000), 1223-1235.
\bibitem{pankovbook}
A. Pankov. {\em Travelling waves and periodic oscillations in Fermi-Pasta-Ulam lattices}, Imperial
College Press, London, 2005.
\bibitem{peli}
D. Pelinovsky and G. Schneider. Bounds on the tight-binding approximation
for the Gross-Pitaevski equation with a periodic potential,
{\em J. Differential equations 248} (2010), 837-849.
\bibitem{porter}
M. Porter, C. Daraio, I. Szelengowicz, E.B. Herbold and P.G. Kevrekidis,
{\em Highly nonlinear solitary waves in heterogeneous periodic granular media},
Physica D 238 (2009), 666-676.
\bibitem{ros}
P. Rosenau and S. Schochet. Compact and almost compact breathers:
a bridge between an anharmonic lattice and its continuum limit,
{\em Chaos 15} (2005), 015111. 
\bibitem{sw} 
G. Schneider and C.E. Wayne. Counter-propagating waves on fluid surfaces and the continuum limit of the Fermi-Pasta-Ulam model. In~: B. Fiedler, K. Gr\"{o}ger and J. Sprekels, editors, {\em International Conference on Differential Equations Appl. 5} (1998), n.1, 69-82.
\bibitem{schneider2}
G. Schneider.
Bounds for the nonlinear Schr\"odinger approximation of the Fermi-Pasta-Ulam system,
{\em Appl. Anal. 89} (2010), 1523-1539.
\bibitem{schw}
H. Schwetlick and J. Zimmer. Solitary waves for nonconvex FPU lattices,
{\em J. Nonlinear Sci. 17} (2007), 1-12.
\bibitem{sekimoto}
K. Sekimoto. Newton's cradle versus nonbinary collisions,
{\em Phys. Rev. Lett. 104} (2010), 124302.
\bibitem{sen}
S. Sen, J. Hong, J. Bang, E. Avalos and R. Doney. Solitary waves in the granular chain,
{\em Physics Reports 462} (2008), 21-66.
\bibitem{sievpage}
A.J. Sievers and J.B. Page. Unusual anharmonic local mode systems. In~: G.K. Norton and A.A. Maradudin, editors, {\em Dynamical Properties of Solids} 7, Ch. 3, North-Holland, Amsterdam (1995).
\bibitem{smets} 
D. Smets and M. Willem. Solitary waves with prescribed  
speed on infinite lattices, {\em J. Funct. Anal. 149} (1997), 266-275.
\bibitem{spence}
D.A. Spence, Self-similar solutions to adhesive contact problems with incremental loading,
{\em Proc. Roy. Soc. Lond. Ser. A 305} (1968), 55-80.
\bibitem{theo}
G. Theocharis, M. Kavousanakis, P.G. Kevrekidis, C. Daraio, M.A. Porter and I.G. Kevrekidis.
Localized breathing modes in granular crystals with defects, {\em Phys. Rev. E 80} (2009), 066601.
\bibitem{yan1}
Z. Yan. Envelope compactons and solitary patterns,
{\em Phys. Lett. A 355} (2006), 212-215.
\bibitem{yan2}
Z. Yan. Envelope compact and solitary pattern structures
for the GNLS(m,n,p,q) equation,
{\em Phys. Lett. A 357} (2006), 196-203.
\end{thebibliography}
\end{document}